\let\HH\H
\newcommand{\Konig}{K\HH onig}
\newtheorem{fact}[lemma]{Fact}
\newcommand{\MBB}[1]{\ensuremath{\mathbb{#1}}\xspace}
\newcommand{\MCL}[1]{\ensuremath{\mathcal{#1}}\xspace}
\newcommand{\MSF}[1]{\ensuremath{\mathsf{#1}}\xspace}
\newcommand{\MFR}[1]{\ensuremath{\mathfrak{#1}}\xspace}
\newcommand{\Prop}{\MBB{P}}
\newcommand{\Nat}{\MBB{N}}
\newcommand{\Bool}{\MBB{B}}
\newcommand{\Term}{\MBB{T}}
\newcommand{\Form}{\MBB{F}}
\newcommand{\sForm}{\Form^*}
\newcommand{\btrue}{\mathsf{tt}}
\newcommand{\bfalse}{\mathsf{ff}}
\newcommand{\natS}{\mathsf{S}}
\newcommand{\some}[1]{\ulcorner#1\urcorner}
\newcommand{\Opt}{\MCL{O}}
\newcommand{\List}{\MCL{L}}
\newcommand{\app}{+\hspace{-6pt}+\,}
\newcommand{\Funcs}{\MCL{F}_\Sigma}
\newcommand{\Preds}{\MCL{P}_\Sigma}
\renewcommand{\H}{\MCL H}
\newcommand{\HO}{\ensuremath{\H_c}\xspace}
\newcommand{\LB}{\MFR{L}}
\newcommand{\UB}{\MFR{U}}
\newcommand{\down}{\!\!\Downarrow\,}
\newcommand{\up}{\uparrow\!\!}
\newcommand{\MP}{\MSF{MP}}
\newcommand{\EM}{\MSF{EM}}
\newcommand{\TT}{\MCL{T}}
\newcommand{\MM}{\MCL{M}}
\newcommand{\KK}{\MCL{K}}
\newcommand{\UU}{\MCL{U}}
\newcommand{\CC}{\MCL{C}}
\newcommand{\WW}{\MCL{W}}
\renewcommand{\phi}{\varphi}
\newcommand{\heval}[1]{[\![ #1 ]\!]}
\newcommand{\sq}{\hspace{-0.2em}\Rightarrow\hspace{-0.2em}}
\newcommand{\binop}{\boxdot}
\newcommand{\id}{\MSF{id}}
\newcommand{\attack}[3]{#3 \,|\, #2 \rhd #1}
\newcommand{\attackN}[2]{#2 \rhd #1}
\newcommand{\attacks}{\mathcal{A}}
\newcommand{\defenses}[1]{\mathcal{D}_{#1}}
\newcommand{\epmove}[2]{#1 \leadsto^E_p #2}
\newcommand{\eomove}[2]{#1 \leadsto^E_o #2}
\newcommand{\dpmove}[2]{#1 \leadsto^D_p #2}
\newcommand{\domove}[2]{#1 \leadsto^D_o #2}
\newcommand{\spmove}[2]{#1 \leadsto^S_p #2}
\newcommand{\somove}[2]{#1 \leadsto^S_o #2}
\newcommand{\eWin}[1]{\text{Win}^E\,#1}
\newcommand{\dWin}[1]{\text{Win}^D\,#1}
\newcommand{\sWin}[1]{\text{Win}^S\,#1}
\newcommand{\just}[2]{\text{justified }#1\,#2}
\newcommand{\subsq}[1]{\Rightarrow_{#1}}
\newcommand{\LJD}{\subsq{D}}
\newcommand{\fsubsq}[1]{\Rightarrow_{#1}}
\newcommand{\fsq}{\fsubsq{J}}
\newcommand{\sstate}{\List(\Form) \times \List(\Form) \times \List(\attacks \times \attacks)}
\newcommand{\extend}[2]{#1, #2}
\newcommand{\cextend}[2]{\extend{#1}{#2}}
\newcommand{\shift}[1]{\uparrow \hspace{-0.2em} #1}
\newcommand{\cshift}[1]{\uparrow \hspace{-0.2em} #1}
\newcommand{\subst}[2]{#1[#2]}
\newcommand{\dstate}{\List(\Form) \times \List(\attacks) \times \List(\Form) \times \List(\attacks)}
\newcommand{\siglift}[1]{\Uparrow \hspace{-0.2em} #1}
\newcommand{\sigdrop}[1]{\Downarrow \hspace{-0.2em} #1}
\newcommand{\sigdropn}[2]{\Downarrow^{#1} \hspace{-0.2em} #2}
\renewcommand{\L}{\mathsf{L}}
\newcommand{\MPL}{\mathsf{MP}_\L}
\newcommand{\eva}{\mathcal{E}}
\crefname{lemmaAux}{Lemma}{Lemmas}
\crefname{theoremAux}{Theorem}{Theorems}
\crefname{definitionAux}{Definition}{Definitions}
\crefname{factAux}{Fact}{Facts}
\crefname{corollaryAux}{Corollary}{Corollarys}
\renewcommand\Form{\MBB{F}}
\title{Completeness Theorems for First-Order Logic Analysed in Constructive Type Theory}
\titlerunning{Completeness Theorems for FOL Analysed in Constructive Type Theory}
\author{Yannick Forster\inst{1} \href{https://orcid.org/0000-0002-8676-9819}{\protect\includegraphics[scale=.25]{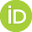}}, Dominik Kirst\inst{1} \href{https://orcid.org/0000-0003-4126-6975}{\protect\includegraphics[scale=.25]{orcid.png}}, Dominik Wehr\inst{1,2} \href{https://orcid.org/0000-0001-6456-8111}{\protect\includegraphics[scale=.25]{orcid.png}}}
\authorrunning{Yannick Forster, Dominik Kirst, Dominik Wehr}
\institute{{Saarland University, Saarland Informatics Campus \\Saarbrücken, Germany}\\
  \path|{forster,kirst}@ps.uni-saarland.de|
  \and
  {Institute for Logic, Language and Computation, University of Amsterdam \\
      Amsterdam, The Netherlands} \\
  \path|dwehr@dortselb.st|}
\newenvironment{proofqed}{\begin{proof}}{\qed\end{proof}}
\renewcommand\L{\mathsf{L}}
\newcommand{\WKL}{\ensuremath{\mathsf{WKL}}}
\begin{document}
\maketitle
\setcounter{footnote}{0}

\begin{abstract}
	We study various formulations of the completeness of first-order logic phrased in constructive type theory and mechanised in the Coq proof assistant.
	Specifically, we examine the completeness of variants of classical and intuitionistic natural deduction and sequent calculi with respect to model-theoretic, algebraic, and game-theoretic semantics.
	As completeness with respect to the standard model-theoretic semantics \`a la Tarski and Kripke is not readily constructive, we analyse connections of completeness theorems to Markov's Principle and Weak \Konig's Lemma and discuss non-standard semantics admitting assumption-free completeness.
	We contribute a reusable Coq library for first-order logic containing all results covered in this paper.
	
\end{abstract}

\section{Introduction}

\enlargethispage{3\baselineskip}
 Completeness theorems are central to the field of mathematical logic.
 Once completeness of a sound deduction system with respect to a semantic account of the syntax is established, the infinitary notion of semantic validity is reduced to the  algorithmically tractable notion of syntactic deduction.
 In the case of first-order logic, being the formalism underlying traditional mathematics based on a set-theoretic foundation, completeness enables the use of semantic techniques to study the deductive consequence of axiomatic systems.
 
 The seminal completeness theorem for first-order logic proven by Gödel~\cite{godel_vollstandigkeit_1930} and later refined by Henkin~\cite{henkin_completeness_1949,hasenjaeger_bemerkung_1953} guarantees the existence of a syntactic deduction of every formula valid in the canonical Tarski semantics, which is based on interpreting the function and relation symbols in models providing the corresponding structure.
However, this result may not be understood as an effective procedure in the sense that a formal deduction for a formula satisfied by all models can be computed by an algorithm, since even for finite signatures the proof relies on non-constructive assumptions.
It was already known to Gödel that for a completeness proof the classically vacuous but constructively contested\footnote{\scriptsize Accepted in Russian constructivism while in conflict with Brouwer's intuitionism}
assumption of \emph{Markov's Principle}, asserting that every non-diverging computation terminates, is necessary~\cite{KreiselMP}.
Moreover, Gödel implicitly used a choice principle known as \emph{Weak \Konig's Lemma}~\cite{konig1927schlussweise} and it is a well-known result of reverse mathematics that, over classical logic, the completeness theorem is in fact equivalent to Weak \Konig's Lemma~\cite{simpson2009subsystems}.

The aim of this paper is to coherently analyse the assumptions necessary to prove completeness theorems concerning various semantics and deduction systems.
For the analysis to be as precise as possible, we choose constructive type theory with an impredicative (and thus separate) universe of propositions as base system, a formalisation of intuitionistic logic with virtually no choice principles provable without assumptions.
Concretely, we work in the \emph{polymorphic calculus of cumulative inductive constructions}~(\textsf{pCuIC})~\cite{sozeau:hal-02167423} underlying the Coq proof assistant~\cite{Coq} and in fact all results in this paper are mechanised in Coq, yielding Coq programs for constructively given completeness proofs.
For ease of language, we reserve the term ``constructive'' for statements provable in this specific system, hence in particular Markov's Principle is classified as non-constructive~\cite{coquand_independence_2017,PedrotMP}.

Coming with an internal notion of computation, constructive type theory allows us to state Markov's Principle both internally as
$$\MP:=\forall f:\Nat\to\Bool.\,\neg\neg(\exists n.\,f\,n=\btrue)\to \exists n.\,f\,n=\btrue$$
and similarly for any concrete model of computation ($\MPL$), whereby the former implies the latter. %
The second principle involved in Gödel's proof, Weak \Konig's Lemma ($\WKL$), is a function existence principle asserting that every infinite binary tree has an infinite path.
$\WKL$ is not constructive, because it is equivalent to a combination of a weak classical logical axiom and a weak choice axiom~\cite{berger2012weak}, both deemed independent in \textsf{pCuIC}.
The two main questions in focus are which of these assumptions are necessary for particular formulations of completeness and how the statements can be modified such that they hold constructively.

Applying this agenda to Tarski semantics, a first observation is that the model existence theorem, central to Henkin's completeness proof, holds constructively~\cite{HerbelinHenkin} for the $\to,\forall,\bot$-fragment of first-order logic if both the predicate interpretation and satisfaction are defined as propositions rather than Boolean functions.
As a second observation, model existence directly implies that valid formulas cannot be unprovable.
Thus, for enumerable theories a single application of $\MP$, rendering enumerable predicates such as deduction stable under double negation,
yields completeness for this formulation of Tarski semantics.
Similarly, $\MPL$ yields the stability of deduction from finite contexts and hence the corresponding form of completeness.
Because $\MP$ is admissible in pCuIC~\cite{PedrotMP}, so are $\MPL$ and the two completeness statements.
For arbitrary theories, completeness becomes equivalent to the law of \emph{Excluded Middle} ($\EM$).

Regarding the second question of our agenda, we show that completeness for the minimal $\to,\forall$-fragment does not depend on additional assumptions by elaborating on a classical proof given in~\cite{SchummCompleteness}
Connectedly, we illustrate how the interpretation of $\bot$ can be relaxed to \emph{exploding models}~\cite{VeldmanExplosion,KrivineCompleteness} admitting a constructive completeness proof for the $\to,\forall,\bot$-fragment.

If, however, Tarski semantics is formulated using a Boolean interpretation for predicates or even a Boolean satisfaction relation, completeness for arbitrary theories becomes equivalent to \emph{both} $\EM$ and $\WKL$.
Since $\EM$ and $\WKL$ are mutually independent in pCuIC, our perspective clarifies that $\WKL$ becomes necessary to treat Boolean models only, and not e.g.\ to treat classical disjunction, as it might be the case for intuitionistic disjunction~\cite{VeldmanExplosion}.

Turning to intuitionistic logic, we discuss analogous relationships for Kripke semantics and a cut-free intuitionistic sequent calculus~\cite{HerbelinCut}.
Again, completeness for the $\to,\forall,\bot$-fragment is equivalent to Markov's Principle while being constructive if restricted to the minimal $\to,\forall$-fragment or employing a relaxed treatment of~$\bot$.
The intuitionistically undefinable connectives $\lor$ and $\exists$ add further complexity~\cite{DankoThesis} and remain untreated in this paper.
As a side note, we explain how the constructivised completeness theorem for intuitionistic logic can be used to implement a semantic cut-elimination procedure.

After considering such model-theoretic semantics, mainly based on embedding the object-logic into the meta-logic, we exemplify two rather different approaches to assigning meaning to formulas, namely algebraic semantics and game semantics.
Differing fundamentally from model-theoretic semantics, both share a constructive rendering of completeness for the full syntax of first-order logic, agnostic to the intuitionistic or classical flavour of the deduction system.

In algebraic semantics, the embedding of formulas into the meta-logic is generalised to an evaluation in algebras providing the structure of the logical connectives.
In this setting, completeness follows from the observation that provability induces such an algebra on formulas.
We discuss intuitionistic and classical logic evaluated in complete Heyting and complete Boolean algebras (cf.~\cite{scott_algebraic_2008}).

Dialogue game semantics as introduced by Lorenzen~\cite{LorenzenDialogues,LorenzenDialogues2}%
, on the other hand, completely disposes of interpreting logical connectives as operations on truth values and instead understand logic as a dialectic game of assertion and argument.
An assertion is considered valid if every sceptic can be convinced through substantive reasoning, i.e. if there is a strategy such that every argument about the assertion can be won.
Hence, game semantics are inherently closer to deduction systems than the previous semantic accounts and in fact a general isomorphism of winning strategies and formal deductions has been established~\cite{SorensenDialogues}.
We adapt this isomorphism such that it can be instantiated to a first-order intuitionistic sequent calculus.

\textbf{Contributions.}
The present paper is an extension of a previous conference publication~\cite{forster2020completeness} in various directions:
Firstly, we extend our previous completeness proof for Tarski semantics restricted to closed formulas in the $\to,\forall,\bot$-fragment to the full syntax with all connectives and allowing free variables in \Cref{sec:completeness_extended}.
Secondly, we deduce compactness from model existence and analyse the connection of Boolean models to $\WKL$ in~\Cref{sec:compactness_WKL}.
Thirdly,  in \Cref{sec:algebras} we give a more detailed treatment of algebraic semantics and discuss a general completeness proof covering all at least intuitionistic natural deduction systems.
Fourthly, in the context of dialogue game semantics (\Cref{sec:dialogues}), we provide a simplified and formal proof of the equivalence of D and E-dialogues, a result hard to reconstruct from the original literature~\cite{FelscherDialogues}.
Finally, we extend our reusable Coq library\footnote{\scriptsize On \url{www.ps.uni-saarland.de/extras/fol-completeness-ext} and hyperlinked with this document} for first-order logic to include all results covered in this paper.

\textbf{Outline.}
In \Cref{sec:prelims}, we begin with some preliminary definitions concerning the syntax of first-order logic, deduction systems, and synthetic computability.
In \Cref{sec:models}, we then analyse completeness for model-theoretic semantics \`a la Tarski (\Cref{sec:tarski}) and Kripke (\Cref{sec:kripke}) and the connections to Weak \Konig's Lemma (\Cref{sec:compactness_WKL}) and Markov's Principle (\Cref{sec:L}).
Subsequently, we give constructive completeness proofs for algebraic semantics (\Cref{sec:algebras}) and dialogue game semantics (\Cref{sec:dialogues}).
We end with a discussion of related and future work in \Cref{sec:discussion} and provide appendices outlining the Coq mechanisation (\Cref{sec:Coq}) and the deduction systems used (\Cref{sec:systems}).

\section{Syntax, Deduction, Computability}
\label{sec:prelims}

We work in a constructive type theory with a predicative hierarchy of type universes above a single impredicative universe $\mathbb{P}$ of propositions.
Assumed type formers are function spaces $X\to Y$, products $X\times Y$, sums $X+Y$, dependent products $\forall x:X.\,F\,x$, and dependent sums $\Sigma\, x:X.\,F\,x$.
The propositional versions of these connectives are denoted by the usual logical symbols ($\to$, $\land$, $\lor$, $\forall$, and $\exists$) in addition to $\top:\Prop$ and $\bot:\Prop$ denoting truth and falsity.\footnote{We use the $\forall$ symbol to denote both dependent product and universal quantification as it is done in Coq itself as well as most of the literature concerned with Coq.}

Basic inductive types are the the unit type $\mathbbm{1} ::= \star$, the Booleans $\Bool::=\btrue\mid\bfalse$, and the natural numbers $\Nat ::=0\mid\natS\,n$ for $n:\Nat$.
Given a type $X$, we further define options $\Opt(X)::=\emptyset\mid \some x$ and lists $\List(X)::=[]\mid x::A$ for $x:X$ and $A:\List(X)$.
On lists we employ the standard notation for membership $x\in A$, inclusion $A\subseteq B$, concatenation $A\app B$, and map $f\,@\,A$.
These notations are shared with vectors $\vec x:X^n$ of fixed length $n:\Nat$.
Possibly infinite collections are expressed by sets $p: X\to \Prop$ with set-theoretic notations like $x\in p$ and $p\subseteq q$.

\subsection{Syntax of First-Order Logic}
\setCoqFilename{FullSyntax}

We represent the terms and formulas of first-order logic as inductive types over a fixed signature $\Sigma=(\Funcs,\Preds)$ specifying function symbols $f:\Funcs$ and predicate symbols $P:\Preds$ together with their arities $|f|:\Nat$ and $|P|:\Nat$.
Variable binding is implemented using de Bruijn indices~\cite{de_bruijn_lambda_1972} well-suited for mechanisation~\cite{AutoSubst2}.

\begin{definition}[][form]
	We define the types $\Term$ of terms and $\Form$ of formulas inductively by
	\begin{small}
		$$t:\Term::=x\mid f\,\vec{t}\hspace{0.3cm}
		\phi,\psi:\Form::=\dot{\bot}\mid P\,\vec{t}\mid\phi\dot{\to}\psi\mid\phi \dot{\land}\psi\mid\phi\dot{\lor}\psi\mid\dot{\forall}\phi\mid\dot{\exists}\phi\hspace{0.3cm}
		x:\Nat, f:\Funcs, P:\Preds$$
	\end{small}
	where the vectors $\vec t$ are of the expected lengths $|f|$ and $|P|$, respectively.
	We set $\dot{\neg}\phi:=\phi\dot{\to}\dot{\bot}$ and isolate the type $\sForm$ of formulas in the $\to,\forall,\bot$-fragment.
\end{definition}

A bound variable is encoded as the number of quantifiers shadowing its relevant binder, e.g. $P\,x\,y\to \forall x.\,\exists y.\, P\,x\,y$ may be represented by $P\,7\,4\dot{\to}\dot{\forall}\,\dot\exists P\,1\,0$.
The variables $7$ and $4$ in this example are called \emph{free} and variables that do not occur freely are called \emph{fresh}.
A formula with no free variables is called \emph{closed}.

\begin{definition}[][subst_form]
	Instantiating with a substitution $\sigma:\Nat \to \Term$ is defined by
	\begin{align*}
		x[\sigma]&~:=~\sigma\,x &
		\dot{\bot}[\sigma]&~:=~\dot{\bot}&
		(\phi\binop\psi)[\sigma] &~:=~\phi[\sigma]\binop\psi[\sigma]\\
		(f\,\vec t\,)[\sigma]&~:=~f\,(\vec t\,[\sigma])&
		(P\,\vec t\,)[\sigma]&~:=~P\,(\vec t\,[\sigma])&
		(\binop\,\phi)[\sigma]&~:=~\binop\,\phi[0;\lambda x.\up(\sigma\,x)]
	\end{align*}
	where $\vec t\,[\sigma]$ denotes $(\lambda t.\,t[\sigma])\,@\,\vec t$, where $t;\sigma$ denotes the substitution mapping $0$ to $t$ and $\natS\,x$ to $\sigma\,x$, where $\up\,t $ denotes $t[\lambda x.\,\natS\,x]$, and where $\binop$ is used as placeholder for the logical connectives and quantifiers, respectively.
\end{definition}

Note that instantiation below a quantifier has to fix the 0 index and shift the substitution by 1 both on input (by using $\_;\_$) and on output (by using $\up\_$).
As two further shorthands, we write $\up\,\phi $ for $\phi[\lambda x.\,\natS\,x]$ and $\phi[t]$ for $\phi[t;\lambda x.\,x]$.
All terminology and notation concerning formulas and substitution carries over to \emph{contexts} $\Gamma:\List(\Form)$ and \emph{theories} $\MCL{T}:\Form\to\Prop$.
For ease of notation we freely identify contexts $\Gamma$ with their induced theory $\lambda\phi.\,\phi\in \Gamma$.

\subsection{Deduction Systems}
\setCoqFilename{FullND}

We represent deduction systems as inductive predicates of type $\List(\Form)\to\Form\to \Prop$ or similar.
The archetypal system is natural deduction (ND), exemplified by an intuitionistic version $\Gamma\vdash \phi$ as defined in \Cref{def:ND} of \Cref{sec:systems}.
Since most rules are standard, we only discuss the quantifier rules in more detail as they rely on the de Bruijn representation of formulas:

$$\infer[\small\textnormal{AI}]{\Gamma\vdash\dot{\forall} \phi}{\up \Gamma\vdash \phi}\hspace{3em}
\infer[\small\textnormal{AE}]{\vphantom{\dot{\forall}}\Gamma\vdash \phi[t]}{\Gamma\vdash \dot \forall \phi}\hspace{3em}
\infer[\small\textnormal{EI}]{\Gamma\vdash \dot{\exists}\phi}{\Gamma\vdash \phi[t]}\hspace{3em}
\infer[\small\textnormal{EE}]{\vphantom{\dot{\forall}}\Gamma\vdash \psi}{\Gamma\vdash\dot{\exists}\phi&\up \Gamma,\phi\vdash \up \psi}
$$

Note that $\up \Gamma,\phi$ is notation for $\phi::\up \Gamma$.
In a shifted context $\up \Gamma$ there is no reference to the variable $0$ which hence plays the role of an arbitrary but fixed individual.
So if $\up \Gamma\vdash \phi$ then we can conclude $\Gamma\vdash\dot \forall \phi$ as expressed by the rule (AI) for $\forall$-introduction.
Similarly, the shifts in the rule (EE) for ${\exists}$-elimination simulate that $\Gamma$ together with $\phi$ instantiated to the witness provided by $\Gamma\vdash\dot{\exists}\phi$ proves $\psi$ and hence admits the conclusion that already $\Gamma\vdash \psi$.
For many proofs it will be helpful to employ fresh variables explicitly as justified by \Cref{lem:named_equiv}, which we state after observing \emph{weakening} and \emph{substitutivity}:

\begin{lemma}[][Weak]
	\label{lem:weak}
	If $\Gamma\vdash \phi$, then $\Delta\vdash \phi$ for all $\Delta\supseteq \Gamma$ and $\Gamma[\sigma]\vdash\phi[\sigma]$ for all $\sigma$.
\end{lemma}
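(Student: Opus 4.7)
The plan is to prove both statements by a single induction on the derivation $\Gamma \vdash \phi$. Since weakening and substitutivity admit essentially parallel proofs, I would either prove them simultaneously or prove weakening first and reuse it when needed. For the propositional cases (axiom, implication introduction/elimination, conjunction and disjunction rules, and $\dot{\bot}$-elimination) the induction steps are straightforward: one simply applies the induction hypothesis to each premise, possibly after lifting the inclusion $\Delta \supseteq \Gamma$ to $\Delta, \phi \supseteq \Gamma, \phi$ in cases that extend the context with a hypothesis.

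The interesting part is the quantifier rules, where I expect to need small auxiliary lemmas about the interaction of shift and substitution. For weakening, the critical observation in the rules (AI) and (EE) is that $\Delta \supseteq \Gamma$ implies $\up\Delta \supseteq \up\Gamma$, so the induction hypothesis on the premise $\up\Gamma \vdash \phi$ (respectively $\up\Gamma, \phi \vdash \up\psi$) directly yields the required $\up\Delta \vdash \phi$ (respectively $\up\Delta, \phi \vdash \up\psi$), and the conclusion follows by reapplying the same rule.

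For substitutivity, the interaction of $\sigma$ with shifting is where the bookkeeping lives. In case (AI), from $\up\Gamma \vdash \phi$ the induction hypothesis with the lifted substitution $\sigma' := 0;\lambda x.\up(\sigma\,x)$ produces $(\up\Gamma)[\sigma'] \vdash \phi[\sigma']$; by a substitution-composition identity, $(\up\Gamma)[\sigma'] = \up(\Gamma[\sigma])$, so rule (AI) yields $\Gamma[\sigma] \vdash \dot{\forall}\,\phi[\sigma'] = (\dot{\forall}\phi)[\sigma]$. Case (EE) is analogous. In cases (AE) and (EI) the substitution sits on both the bound body and the instantiation term, and one needs the identity $\phi[t][\sigma] = \phi[\sigma']\,[t[\sigma]]$, which is the standard composition equation for de Bruijn substitutions.

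The main obstacle will therefore not be conceptual but purely syntactic: establishing (or citing from the underlying substitution infrastructure, e.g.\ the Autosubst-style lemmas about $\up$ and $;$) the compatibility equations between shift, extension, and composition of substitutions that are needed to relate $\phi[t][\sigma]$ to a single substitution applied to $\phi$, and to identify $(\up\Gamma)[\sigma']$ with $\up(\Gamma[\sigma])$. Once these are in place, every rule matches its counterpart after substitution on the nose, and the induction closes without further effort.
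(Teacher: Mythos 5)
Your proposal is correct and matches the paper's approach: the paper gives no written proof for this lemma, relying on exactly the routine induction on the derivation you describe, with the shift/extension/composition identities for de Bruijn substitutions discharged by the Autosubst infrastructure (the mechanisation notes that weakening takes only about five lines for this reason). Your identification of the key equations $(\up\Gamma)[0;\lambda x.\up(\sigma\,x)] = \up(\Gamma[\sigma])$ and $\phi[t][\sigma] = \phi[0;\lambda x.\up(\sigma\,x)][t[\sigma]]$ is precisely the bookkeeping those lemmas supply.
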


\begin{lemma}[][nameless_equiv_all']
	\label{lem:named_equiv}
	Given $\Gamma$, $\phi$, and $\psi$ one can compute a fresh variable $x$ such that
	\vspace{-0.3cm}
	\begin{multicols}{2}
		\begin{enumerate}
			\item
			$\up \Gamma\vdash \phi~\textit{ iff }~  \Gamma\vdash\phi[x]$ and
			\item
			$\up \Gamma,\phi\vdash \up \psi~\textit{ iff }~\Gamma,\phi[x]\vdash \psi$.
		\end{enumerate}
	\end{multicols}
\end{lemma}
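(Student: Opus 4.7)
The plan is to compute $x$ as a natural number strictly greater than every index occurring in $\Gamma$, $\phi$, and $\psi$; this is effective since the three objects are finite syntactic entities. This stronger-than-necessary notion of freshness ensures not only that $x$ itself is not free in $\Gamma$, $\phi$, or $\psi$, but also that $\natS\,x$ is not; as will become apparent, this extra margin is what makes the backward directions of both equivalences go through. With $x$ so chosen, each equivalence reduces to one application of the substitutivity part of \Cref{lem:weak} combined with a short substitution-composition calculation.

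For the forward directions of (1) and (2), I would invoke \Cref{lem:weak} with the substitution $\sigma := (x;\,\lambda y.\,y)$ that maps $0$ to $x$ and each $\natS\,n$ back to $n$. Since $\sigma$ composed with $\uparrow$ is the identity, one immediately computes $(\up\,\Gamma)[\sigma] = \Gamma$ and $(\up\,\psi)[\sigma] = \psi$, while $\phi[\sigma]$ unfolds to the defined notation $\phi[x]$. Consequently $\up\,\Gamma\vdash\phi$ yields $\Gamma\vdash\phi[x]$, and $\up\,\Gamma,\phi\vdash\up\,\psi$ yields $\Gamma,\phi[x]\vdash\psi$; notice that freshness of $x$ is not required for this direction.

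For the backward directions I would apply \Cref{lem:weak} with the reverse substitution $\tau$ defined by $\tau\,x := 0$ and $\tau\,y := \natS\,y$ for every $y \neq x$. Since $x$ does not appear free in $\Gamma$ or $\psi$, the substitution $\tau$ acts on them as the ordinary shift, giving $\Gamma[\tau] = \up\,\Gamma$ and $\psi[\tau] = \up\,\psi$. The remaining obligation $\phi[x][\tau] = \phi$ unfolds, by substitution composition, to $\phi$ applied to the substitution $0 \mapsto 0,\ \natS\,n \mapsto \tau\,n$; this agrees with the identity on every index free in $\phi$ exactly when no such index has the form $\natS\,x$, which is where the size bound on $x$ is used.

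The main obstacle, and the reason for the stronger freshness requirement, lies precisely in this last step: merely asking that $x$ be unused in $\phi$ is insufficient, since the decrement built into $(x;\,\lambda y.\,y)$ can reintroduce $x$ when composed with $\tau$ through a free occurrence of $\natS\,x$. Once this subtlety is pinned down by the initial choice of $x$, both equivalences fall out of routine unfolding of substitution composition together with the substitutivity part of \Cref{lem:weak}.
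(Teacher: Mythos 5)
Your proof is correct: choosing $x$ strictly above every index occurring in $\Gamma$, $\phi$, and $\psi$ and then applying the substitutivity part of \Cref{lem:weak} with the mutually inverse substitutions $(x;\,\lambda y.\,y)$ and $\tau$ is exactly the intended argument, and you correctly isolate the one delicate point, namely that $\natS\,x$ must also be absent from $\phi$ for $\phi[x][\tau]=\phi$ to hold. The paper states the lemma without a written proof, deferring to the mechanisation, which follows this same route.
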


A classical variant $\Gamma\vdash_c \phi$ of the ND system can be obtained without referring to $\dot{\bot}$ by adding the axiom $\Gamma\vdash_c ((\phi\dot\to\psi)\dot\to\phi)\dot\to\phi$ expressing Peirce's law (\Cref{def:CND}).
Then the structural properties stated in the two lemmas above are maintained while the typical classical proof rules become available.

Deduction systems such as intuitionistic ND introduced above naturally extend to theories by writing $\TT\vdash \phi$ if there is a finite
context $\Gamma\subseteq \TT$ with $\Gamma\vdash \phi$.
Then $\TT\vdash \phi$ satisfies proof rules analogous to $\Gamma\vdash \phi$.

\subsection{Synthetic Computability}

Since every function definable in constructive type theory is computable, the standard notions of computability theory can be synthesised by type-level operations~\cite{BAUER20065,ForsterCPP}, eliminating references to a concrete model of computation such as Turing machines, $\mu$-recursive functions, or the untyped lambda calculus.

\begin{definition}
	Let $X$ be a type and $p:X\to \Prop$ be a predicate.
	\begin{itemize}
		\item
		$p$ is \emph{decidable} if there is $f:X \to\Bool$ with
		$\forall x.\,p\,x\leftrightarrow f\,x=\btrue$.
		\item
		$p$ is \emph{enumerable} if there is $f:\Nat\to\Opt(X)$ with $\forall x.\,p\,x\leftrightarrow \exists n.\, f\,n=\some x$.
	\end{itemize}
	These two notions generalise to predicates of higher arity as expected.
	\begin{itemize}
		\item
		$X$ is \emph{enumerable} if there is $f:\Nat\to\Opt(X)$ with $\forall x. \exists n.\, f\,n=\some x$.
		\item
		$X$ is \emph{discrete} if equality $\lambda xy.x=y$ on $X$ is decidable.
		\item
		$X$ is a \emph{data type} if it is both enumerable and discrete.
	\end{itemize}
	
\end{definition}

We assume that the components $\Funcs$ and $\Preds$ of our fixed signature $\Sigma$ are data types.
Then applying the terminology to the syntax and deductions systems introduced in the previous sections leads to the following observations.

\setCoqFilename{FOL}
\begin{fact}[][enumT_form]
	\label{fact:data}
	\Term and \Form are data types and $\Gamma\vdash\phi$ and $\Gamma\vdash_c\phi$ are enumerable.
\end{fact}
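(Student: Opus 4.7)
The plan is to establish the three parts in turn, each by invoking the standard synthetic-computability recipes for inductive types and inductive predicates.

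For $\Term$, I would argue by a straightforward structural induction. The constructors are $x : \Nat$ and $f\,\vec{t}$ with $f : \Funcs$ and $\vec{t} : \Term^{|f|}$. Since $\Nat$ is a data type and $\Funcs$ is a data type by our standing assumption on $\Sigma$, decidability of equality on $\Term$ reduces to the inductive hypothesis combined with the routine fact that equality on vectors of a discrete type is decidable (comparing first heads then tails, possibly after checking the indexing function symbol). For enumerability, I would build a cumulative enumerator $E_n : \List(\Term)$ that at stage $n$ lists all variables $x \le n$ together with all terms $f\,\vec{t}$ where $f$ appears in the first $n$ elements of the $\Funcs$ enumeration and the components of $\vec t$ appear in $E_{n-1}$; every term built from finitely many symbols eventually appears.

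For $\Form$, the argument is completely analogous. The constructors mention the data types $\Term$ and $\Preds$ (for atoms $P\,\vec t$) as well as the finite set of logical connectives and quantifier symbols; discreteness and enumerability therefore propagate by structural induction, using the same cumulative-enumeration trick for the recursive constructors $\phi \dot\to \psi$, $\phi \dot\land \psi$, $\phi \dot\lor \psi$, $\dot\forall \phi$, and $\dot\exists \phi$.

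The enumerability of $\Gamma \vdash \phi$ and $\Gamma \vdash_c \phi$ follows the generic scheme for enumerating inductively defined predicates all of whose rules quantify over data types. Concretely, I would define a step-indexed enumerator that at stage $n$ produces a list of all judgements derivable by a proof tree of depth at most $n$: starting from the axiom $\phi \in \Gamma \Rightarrow \Gamma \vdash \phi$ (decidable since $\Form$ is discrete and $\Gamma$ is a list), every rule schema is instantiated by enumerating its side-conditions (contexts, formulas, terms $t$ for the quantifier rules) from the enumerators of $\List(\Form)$, $\Form$, and $\Term$ obtained above, and combining them with the previously enumerated premises. Since each rule has only finitely many premises and all parameters range over data types, the stage-$n$ step is decidable and yields a finite list; soundness and completeness of the enumerator with respect to the inductive definition follow by induction on derivations in one direction and on $n$ in the other. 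The Peirce axiom needed for $\vdash_c$ introduces no new difficulty, as it is just one additional nullary rule schema.

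The main potential obstacle is bookkeeping around the quantifier rules, where the shift operation $\up\Gamma$ and instantiation $\phi[t]$ appear: these must be implemented as computable functions on the data types $\Form$ and $\List(\Form)$, but this is already provided by \Cref{def:form} and the substitution machinery, so it poses no real difficulty beyond care in the mechanisation.
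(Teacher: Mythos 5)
Your proposal is correct and takes essentially the same approach as the paper, whose entire proof is a one-line appeal to ``the techniques discussed in \cite{ForsterCPP}, e.g.\ Fact 3.19'' --- that is, exactly the standard cumulative enumerators for inductive types and step-indexed enumerators for inductively defined predicates over data types that you spell out. You have simply made explicit the routine details the paper delegates to the citation.
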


\begin{proofqed}
	By the techniques discussed in \cite{ForsterCPP}, e.g. Fact 3.19.
\end{proofqed}

The standard model-theoretic completeness proofs analysed in \Cref{sec:models} require the assumption of Markov's Principle.
A proposition $P:\Prop$ is called \emph{stable} if $\neg\neg P\to P$ and, analogously, a predicate $p:X\to\Prop$ is called stable if $p\,x$ is stable for all $x$.
A synthetic version of Markov's Principle states that satisfiability of Boolean sequences is stable~(cf.~\cite{MannaaMP}):
$$\MP:=\forall f:\Nat\to\Bool.\,\neg\neg(\exists n.\,f\,n=\btrue)\to \exists n.\,f\,n=\btrue$$
Note that \MP is trivially implied by Excluded Middle $\EM:=\forall P:\Prop.\,P\lor\neg P$.
Moreover, \MP regulates the behaviour of computationally tractable predicates:

\setCoqFilename{Markov}
\begin{fact}[][MP_enum_stable_iff]
	\label{fact:MP}
	\MP holds iff all enumerable predicates on discrete types are stable.
\end{fact}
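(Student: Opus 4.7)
The plan is a routine two-direction argument, with the main content being an appropriate encoding/decoding between the type $\Nat\to\Bool$ featured in $\MP$ and arbitrary enumerators into discrete types.

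For the forward direction, assume $\MP$ and let $p:X\to\Prop$ be an enumerable predicate on a discrete type $X$, witnessed by $f:\Nat\to\Opt(X)$. To show $p\,x$ is stable for an arbitrary $x:X$, I would reduce the question to one about a Boolean sequence: define $g:\Nat\to\Bool$ by $g\,n:=\btrue$ iff $f\,n = \some x$, which is decidable because discreteness of $X$ lifts to decidable equality on $\Opt(X)$. The enumerator for $p$ then yields $p\,x\leftrightarrow\exists n.\,g\,n=\btrue$. Hence from $\neg\neg p\,x$ we obtain $\neg\neg\exists n.\,g\,n=\btrue$, apply $\MP$ to the sequence $g$, and transport the witness back to $p\,x$.

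For the backward direction, given $f:\Nat\to\Bool$ with $\neg\neg\exists n.\,f\,n=\btrue$, I need to manufacture an enumerable predicate on a discrete type whose stability delivers the witness. The simplest choice is the singleton type $\mathbbm 1$ (trivially discrete) and the predicate $p:\mathbbm 1\to\Prop$ with $p\,\star:=\exists n.\,f\,n=\btrue$. An enumerator is $e\,n:=\some\star$ if $f\,n=\btrue$ and $e\,n:=\emptyset$ otherwise, for which $\exists n.\,e\,n=\some\star$ clearly coincides with $p\,\star$. Instantiating the stability assumption at $p$ and $\star$ then turns $\neg\neg\exists n.\,f\,n=\btrue$ into $\exists n.\,f\,n=\btrue$, which is $\MP$.

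Neither direction contains a real obstacle; the only thing to be careful about is that the enumerator one constructs lives in $\Opt(X)$ rather than $X$, so one must explicitly use discreteness of $X$ to decide the equation $f\,n=\some x$ in the forward direction. Once this decidability is available, both directions reduce to shuffling around the standard equivalence between enumerability of a predicate and the existence of a decidable predicate over $\Nat$ witnessing it.
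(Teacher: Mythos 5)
Your proof is correct and follows essentially the same route as the paper: the backward direction via the singleton type $\mathbbm 1$ and the enumerator returning $\some\star$ exactly when $f\,n=\btrue$ is identical to the paper's argument, and your forward direction spells out the standard reduction (using discreteness of $X$ to decide $f\,n=\some x$) that the paper simply cites as Fact 2.18 of an earlier work.
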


\begin{proofqed}
  The direction from left to right is Fact 2.18 in  \cite{ForsterCPP}.
  For the reverse direction assume that enumerable predicates on discrete types are stable.
  Let $f : \Nat \to \Bool$ and let $p :\mathbbm{1}\to \Prop$ be defined by $p \,x := \exists n.\, f\, n = \btrue$.
  The predicate $p$ is enumerable by $f \,n := \textbf{if } f\, n \textbf{ then } \some \star \textbf{ else } \emptyset$.
  Stability of $p$ is now equivalent to $\neg\neg(\exists n.\, f\, n = \btrue) \to (\exists n.\, f\, n = \btrue)$.
\end{proofqed}
\setCoqFilename{FOL}
As a consequence of \Cref{fact:data} and \Cref{fact:MP}, \MP implies that the deduction systems $\Gamma\vdash\phi$ and $\Gamma\vdash_c\phi$ are stable.
In fact, only these stabilities are required for the standard model-theoretic completeness proofs discussed in the next section and they are equivalent to $\MPL$, a version of Markov's Principle stated for the call-by-value $\lambda$-calculus $\L$~\cite{Plotkin75,ForsterL} and its halting problem $\eva$:
$$\MPL := \forall s.~\neg\neg\eva{s} \to \eva{s}$$
We will prove the following in \Cref{sec:L}:

\begin{lemma}\label{lem:MPL-equivs}
  $\MPL$, stability of $\Gamma\vdash\phi $and stability of $\Gamma\vdash_c\phi $ are all equivalent.
\end{lemma}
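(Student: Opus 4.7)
My plan is to close the three equivalences into a cycle of the form $\MPL \Rightarrow \text{stab}(\vdash) \Rightarrow \MPL$ (and likewise with $\vdash_c$ in the middle), handling the intuitionistic and classical cases by the same scheme.

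For $\MPL \Rightarrow$ stability of $\Gamma\vdash\phi$, I would exploit \Cref{fact:data}: the deduction relation is enumerable, so there is some $e:\Nat\to\Opt(\List(\Form)\times\Form)$ with $\Gamma\vdash\phi \leftrightarrow \exists n.\,e\,n = \some{(\Gamma,\phi)}$. Because $\Form$ and $\List(\Form)$ are data types, equality on $\List(\Form)\times\Form$ is decidable, so for each fixed $(\Gamma,\phi)$ the test ``$e\,n = \some{(\Gamma,\phi)}$'' gives a Boolean function $g_{\Gamma,\phi}:\Nat\to\Bool$ with $\Gamma\vdash\phi \leftrightarrow \exists n.\,g_{\Gamma,\phi}\,n = \btrue$. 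Using the standard $\L$-internalisation of pCuIC-definable functions from the supporting library, I would translate $g_{\Gamma,\phi}$ into an $\L$-term $s_{\Gamma,\phi}$ whose evaluation halts precisely when such an $n$ exists. Then $\neg\neg(\Gamma\vdash\phi) \leftrightarrow \neg\neg\,\eva{s_{\Gamma,\phi}}$, and a single use of $\MPL$ on $s_{\Gamma,\phi}$ closes the implication. The classical case is identical, just using the enumerator for $\vdash_c$ given by the same fact.

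For the reverse direction, stability of $\Gamma\vdash\phi$ (resp. $\Gamma\vdash_c\phi$) implies $\MPL$, I would appeal to the standard many-one reduction of the $\L$-halting problem into first-order provability that underpins the undecidability of FOL (already mechanised in the accompanying Coq development of Church/Trakhtenbrot-style results). Concretely, for each $s$ one computes $\Gamma_s$ and $\phi_s$ with $\Gamma_s\vdash\phi_s \leftrightarrow \eva{s}$. Stability of $\vdash$ then transfers pointwise to stability of $\eva{s}$, which is $\MPL$. For the classical side, since the formulas produced by the reduction are essentially Horn/Harrop-shaped, the same encoding also satisfies $\Gamma_s\vdash_c\phi_s \leftrightarrow \eva{s}$; alternatively, one can first derive stability of intuitionistic provability from stability of classical provability via a Gödel--Gentzen double-negation translation and then reuse the previous step.

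The main obstacle is the halting-to-provability reduction, which is substantial but orthogonal to the contributions of the paper and is reused from the library. The forward direction is technically straightforward but depends on the internalisation of type-theoretic functions into $\L$, which is what allows the argument to appeal to $\MPL$ directly rather than to the stronger-looking general $\MP$; in effect, this is the only reason the statement can be phrased in terms of $\MPL$ and still admit these syntactic stability characterisations.
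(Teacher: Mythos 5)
Your overall strategy is the paper's: both directions are established by many-one reductions between the $\L$-halting problem $\eva$ and the two provability predicates, and your reverse direction (stability of $\vdash$ or $\vdash_c$ implies $\MPL$ via the reduction chain from $\eva$ through Turing machines and PCP into provability) is exactly what the paper does. One small slip there: your ``alternatively'' clause runs the G\"odel--Gentzen translation in the wrong direction --- it gives $\vdash_c\,\preceq\,\vdash$ and hence transfers stability \emph{from} $\vdash$ \emph{to} $\vdash_c$, not the other way around; the paper uses it for precisely that leg of its cycle. Your primary route, observing that the halting reduction lands in a fragment where intuitionistic and classical provability coincide, is fine.

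The genuine gap is in the forward direction, at the step ``translate $g_{\Gamma,\phi}$ into an $\L$-term $s_{\Gamma,\phi}$''. The lemma is stated for an arbitrary signature $\Sigma$ whose components $\Funcs$ and $\Preds$ are only assumed to be data types, i.e.\ \emph{synthetically} enumerable. The enumerator of $\Gamma\vdash\phi$ from \Cref{fact:data}, and hence your Boolean test $g_{\Gamma,\phi}$, is built from the abstract enumerators of $\Funcs$ and $\Preds$, and a function depending on an abstract parameter cannot be internalised into $\L$: the extraction framework applies only to concretely defined, closed functions. So a single application of $\MPL$ is not available here; naively one only gets the implication from the stronger synthetic $\MP$. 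The paper closes exactly this hole with the signature extension lemma (\Cref{coq:extension}): it embeds $\Sigma$ into the concrete maximal signature $\Sigma_{\text{max}}=(\Nat^2,\Nat^2)$, which \emph{is} $\L$-enumerable, proves that provability over $\Sigma$ is equivalent to provability of the embedded formula over $\Sigma_{\text{max}}$ (the backward direction of which itself requires soundness and completeness w.r.t.\ exploding Kripke models), and only then extracts an $\L$-enumerator. Without this detour, or without strengthening the hypothesis on $\Sigma$ to $\L$-enumerability, your forward direction does not go through --- and it is this signature issue, not merely the internalisation of a single Boolean test, that makes the statement provable from $\MPL$ rather than $\MP$.
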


\section{Model-Theoretic Semantics}
\label{sec:models}

The first variant of semantics we consider is based on the idea of interpreting terms as objects in a model and embedding the logical connectives into the meta-logic.
A formula is considered valid if it is satisfied by all models.
The simplest case is Tarski semantics, coinciding with classical deduction via Henkin's completeness proof factoring through a (constructive) model-existence theorem~\cite{HenkinCompleteness}.
Kripke semantics, coinciding with intuitionistic deduction, add more structure by connecting several models through an accessibility relation and admit a simpler completeness proof using a universal model.
In this section, we only consider formulas $\phi:\sForm$ in the $\to,\forall,\bot$-fragment if not stated otherwise.

\subsection{Tarski Semantics}
\setCoqFilename{GenTarski}
\label{sec:tarski}

\begin{definition}[][interp]
	A \emph{(Tarski) model} $\MM$ over a domain $D$ is a pair of functions
	$$\_^\MM~:~\forall f:\Funcs.\,D^{|f|}\to D\hspace{5em}
	\_^\MM~:~\forall P:\Preds.\,D^{|P|}\to \Prop.$$
	\emph{Assignments} $\rho:\Nat\to D$ are extended to \emph{term evaluations} $\hat\rho:\Term\to D$ by $\hat\rho \,x:=\rho\,x$ and $\hat\rho\,(f\,\vec t\,):=f^\MM\,(\hat{\rho}\,@\,\vec t\,)$
	and to formulas via the relation $\MM\vDash_\rho \phi$ defined by
	\begin{align*}
		\MM\vDash_\rho \dot{\bot}&~:=~\bot&
		\MM\vDash_\rho \phi\dot\to\psi &~:=~\MM\vDash_\rho\phi\to \MM\vDash_\rho\psi\\
		\MM\vDash_\rho P\,\vec t\,&~:=~P^\MM\,(\hat{\rho}\,@\,\vec t\,)&
		\MM\vDash_\rho\dot{\forall}\,\phi&~:=~\forall a:D.\,\MM\vDash_{a;\rho} \phi
	\end{align*}
	where the assignment $a;\rho$ maps $0$ to $a$ and $\natS\,x$ to $\rho\,x$.
	We write $\MM\vDash\phi$ if $\MM\vDash_\rho \phi$ for all $\rho$.
	$\MM$ is called \emph{classical} if it validates all instances of Peirce's law, i.e. $\MM\vDash ((\phi\dot\to\psi)\dot\to\phi)\dot\to\phi$ for all $\phi,\psi:\sForm$.
	We write $\MM\vDash_\rho \TT$ if $\MM_\rho\vDash \phi$ for all $\phi\in\TT$ and $\TT\vDash \phi$ if $\MM\vDash_\rho \phi$ for every classical $\MM$ and $\rho$ with $\MM\vDash_\rho \TT$.
\end{definition}

We first show that the classical deduction system $\Gamma\vdash_c\phi$ (restricted to the considered $\to,\forall,\bot$-fragment) is \emph{sound} for Tarski semantics.

\begin{fact}[][Soundness']\label{tarski_soundness}
	$\Gamma\vdash_c\phi$ implies $\Gamma\vDash \phi$.
\end{fact}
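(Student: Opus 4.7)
The plan is to prove this by structural induction on the derivation $\Gamma \vdash_c \phi$, showing for each rule that if its premises are sound for Tarski semantics then so is its conclusion. Formally, I would strengthen the induction statement slightly: assume a classical model $\MM$ and an assignment $\rho$ with $\MM \vDash_\rho \Gamma$ (interpreted pointwise on the list) and show $\MM \vDash_\rho \phi$. The base case is the assumption rule, immediate from $\MM \vDash_\rho \Gamma$. The implication rules (II), (IE) unfold directly via the clause $\MM \vDash_\rho \phi\dot\to\psi := \MM\vDash_\rho\phi \to \MM\vDash_\rho\psi$, so they reduce to ordinary meta-level $\to$-introduction and modus ponens. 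The explosion rule (exfalso) is handled by the definitional unfolding $\MM \vDash_\rho \dot\bot := \bot$ and ex falso quodlibet in the meta-logic. Peirce's law is discharged by the classicality hypothesis on $\MM$.

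The quantifier cases are the technical core and will need auxiliary lemmas about the interaction of substitution with the semantic bracket, because the de Bruijn encoding of (AI) and (AE) involves $\up \Gamma$ and $\phi[t]$ respectively. I would first prove two standard lemmas: a \emph{shift lemma}, $\MM \vDash_\rho \up\phi \leftrightarrow \MM \vDash_{\rho \circ \natS} \phi$, extended pointwise to contexts, and a \emph{substitution lemma}, $\MM \vDash_\rho \phi[\sigma] \leftrightarrow \MM \vDash_{\hat\rho \circ \sigma} \phi$, both by induction on $\phi$. Inside the proofs, the quantifier case crucially uses that the substitution is lifted as $0;\lambda x. \up(\sigma\, x)$ precisely so that composing with $a;\rho$ in the semantics matches. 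From these one derives the special cases needed: (AI) follows because, for arbitrary $a$, $\MM \vDash_{a;\rho} \up \Gamma$ holds by the shift lemma applied to the hypothesis $\MM \vDash_\rho \Gamma$; and (AE) follows by instantiating the universally quantified premise at $\hat\rho\, t$ and then using the substitution lemma to rewrite $\MM \vDash_{\hat\rho\, t; \rho} \phi$ as $\MM \vDash_\rho \phi[t]$.

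The main obstacle is getting the bookkeeping right around the shift/substitution lemmas, particularly the interplay between the lifted substitution $0; \lambda x. \up(\sigma\, x)$ and the extended assignment $a;\rho$. Once the equation $\widehat{a;\rho} \circ (0; \lambda x. \up(\sigma\, x)) = a; (\hat\rho \circ \sigma)$ is verified (a short calculation case-splitting on $0$ versus $\natS\, x$ and using the shift lemma on the image of $\sigma$), the quantifier cases become as routine as the propositional ones. All other rules of the classical calculus are either structural (weakening, already recorded in \Cref{lem:weak}) or follow the same pattern of unfolding the semantic clause and invoking the corresponding meta-level inference.
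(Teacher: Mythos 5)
Your proposal is correct and follows essentially the same route as the paper: induction on the derivation $\Gamma\vdash_c\phi$, with the quantifier cases handled via the standard shift/substitution lemmas and the Peirce axiom discharged by the classicality restriction on models. The paper's proof simply defers these details to the cited soundness proof in~\cite[Fact 3.14]{ForsterCPP}, so your elaboration of the de Bruijn bookkeeping is exactly what is implicit there.
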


\begin{proofqed}
	By induction on $\Gamma\vdash_c\phi$ similar to the soundness proof in~\cite[Fact 3.14]{ForsterCPP}.
	The classical Peirce axioms $\Gamma\vdash_c ((\phi\dot\to\psi)\dot\to\phi)\dot\to\phi$ are sound given that we only consider classical models.
\end{proofqed}

Formally, \emph{completeness} denotes the converse property, i.e. that $\Gamma\vDash\phi$ implies $\Gamma\vdash_c \phi$.
We now outline a Henkin-style completeness proof for $\Gamma\vdash_c\phi$ based on the presentation by Herbelin and Ilik~\cite{HerbelinHenkin}.
The main idea is to factor through a model existence theorem, stating that every consistent context is satisfied by a syntactic model.
The model existence theorem in turn is based on a theory extension lemma generalising the role of $\dot{\bot}$ to an arbitrary substitute $\phi_\bot$:

\setCoqFilename{GenConstructions}
\begin{lemma}[][construct_construction]
	\label{lem:extension}
	For every closed formula $\phi_\bot$ and closed $\TT$ there is $\TT'\supseteq \TT$ with:
	\begin{enumerate}
		\item
		$\TT'$ maintains $\phi_\bot$-consistency, i.e. $\TT\vdash_c \phi_\bot$ whenever $\TT'\vdash_c \phi_\bot$.
		\item
		$\TT'$ is deductively closed, i.e. $\phi\in \TT'$ whenever $\TT' \vdash_c \phi$.
		\item
		$\TT'$ respects implication, i.e. $\phi\dot{\to}\psi\in \TT'$ iff $\phi \in \TT'\to \psi \in \TT'$.
		\item
		$\TT'$ respects universal quantification, i.e. $\dot{\forall}\phi\in \TT'$ iff $\forall t.\,\phi[t] \in \TT'$.
	\end{enumerate}
\end{lemma}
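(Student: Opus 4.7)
The plan is to build $\TT'$ as the union of an increasing chain $\TT = \TT_0 \subseteq \TT_1 \subseteq \cdots$ obtained by a classical Lindenbaum-style saturation. Since $\Form$ is a data type by \Cref{fact:data}, I fix an enumeration $\phi_0, \phi_1, \ldots$ of all formulas. At stage $n{+}1$, I inspect $\phi_n$ and ask whether adding it to $\TT_n$ preserves $\phi_\bot$-consistency; if yes, I extend $\TT_n$ by $\phi_n$, otherwise I either leave $\TT_n$ unchanged or add a suitably chosen ``classical negation''. Setting $\TT' := \bigcup_n \TT_n$ then yields a maximally $\phi_\bot$-consistent theory, and this maximality is what drives all four properties.

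The two structural cases of the $\to,\forall,\bot$-fragment require some care. For implications $\phi_n = \phi \dot{\to} \psi$, no extra work is needed: deductive closure together with the classical natural deduction rules gives property (3) directly from (2). For universals $\phi_n = \dot{\forall}\phi$, the non-trivial direction of (4)---that $\dot{\forall}\phi \notin \TT'$ entails some $\phi[t] \notin \TT'$---is the genuinely subtle part. I handle it Henkin-style: whenever $\dot{\forall}\phi$ is rejected at stage $n$, I simultaneously add $\phi[x] \dot{\to} \phi_\bot$ for a variable $x$ that is fresh in $\phi$ and in the finite portion of $\TT_n$ considered so far, supplied by \Cref{lem:named_equiv}. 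This commits $\TT'$ to rejecting $\phi[x]$ and thereby provides the required counter-instance.

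Verification of the four properties is then routine. Property (1) holds because each $\TT_n$ preserves $\phi_\bot$-consistency by construction and deductions are finite, so any derivation $\TT' \vdash_c \phi_\bot$ already lives inside some $\TT_n$. Property (2) follows from maximality: if $\TT' \vdash_c \phi$ with $\phi = \phi_n$ but $\phi \notin \TT'$, then $\phi_n$ was rejected at stage $n$, so $\TT_n \cup \{\phi_n\} \vdash_c \phi_\bot$, and combining this with $\TT' \vdash_c \phi_n$ contradicts (1). Properties (3) and (4) then reduce to (2) together with Peirce's law and the rule (AE), modulo the witness added above for the non-trivial direction of (4).

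The main obstacle is constructivity of the saturation step: whether adding $\phi_n$ breaks $\phi_\bot$-consistency is in general undecidable, as classical deduction is only enumerable. Following Herbelin and Ilik, one sidesteps this by arranging the construction so that the decision need not be taken explicitly but still produces a theory satisfying all four properties. The fresh-variable substitution afforded by \Cref{lem:named_equiv} is equally essential, as it lets one handle universal quantification in the de Bruijn setting without extending the signature with new constants.
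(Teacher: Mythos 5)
Your overall plan---a Lindenbaum chain plus a Henkin-style treatment of universals---is in the right family, but the constructive core of the argument is missing, and it is precisely the part the paper's proof is structured to supply. You correctly identify that the stage-$n$ consistency check is undecidable, but your construction genuinely branches on it: in the universal case you add the counter-witness $\phi[x]\dot\to\phi_\bot$ ``whenever $\dot\forall\phi$ is rejected at stage $n$,'' which is a case distinction on an undecidable proposition and cannot be carried out in the ambient type theory. Appealing to Herbelin--Ilik to ``arrange the construction so that the decision need not be taken'' names the goal but not the mechanism. The paper's mechanism is concrete and incompatible with your universal case: the Henkin axioms $\phi_n[n]\dot\to\dot\forall\phi_n$ are added \emph{unconditionally} for every $n$ (with the enumeration fixed so that $n$ is fresh for $\phi_n$, and consistency preservation proved once and for all using closedness of the theory), and only afterwards is the theory maximised, with $\Omega_{n+1}$ \emph{defined} by the propositional condition ``$\Omega_n,\phi_n\vdash_c\phi_\bot$ implies $\Omega_n\vdash_c\phi_\bot$''---a comprehension over a proposition, not a decision. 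Nothing in the paper's construction ever asks whether a formula ``was rejected.'' Relatedly, your freshness bookkeeping is underspecified: \Cref{lem:named_equiv} produces a variable fresh for a \emph{finite} context, whereas your $\TT_n$ contains the entire (generally infinite) theory $\TT$; the paper manages this only because $\TT$ and $\phi_\bot$ are closed and the enumeration is pre-arranged.

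There is a second gap in your verification of property (3). The direction ``$\phi\in\TT'\to\psi\in\TT'$ implies $\phi\dot\to\psi\in\TT'$'' does not follow from deductive closure and Peirce's law alone. Running the maximality argument, one assumes $\Omega,\phi\dot\to\psi\vdash_c\phi_\bot$ and must derive $\Omega\vdash_c\phi$; Peirce gives $((\phi\dot\to\psi)\dot\to\phi)\dot\to\phi$, but what you have is $(\phi\dot\to\psi)\dot\to\phi_\bot$, and bridging from $\phi_\bot$ to $\phi$ requires the explosion axiom $\phi_\bot\dot\to\phi$ for the \emph{substitute} falsity. This is exactly why the paper's extension begins with the step making the theory exploding, i.e.\ adding $\phi_\bot\dot\to\phi$ for all closed $\phi$ (and separately proving that these additions preserve $\phi_\bot$-consistency via Peirce). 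Your construction has no counterpart to this step, so property (3) does not go through as claimed---note that the lemma is stated for an arbitrary closed $\phi_\bot$, not just $\dot\bot$, and this generality is needed later for the minimal-fragment completeness proof.
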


\begin{proof}
	We fix an enumeration $\phi_n$ of $\sForm$ such that $x$ is fresh for $\phi_n$ if $x\ge n$.
	The extension can be separated into three steps, all maintaining $\phi_\bot$-consistency:
	
	\begin{enumerate}
		\item[a.]
		$\MCL{E}\supseteq \TT$ which is \emph{exploding}, i.e. $(\phi_\bot\dot{\to}\phi) \in \MCL{E}$ for all closed $\phi$.
		\item[b.]
		$\MCL{H}\supseteq \MCL{E}$ which is \emph{Henkin}, i.e. $(\phi_n[n]\dot{\to}\dot{\forall}\phi_n)\in \MCL{H}$ for all $n$.
		\item[c.]
		$\Omega \supseteq \MCL{H}$ which is \emph{maximal}, i.e. $\phi\in\Omega$ whenever $\Omega,\phi\vdash_c\phi_\bot$ implies $\Omega\vdash_c\phi_\bot$.
	\end{enumerate}
	
	Note that being exploding allows to use $\phi_\bot$ analogously to $\dot{\bot}$ and that being Henkin ensures that there is no mismatch between the provability of a universal formula and all its instances.
	We first argue why $\Omega$ satisfies the claims (1)-(4) of the extension lemma.
	
	\begin{enumerate}
		\item
		$\Omega$ is a $\phi_\bot$-consistent extension of $\TT$ since all steps maintain $\phi_\bot$-consistency.
		\item
		Let $\Omega\vdash_c\phi$ and assume $\Omega,\phi\vdash_c\phi_\bot$, so $\Omega\vdash_c\phi_\bot$.
		Thus $\phi\in \Omega$ per maximality.
		\item
		The first direction is immediate as $\Omega$ is deductively closed.
		We prove the converse using maximality, so assume $\Omega,\phi\dot{\to}\psi\vdash_c\phi_\bot$.
		It suffices to show that $\Omega \vdash_c\phi$ since then $\phi \in \Omega$, $\psi\in \Omega$, and ultimately $\Omega\vdash_c\phi_\bot$ follow.
		$\Omega \vdash_c\phi$ can be derived by proof rules for $\phi_\bot$ analogous to the ones for $\dot{\bot}$.
		\item
		The first direction is again immediate by $\Omega$ being deductively closed and the converse exploits that $\Omega$ is Henkin as follows.
		Suppose $\forall t.\,\phi[t] \in \Omega$ and let $\phi$ be $\phi_n$ in the given enumeration.
		Then in particular $\phi_n[n]\in \Omega$ and since $\Omega$ is Henkin also $\phi_n[n]\dot{\to}\dot{\forall}\phi_n\in \Omega$
		which is enough to derive $\dot{\forall}\phi\in \Omega$.
	\end{enumerate}
	
	We now discuss the three extension steps separately:
	
	\begin{enumerate}
		\item[a.]
		Since the requirement is unconditional, we just add all needed formulas:
		$$\MCL{E}:=\TT\cup \{\phi_\bot\dot{\to}\phi\mid \phi \textnormal{ closed} \}$$
		We only have to argue that $\MCL{E}$ maintains $\phi_\bot$-consistency over $\TT$.
		So suppose $\MCL{E}\vdash_c \phi_\bot$, meaning that $\Gamma\vdash_c\phi_\bot$ for some $\Gamma \subseteq \MCL{E}$.
		We show that all added instances of explosion for $\phi_\bot$ in $\Gamma$ can be eliminated.
		Indeed, for $\Gamma=\Delta,\phi_\bot\dot{\to}\phi$ we have $\Delta\vdash_c(\phi_\bot\dot{\to}\phi)\dot{\to}\phi_\bot$ and hence $\Delta\vdash_c \phi_\bot$ by the Peirce rule.
		Thus by iteration there is $\Gamma'\subseteq \TT$ with $\Gamma'\vdash_c\phi_\bot$, justifying $\TT\vdash_c\phi_\bot$.
		\item[b.]
		As above, to make \MCL{E} Henkin we just add all necessary Henkin-axioms
		$$\MCL{H}:=\MCL{E}\cup \{\phi_n[n]\dot{\to}\dot{\forall}\phi_n\mid n:\Nat \}$$
		and justify that the extension maintains $\phi_\bot$-consistency.
		So let $\Gamma\vdash_c\phi_\bot$ for some $\Gamma \subseteq \MCL{H}$, we again show that all added instances can be eliminated.
		Hence suppose $\Gamma=\Delta,\phi_n[n]\dot{\to}\dot{\forall}\phi_n$.
		One can show that in a context $\Delta'$ extending $\Delta$ by suitable instances of $\phi_\bot$-explosion one can derive $\Delta'\vdash_c \phi_\bot$.
		In this derivation one exploits that $n$ is fresh for $\phi_n$ and that the input theory $\MCL{E}$ is closed.
		Thus ultimately $\MCL{E}\vdash_c \phi_\bot$.
		\item[c.]
		The last step maximises \MCL{H} by adding all formulas maintaining $\phi_\bot$-consistency:
		$$\Omega_0 :=\MCL H\hspace{1em}
		\Omega_{n+1}:=\Omega_n\cup\{\phi_n\mid \Omega_n,\phi_n\vdash_c \phi_\bot\textnormal{ implies }\Omega_n\vdash_c \phi_\bot \}\hspace{1em}
		\Omega :=\bigcup_{n:\Nat}\Omega_n$$
		Note that $\Omega$ maintains $\phi_\bot$-consistency over all $\Omega_n$ and hence \MCL{H} by construction so it remains to justify that $\Omega$ is maximal.
		So suppose $\Omega,\phi_n\vdash_c\phi_\bot$ implies $\Omega\vdash_c\phi_\bot$, we have to show that $\phi_n\in\Omega$.
		This is the case if the condition in the definition of $\Omega_{n+1}$ is satisfied, so let $\Omega_n,\phi_n\vdash_c \phi_\bot$.
		Then by the assumed implication $\Omega\vdash_c\phi_\bot$ and since $\Omega$ maintains $\phi_\bot$-consistency over $\Omega_n$ also $\Omega_n\vdash_c\phi_\bot$ as required.
		\qed
	\end{enumerate}
\end{proof}

\setCoqFilename{GenCompleteness}

The generalisation via the falsity substitute $\phi_\bot$ will become important later, for now the instance $\phi_\bot:=\dot{\bot}$ suffices.
Also note that in usual jargon the extension $\TT'$ of a consistent theory $\TT$ is called \emph{maximal consistent}, as no further formulas can be added to $\TT'$ without breaking consistency.

Maximal consistent theories $\TT$ give rise to equivalent \emph{syntactic models} $\MM_\TT$ over the domain \Term of terms by setting $f^{\TT}\,\vec t:=f\,\vec t$ and $P^{\TT}\,\vec t:=(P\,\vec t\in \TT)$.
We then observe that $\MM_\TT\vDash_\sigma \phi$ iff $\phi[\sigma]\in\TT$ for all substitutions $\sigma$ by a straighforward induction on $\phi$ using the properties stated in \Cref{lem:extension}.
Hence in particular $\MM_\TT\vDash_\id \phi$ iff $\phi\in\TT$ for the identity substitution $\id\,x:=x$.
From this observation we directly conclude the model existence theorem:

\setCoqFilename{GenCompleteness}
\begin{theorem}[][model_bot_correct]
	\label{thm:model}
	Every closed and consistent theory is satisfied in a classical model.
\end{theorem}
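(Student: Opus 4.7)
The plan is to assemble the theorem directly from \Cref{lem:extension} and the syntactic model construction sketched just before the statement. Given a closed, consistent theory $\TT$, I would instantiate the extension lemma with $\phi_\bot := \dot{\bot}$ to obtain a deductively closed, $\dot{\bot}$-consistent superset $\TT' \supseteq \TT$ that respects implication and universal quantification. Consistency of $\TT$ ensures that the hypothesis ``$\TT \vdash_c \dot{\bot}$'' cannot hold, and since $\TT'$ only adds $\dot{\bot}$-provable consequences of $\TT$, we get $\TT' \not\vdash_c \dot{\bot}$ as well; combined with deductive closure this yields $\dot{\bot} \notin \TT'$.

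Next I would form the syntactic model $\MM_{\TT'}$ over the domain $\Term$ with $f^{\TT'}\vec{t} := f\,\vec{t}$ and $P^{\TT'}\vec{t} := (P\,\vec{t}\in \TT')$, and verify the key equivalence $\MM_{\TT'}\vDash_\sigma \phi \;\Longleftrightarrow\; \phi[\sigma]\in \TT'$ by induction on $\phi:\sForm$. The atomic case for $P\,\vec{t}$ is by definition (after unfolding term evaluation, which gives $\hat\sigma\,t = t[\sigma]$), the $\dot{\bot}$-case uses $\dot{\bot} \notin \TT'$, the implication case is property (3) of \Cref{lem:extension}, and the universal case is property (4), matching $\MM_{\TT'}\vDash_\sigma \dot{\forall}\phi$ to $\forall t.\,\phi[t;\sigma]\in \TT'$ after the appropriate substitution bookkeeping. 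Applied with $\sigma := \id$, this shows $\MM_{\TT'}\vDash_\id \psi$ for every $\psi\in\TT \subseteq \TT'$, so $\MM_{\TT'}$ satisfies $\TT$.

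Finally, to see that $\MM_{\TT'}$ is classical, note that every instance of Peirce's law is derivable in $\vdash_c$ from the empty context, hence by deductive closure lies in $\TT'$; the equivalence just proved then gives $\MM_{\TT'}\vDash ((\phi\dot\to\psi)\dot\to\phi)\dot\to\phi$ for all $\phi,\psi$. The main obstacle I anticipate is the quantifier step of the induction: making the de Bruijn substitution under the binder line up with the semantic assignment $a;\rho$ and the syntactic instantiation $\phi[t;\sigma]$, which requires a substitution lemma and the fact that the Henkin property (4) is stated over all terms $t$ — precisely what allows semantic universal quantification over the domain $\Term$ to match $\forall t.\,\phi[t]\in\TT'$. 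Everything else is either covered by the extension lemma or is a routine unfolding of the Tarski clauses.
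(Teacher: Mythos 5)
Your proposal is correct and follows essentially the same route as the paper: instantiate \Cref{lem:extension} with $\phi_\bot := \dot{\bot}$, build the syntactic model $\MM_{\TT'}$ over $\Term$, establish $\MM_{\TT'}\vDash_\sigma \phi \leftrightarrow \phi[\sigma]\in\TT'$ by induction using properties (1)--(4), and obtain classicality from deductive closure. The paper merely leaves the induction and the consistency transfer (its property (1)) implicit where you spell them out.
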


\begin{proofqed}
	Let \TT be closed and consistent and let $\TT'$ be its extension per~\Cref{lem:extension} for $\phi_\bot:=\dot{\bot}$.
	To show $\MM_{\TT'}\vDash_\id \TT$, let $\phi\in \TT$, hence $\phi\in \TT'$.
	Then since $\MM_{\TT'}$ is equivalent to $\TT'$ we conclude $\MM_{\TT'}\vDash_\id \phi$ as desired.
	Finally, $\MM_{\TT'}$ is classical due to (2) of \Cref{lem:extension}.
\end{proofqed}

The model existence theorem yields completeness up to double negation:

\begin{fact}[][semi_completeness_standard]
	\label{thm:quasi}
	$\TT\vDash \phi$ implies $\neg\neg (\TT\vdash _c \phi)$ for arbitrary $\TT$ and $\phi$.
\end{fact}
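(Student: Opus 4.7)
The plan is a contradiction argument routed through the model existence theorem (\Cref{thm:model}). Assuming $\TT \vDash \phi$, I would suppose $\neg(\TT \vdash_c \phi)$ and aim for $\bot$.

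I would first dispatch the closed case. Consider the extension $\TT \cup \{\dot\neg \phi\}$, which is closed whenever $\TT$ and $\phi$ are. It is consistent: otherwise $\TT, \dot\neg\phi \vdash_c \dot\bot$, so by $\to$-introduction $\TT \vdash_c \dot\neg\dot\neg\phi$, and classical double-negation elimination (available from Peirce's law in $\vdash_c$ together with $\dot\bot$-elimination) gives $\TT \vdash_c \phi$, contradicting the assumption. Applying \Cref{thm:model}, there is then a classical model $\MM$ with $\MM \vDash_\rho \TT \cup \{\dot\neg\phi\}$ for some $\rho$. The first conjunct combined with $\TT \vDash \phi$ forces $\MM \vDash_\rho \phi$, directly contradicting $\MM \vDash_\rho \dot\neg \phi$.

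For arbitrary $\TT$ and $\phi$ with free variables, I would reduce to the closed case by substituting each free index with a fresh closed term. By \Cref{lem:weak}, both $\vdash_c$ and $\vDash$ are compatible with substitution, so a closing substitution $\sigma$ transports the hypothesis $\TT \vDash \phi$ into $\TT[\sigma] \vDash \phi[\sigma]$ and the assumed $\neg(\TT \vdash_c \phi)$ into $\neg(\TT[\sigma] \vdash_c \phi[\sigma])$, the latter because the fresh constants introduced by $\sigma$ can be replaced back by variables in any alleged proof. If $\Funcs$ contains no closed terms, I would first enlarge the signature by a fresh constant, an extension that is conservative over both $\vdash_c$ and $\vDash$. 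The closed case then finishes the job.

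The main obstacle is the free-variable bookkeeping: pinning down the closing substitution and verifying that both the semantic hypothesis and the syntactic non-derivability transfer cleanly across it, plus checking that any signature extension is benign. Conceptually, however, the proof is just the one-line observation that the consistent extension $\TT \cup \{\dot\neg\phi\}$ admits a model which witnesses the failure of $\TT \vDash \phi$.
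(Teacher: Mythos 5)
Your proposal is correct and takes essentially the same route as the paper: the closed case goes by observing that $\TT,\dot\neg\phi$ is consistent and invoking model existence (\Cref{thm:model}), and the open case is reduced to it by closing with fresh constants, exactly as the paper does via the $\Sigma_c$ construction in \Cref{lem:closing}. One small imprecision: in your edge case a \emph{single} fresh constant does not suffice when the signature has no function symbols, since distinct free variables must be sent to distinct closed terms (otherwise, e.g., $P\,0\not\vdash_c P\,1$ but $P\,c\vdash_c P\,c$), so one needs countably many fresh constants $c_0,c_1,\dots$ as in the paper.
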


\begin{proofqed}
	First, suppose that $\TT\vDash \phi$ for closed $\TT$ and $\phi$ and assume
  $\TT\not\vdash _c \phi$ which is equivalent to $\TT,\dot\neg \phi$ being
  consistent. But then there must be a model of $\TT,\dot\neg \phi$ in conflict
  to the assumption $\TT\vDash \phi$.

  To extend this result to arbitrary $\TT$ and $\varphi$ one can simply close
  them by replacing all free variables with fresh constants. We spell out the
  details of this construction in \Cref{lem:closing}.
\end{proofqed}

In fact, the remaining double negation elimination turns out to be necessary:

\begin{fact}[][completeness_standard_stability]
	\label{fact:equivc}
	Completeness of $\Gamma\vdash_c\phi$ is equivalent to stability of $\Gamma\vdash_c\phi$.
\end{fact}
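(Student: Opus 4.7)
The plan is to prove the two directions separately, using \Cref{thm:quasi} for one direction and soundness (\Cref{tarski_soundness}) together with the assumption that models are classical for the other.

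For stability~$\Rightarrow$~completeness: given $\Gamma\vDash\phi$, apply \Cref{thm:quasi} to obtain $\neg\neg(\Gamma\vdash_c\phi)$, then conclude $\Gamma\vdash_c\phi$ directly from stability. This direction is immediate.

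For completeness~$\Rightarrow$~stability: assume completeness and $\neg\neg(\Gamma\vdash_c\phi)$. By completeness, it suffices to show $\Gamma\vDash\phi$, so fix a classical model $\MM$ and assignment $\rho$ with $\MM\vDash_\rho\Gamma$ and aim for $\MM\vDash_\rho\phi$. The crucial observation is that, since $\MM$ validates Peirce's law, the meta-level proposition $\MM\vDash_\rho\phi$ is itself stable: unfolding $\MM\vDash_\rho((\phi\dot\to\dot{\bot})\dot\to\phi)\dot\to\phi$ gives $(\neg(\MM\vDash_\rho\phi)\to\MM\vDash_\rho\phi)\to\MM\vDash_\rho\phi$, from which $\neg\neg(\MM\vDash_\rho\phi)\to\MM\vDash_\rho\phi$ follows using ex falso. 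It therefore suffices to establish $\neg\neg(\MM\vDash_\rho\phi)$, and this follows from $\neg\neg(\Gamma\vdash_c\phi)$ by monotonicity of $\neg\neg$ applied to soundness (\Cref{tarski_soundness}) instantiated at $\MM$ and $\rho$.

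The only step requiring care is the derivation of meta-level stability of $\MM\vDash_\rho\phi$ from $\MM$ being classical; everything else is a routine combination of previously stated facts. I would state this stability as a small auxiliary observation (or a remark about classical Tarski models) before chaining the implications, to keep the main argument a clean two-line deduction in each direction.
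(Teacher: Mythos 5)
Your proof is correct. The first direction is identical to the paper's. In the converse direction your top-level reduction is the same (use completeness to replace the deductive goal by a semantic one, then invoke soundness under the double negation), but you discharge the residual $\neg\neg$ differently: the paper changes the semantic goal to the \emph{negative} statement $\Gamma,\dot\neg\phi\vDash\dot{\bot}$, under which $\neg\neg(\Gamma\vdash_c\phi)$ may be freely eliminated, and implicitly relies on the classical deduction step from $\Gamma,\dot\neg\phi\vdash_c\dot\bot$ back to $\Gamma\vdash_c\phi$; you instead keep the goal as $\MM\vDash_\rho\phi$ and observe that this proposition is \emph{stable} because the classical model validates the Peirce instance $((\phi\dot\to\dot{\bot})\dot\to\phi)\dot\to\phi$, which unfolds to exactly the meta-level statement needed. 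So the paper places the classicality burden on the deduction system while you place it on the models; both are available here since the fact concerns $\vdash_c$ and validity ranges over classical models only. Your variant has the small advantage of isolating a reusable observation (satisfaction in classical models is stable for fragment formulas) and of not needing the syntactic detour through $\dot\neg\phi$; the paper's variant is marginally shorter because a negative goal needs no auxiliary lemma. Your auxiliary stability claim is verified correctly, including the use of ex falso to pass from $\neg\neg(\MM\vDash_\rho\phi)$ to the Peirce premise $\neg(\MM\vDash_\rho\phi)\to\MM\vDash_\rho\phi$.
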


\begin{proofqed}
	Assuming stability, \Cref{thm:quasi} directly yields the completeness of $\Gamma\vdash_c\phi$.
	Conversely, assume completeness and let $\neg\neg (\Gamma\vdash _c \phi)$.
	Employing completeness, to get $\Gamma\vdash _c \phi$ it suffices to show $\Gamma,\dot\neg \phi\vDash \dot{\bot}$, so suppose $\MM\vDash_\rho \Gamma,\dot\neg\phi$ for some $\MM$ and $\rho$.
	As we now aim at a contradiction, we can turn $\neg\neg (\Gamma\vdash _c \phi)$ into $\Gamma\vdash _c \phi$ and therefore obtain $\Gamma\vDash _c \phi$ by soundness, a conflict to $\MM\vDash_\rho \Gamma,\dot\neg\phi$.
\end{proofqed}

Hence, we can characterise completeness of classical ND as follows.

\setCoqFilename{Analysis}
\begin{theorem}
	\label{thm:characterisations}
	\begin{enumerate}
		\coqitem[completeness_context_iff_MPL]
    Completeness of $\Gamma\vdash_c\phi$ is equivalent to $\MPL$.
		\coqitem[completeness_enum_iff_MP]
		Completeness of $\TT\vdash_c\phi$ for enumerable \TT is equivalent to \MP.
		\coqitem[completeness_iff_XM]
		Completeness of $\TT\vdash_c\phi$ for arbitrary \TT is equivalent to \EM.
	\end{enumerate}
\end{theorem}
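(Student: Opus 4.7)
The plan is to derive all three parts uniformly from a theory-level version of \Cref{fact:equivc}, which identifies completeness with stability of $\vdash_c$, and then to match each stability scope against the appropriate non-constructive principle via \Cref{fact:MP} and \Cref{lem:MPL-equivs}. Part~(1) is immediate from this template: \Cref{fact:equivc} equates completeness of $\Gamma\vdash_c\phi$ with its stability, and \Cref{lem:MPL-equivs} equates that stability with $\MPL$.

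For part~(2), I would first lift \Cref{fact:equivc} from contexts to theories: soundness transfers verbatim, and the converse reuses \Cref{thm:quasi}, which is already formulated for arbitrary $\TT$. For enumerable $\TT$, enumerability of $\TT\vdash_c\phi$ on the discrete type $\Form$ follows by combining \Cref{fact:data} with an enumeration of finite sublists of $\TT$, so \Cref{fact:MP} converts $\MP$ into stability and hence into completeness. For the reverse direction, given $f:\Nat\to\Bool$ with $\neg\neg\exists n.\,f\,n=\btrue$, I would consider the enumerable theory $\TT_f := \{\dot{\bot}\mid \exists n.\,f\,n=\btrue\}$, enumerated by $\lambda n.\,\textbf{if }f\,n\textbf{ then }\some{\dot{\bot}}\textbf{ else }\emptyset$. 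A short calculation, using soundness to exclude $\emptyset\vdash_c\dot{\bot}$, shows that $\TT_f\vdash_c\dot{\bot}$ iff $\exists n.\,f\,n=\btrue$, so the stability supplied by completeness yields $\MP$.

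For part~(3), the forward direction is immediate: $\EM$ makes every proposition stable, so the theory-level \Cref{fact:equivc} produces completeness. For the converse, applying the same stability instance to $\TT_P := \{\dot{\bot}\mid P\}$ for an arbitrary $P:\Prop$ yields $\neg\neg P\to P$; instantiating with $P := Q\vee\neg Q$ and using the intuitionistic validity of $\neg\neg(Q\vee\neg Q)$ produces $\EM$. The main obstacle I anticipate is not conceptual but bookkeeping: one must verify that \Cref{fact:equivc} and the closure-under-fresh-constants mentioned after \Cref{thm:quasi} pass cleanly from contexts to arbitrary (resp.\ enumerable) theories, and in the enumerable case that closure preserves enumerability. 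Granting this, all three equivalences collapse to the two uniform encodings $\TT_f$ and $\TT_P$, giving an elegant and symmetric proof.
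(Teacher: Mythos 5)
Your proposal is correct and follows essentially the same route as the paper: part (1) is verbatim the combination of \Cref{fact:equivc} with \Cref{lem:MPL-equivs}, and for parts (2) and (3) your theories $\TT_f$ and $\TT_P$ are exactly the paper's $\lambda\phi.\,\phi=\dot\bot\land(\exists n.\,f\,n=\btrue)$ and $\lambda\phi.\,\phi=\dot\bot\land P$, with the forward directions obtained from \Cref{thm:quasi} via enumerability/\EM-stability just as in the paper. Your packaging through a theory-level \Cref{fact:equivc} rather than arguing stability of $\TT\vDash\dot\bot$ directly is only a presentational difference, and the bookkeeping you flag (consistency of the empty context, preservation of enumerability) goes through as you expect.
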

\setCoqFilename{GenCompleteness}

\begin{proof}
	\begin{enumerate}
		\item
		By Fact~\ref{fact:equivc} completeness is equivalent to the stability of $\Gamma\vdash_c\phi$ which is shown equivalent to $\MPL$ in~\Cref{sec:L}.
		\item
		$\TT\vdash_c\phi$ for enumerable \TT is enumerable, hence stable under \MP and thus complete per \Cref{thm:quasi}.
		For the converse, assume a function $f:\Nat\to\Bool$ and consider $\TT:=(\lambda \phi.\,\phi=\dot{\bot}\land \exists n.\,f\,n=\btrue)$.
		Since $\TT$ is enumerable, completeness yields that $\TT\vDash\dot{\bot}$ is equivalent to $\TT\vdash_c\dot{\bot}$ which in turn is equivalent to $\exists n.\,f\,n=\btrue$.
		Then since $\TT\vDash\dot{\bot}$ is stable so must be $\exists n.\,f\,n=\btrue$.
		\item
		\EM particularly implies that $\TT\vdash_c\phi$ is stable and hence complete.
		Conversely given a proposition $P:\Prop$, completeness for $\TT:=(\lambda \phi.\,\phi=\dot{\bot}\land P)$ yields the stability of $P$ with an argument as in (2).
		\qed
	\end{enumerate}
\end{proof}

Having analysed the usual Henkin-style completeness proof, we now turn to its constructivisation.
The central observation is that completeness already holds constructively for the minimal $\to,\forall$-fragment, by an elaboration of the classical proof for the minimal fragment given in~\cite{SchummCompleteness}.
To this end, we further restrict the deduction system and semantics to the minimal fragment and prove completeness via a suitable form of model existence.

\begin{lemma}[][model_fragment_correct]
	In the $\to,\forall$-fragment, for closed \TT and $\phi$ there is a classical model~$\MM$ and an assignment $\rho$ such that (1)~$\MM\vDash_\rho \TT$ and (2)~$\MM\vDash_\rho\phi$ implies $\TT\vdash_c \phi$.
\end{lemma}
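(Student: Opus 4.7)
The plan is to reuse the extension mechanism of \Cref{lem:extension}, but to sidestep the need for a designated falsity constant by substituting the target formula $\phi$ itself for $\phi_\bot$. Concretely, since both \TT and $\phi$ are closed, I would apply \Cref{lem:extension} with $\phi_\bot := \phi$, obtaining a theory $\TT' \supseteq \TT$ which maintains $\phi$-consistency (i.e.\ $\TT'\vdash_c\phi$ implies $\TT\vdash_c\phi$), is deductively closed, respects implication, and respects universal quantification.

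Next I would form the syntactic model $\MM_{\TT'}$ over the domain of terms exactly as in the proof of \Cref{thm:model}, setting $f^{\TT'}\vec t := f\,\vec t$ and $P^{\TT'}\vec t := (P\,\vec t \in \TT')$. Note that no clause for $\dot\bot$ is required because we are in the minimal $\to,\forall$-fragment. The key lemma is then the equivalence $\MM_{\TT'}\vDash_\sigma \psi \iff \psi[\sigma]\in\TT'$ for every formula $\psi$ in this fragment and every substitution $\sigma$, proved by straightforward induction on $\psi$. The atomic case is by definition, the implication case uses property~(3) of $\TT'$, and the universal case uses property~(4) together with the handling of shifted substitutions already used in~\Cref{thm:model}. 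Since $\TT'$ is deductively closed, it contains every Peirce instance, so $\MM_{\TT'}$ is classical.

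With the correspondence in hand, claim~(1) is immediate: for $\psi\in\TT$ we have $\psi\in\TT'$, so $\MM_{\TT'}\vDash_\id \psi$. For claim~(2), assume $\MM_{\TT'}\vDash_\id \phi$. Using closedness of $\phi$ so that $\phi[\id]=\phi$, the correspondence yields $\phi\in\TT'$, hence $\TT'\vdash_c\phi$, and by $\phi$-consistency preservation we conclude $\TT\vdash_c\phi$.

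The main obstacle is that the previous use of \Cref{lem:extension} relied on $\phi_\bot = \dot\bot$, which in particular let us invoke explosion to justify $\phi_\bot$-consistency preservation. Here we must trust that the extension lemma is already stated generically enough that its proof goes through for an arbitrary closed $\phi_\bot$: in particular, the explosion-closure step adds axioms $\phi\dot\to\phi'$ for all closed $\phi'$ and then eliminates them using Peirce's law rather than ex falso, so nothing in that argument actually depends on $\phi_\bot$ being $\dot\bot$. Once this is checked, everything else is a direct specialisation of the standard Henkin construction to the minimal fragment.
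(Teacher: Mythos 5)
Your proposal is correct and follows essentially the same route as the paper: instantiate the extension lemma with $\phi_\bot := \phi$, build the syntactic model $\MM_{\TT'}$, and use the correspondence $\MM_{\TT'}\vDash_\sigma\psi \leftrightarrow \psi[\sigma]\in\TT'$ together with $\phi$-consistency preservation. The caveat in your last paragraph is already discharged by the paper itself, since \Cref{lem:extension} is stated and proved for an arbitrary closed $\phi_\bot$ (with the explosion instances eliminated via Peirce's rule, exactly as you observe).
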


\begin{proofqed}
	Let $\TT'$ be the extension of $\TT$ for $\phi_\bot:=\phi$.
	As before, we have $\MM_{\TT'}\vDash_\id \TT'$.
	So now let $\MM_{\TT'}\vDash_\id \phi$, then $\phi\in \TT'$ and $\TT\vdash_c\phi$ by (1) of \Cref{lem:extension}.
\end{proofqed}

\begin{corollary}[][semi_completeness_fragment]
	In the $\to,\forall$-fragment, $\Gamma\vDash\phi$ implies $\Gamma\vdash_c\phi$ for closed $\Gamma$ and $\phi$.
\end{corollary}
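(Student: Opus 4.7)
The plan is to derive this corollary as an almost immediate consequence of the preceding model existence lemma, exploiting the trick that the lemma's model is constructed with $\phi$ itself playing the role of the falsity substitute $\phi_\bot$. No further theory extension or double-negation reasoning should be required.

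Concretely, I would proceed as follows. Given closed $\Gamma$ and $\phi$ with $\Gamma\vDash\phi$ in the $\to,\forall$-fragment, I apply the previous lemma to the theory $\TT:=\lambda\psi.\,\psi\in\Gamma$ (contexts are identified with their induced theories) and the formula $\phi$. This yields a classical model $\MM$ and an assignment $\rho$ such that $\MM\vDash_\rho\Gamma$ and such that $\MM\vDash_\rho\phi$ entails $\Gamma\vdash_c\phi$. The semantic hypothesis $\Gamma\vDash\phi$ then, by the definition of $\vDash$ for classical models, immediately gives $\MM\vDash_\rho\phi$, and the second clause of the lemma closes the argument.

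The only small subtlety is keeping track of the transition from contexts to theories: $\Gamma\vDash\phi$ is originally stated for contexts, while the model existence lemma is stated for (closed) theories. This is handled by the standard identification of a finite context with its underlying set-theoretic predicate, noted in \Cref{sec:prelims}, together with the fact that $\MM\vDash_\rho\Gamma$ (viewed as a theory) coincides with the pointwise satisfaction of the list $\Gamma$. Closedness of $\Gamma$ is inherited from closedness as a context.

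I do not anticipate any real obstacle here; the work has already been done in the model existence lemma, whose crucial feature—substituting $\phi_\bot:=\phi$—ensures that the syntactic model produced by the maximal $\phi$-consistent extension $\TT'$ of $\Gamma$ internalises the would-be ``refutation'' of $\phi$ directly as membership in $\TT'$. Thus the classically bothersome step of turning $\Gamma\not\vdash_c\phi$ into a model of $\Gamma,\dot\neg\phi$, which required double negation elimination in \Cref{fact:equivc}, is sidestepped entirely in the $\to,\forall$-fragment because $\dot\bot$ plays no role.
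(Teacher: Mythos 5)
Your proposal is correct and is exactly the argument the paper intends: the corollary is an immediate consequence of the preceding model existence lemma instantiated at $\TT:=\Gamma$ with $\phi_\bot:=\phi$, using the semantic hypothesis to discharge $\MM\vDash_\rho\phi$ and clause~(2) to conclude $\Gamma\vdash_c\phi$. The remarks about identifying contexts with their induced theories and about why no double negation elimination is needed match the paper's treatment.
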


As opposed to completeness for fomulas incorporating $\dot{\bot}$, completeness in the minimal fragment does not rely on consistency requirements.
Consequently, if these requirements are eliminated by allowing models treating inconsistency more liberal, completeness for formulas with $\dot{\bot}$ can be established constructively~(cf.~\cite{VeldmanExplosion,KrivineCompleteness}).

So we now turn back to the $\to,\forall,\bot$-fragment and define a satisfaction relation $\MM\vDash_\rho^A\phi$ for arbitrary propositions $A$ with the relaxed rule $(\MM\vDash_\rho^{A} \dot\bot):=A$.
A model $\MM$ is \emph{$A$-exploding} if $\MM\vDash^A\dot{\bot}\to \phi$ for all $\phi$ and \emph{exploding} if it is $A$-exploding for some choice of $A$.
Note that $A:=\top$ and $P^\MM\,\vec t:=\top$ in particular yields an exploding model satisfying all formulas, hence accommodating inconsistent theories. 
This leads to the following formulation of model existence.

\begin{lemma}[][model_bot_correct]
	\label{lem:exploding}
	For every closed theory \TT there is an exploding classical model $\MM$ and an assignment $\rho$ such that (1)~$\MM\vDash^A_\rho \TT$ and (2)~$\MM\vDash_\rho^A\dot\bot$ implies $\TT\vdash_c \dot\bot$.
\end{lemma}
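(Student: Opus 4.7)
The plan is to mirror the proof of \Cref{thm:model} while using the generalised satisfaction relation with a carefully chosen falsity interpretation $A$. First I would apply \Cref{lem:extension} to the closed theory $\TT$ with $\phi_\bot:=\dot\bot$, obtaining an extension $\TT'\supseteq\TT$ that maintains $\dot\bot$-consistency from $\TT$, is deductively closed, and respects $\dot\to$ and $\dot\forall$. Then I would build the syntactic model $\MM_{\TT'}$ over the term domain by $f^{\TT'}\,\vec t:=f\,\vec t$ and $P^{\TT'}\,\vec t:=(P\,\vec t\in\TT')$, and set the falsity interpretation to $A:=(\dot\bot\in\TT')$.

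A routine induction on $\phi$ then establishes an evaluation lemma $\MM_{\TT'}\vDash^A_\sigma\phi$ iff $\phi[\sigma]\in\TT'$ for every substitution $\sigma$: the atomic case unfolds the definition of the predicate interpretation, the $\dot\bot$ case is exactly the choice of $A$ (using $\dot\bot[\sigma]=\dot\bot$), the implication case uses property~(3), and the universal case combines property~(4) with the standard substitution identity $\phi[0;\lambda x.\up(\sigma\,x)][t]=\phi[t;\sigma]$.

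With this evaluation lemma in hand the desired properties follow. Claim~(1) holds since $\TT\subseteq\TT'$ gives $\MM_{\TT'}\vDash^A_\id\phi$ for every $\phi\in\TT$, while for~(2), $\MM_{\TT'}\vDash^A_\id\dot\bot$ unfolds to $\dot\bot\in\TT'$, which yields $\TT'\vdash_c\dot\bot$ and hence $\TT\vdash_c\dot\bot$ by property~(1) of \Cref{lem:extension}. Classicality is inherited from $\TT'$ being deductively closed and thus containing every instance of Peirce's law, which the evaluation lemma transports to $\MM_{\TT'}$. Finally, $\MM_{\TT'}$ is $A$-exploding because if $A$ holds then $\TT'$ is inconsistent, so every formula lies in $\TT'$ and the evaluation lemma yields $\MM_{\TT'}\vDash^A_\sigma\phi$ for all $\sigma$ and $\phi$.

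The main obstacle is phrasing the evaluation lemma so that the generalised $A$-interpretation threads cleanly through the induction: the parameter $A$ has to be picked so that the $\dot\bot$ step matches syntactic membership on the nose, and the $\dot\forall$ step must be handled carefully with the de Bruijn substitution mechanics, relying on $\TT$ (and hence $\TT'$) being closed so that the extended substitutions $0;\sigma$ and $t;\sigma$ interact correctly with the shifted bodies supplied by property~(4) of \Cref{lem:extension}.
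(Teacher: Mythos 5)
Your proposal matches the paper's proof essentially step for step: extend $\TT$ via \Cref{lem:extension} with $\phi_\bot:=\dot\bot$, take the syntactic model over terms with $A:=(\dot\bot\in\TT')$, prove the evaluation lemma $\MM_{\TT'}\vDash^A_\sigma\phi\leftrightarrow\phi[\sigma]\in\TT'$, and read off (1), (2), classicality, and the exploding property from the properties of $\TT'$. The only (inessential) difference is that the paper justifies explosion by noting $\dot\bot\dot\to\phi[\sigma]\in\TT'$ via deductive closure rather than by assuming $A$ and inflating $\TT'$ to everything; both are fine.
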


\begin{proofqed}
	Let $\TT$ be closed and let $\TT'$ be its extension for $\phi_\bot:=\dot{\bot}$.
	We set $A:=\dot{\bot}\in \TT'$ and observe that the syntactic model $\MM_{\TT'}$ still coincides with $\TT'$, i.e. $\MM_{\TT'}\vDash_\sigma^A \phi$ iff $\phi[\sigma]\in\TT'$.
	Hence we have (1) $\MM_{\TT'}\vDash_\id ^A \TT$.
	Moreover, $\MM_{\TT'}$ is $A$-exploding since proving $\MM_{\TT'}\vDash^A_\sigma\dot{\bot}\to \phi$ in this case means to prove that $\dot{\bot}\dot\to\phi[\sigma]\in \TT'$, a straightforward consequence of $\TT'$ being deductively closed.
	Finally, (2) follows from (1) of \Cref{lem:extension} as seen before.
\end{proofqed}

We write $\Gamma\vDash_e\phi$ if $\MM\vDash^A_\rho\phi$ for all $A:\Prop$ and $A$-exploding $\MM$ and $\rho$ with $\MM\vDash^A_\rho\Gamma$ and finally establish completeness with respect to exploding models:

\begin{fact}[][completeness_expl]
	$\Gamma\vDash_e\phi$ implies $\Gamma\vdash_c\phi$ for closed $\Gamma$ and $\phi$.
\end{fact}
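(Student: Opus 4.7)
The plan is to prove the contrapositive-style argument by deriving a contradiction from assuming $\Gamma \vDash_e \phi$ together with the aim of obtaining $\Gamma \vdash_c \phi$, using \Cref{lem:exploding} applied to the theory $\TT := \Gamma \cup \{\dot\neg\phi\}$. Since $\Gamma$ and $\phi$ are closed, so is $\TT$, so the lemma applies and yields a proposition $A$, an $A$-exploding classical model $\MM$, and an assignment $\rho$ such that (i)~$\MM\vDash^A_\rho \TT$ and (ii)~$\MM\vDash^A_\rho \dot\bot$ implies $\TT \vdash_c \dot\bot$.

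Next, from~(i) we obtain in particular $\MM\vDash^A_\rho \Gamma$ together with $\MM\vDash^A_\rho \dot\neg\phi$, i.e.\ $\MM\vDash^A_\rho \phi \to \MM\vDash^A_\rho \dot\bot$. The hypothesis $\Gamma \vDash_e \phi$ now fires: since $\MM$ is $A$-exploding and classical, and satisfies $\Gamma$ at $\rho$, we conclude $\MM\vDash^A_\rho \phi$. Combining this with the previous implication yields $\MM\vDash^A_\rho \dot\bot$. By~(ii), this gives $\Gamma, \dot\neg\phi \vdash_c \dot\bot$.

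Finally, the classical deduction system $\vdash_c$ contains Peirce's law, so the derivation $\Gamma, \dot\neg\phi \vdash_c \dot\bot$ yields $\Gamma \vdash_c \phi$ by the usual classical proof-by-contradiction pattern (abstracting $\dot\neg\phi$, applying Peirce to the resulting $\dot\neg\dot\neg\phi$, and $\dot{\bot}$-elimination handled via the exploding formulation of the calculus or, equivalently, via the minimal-fragment derivation of $((\phi\dot\to\dot\bot)\dot\to\phi)\dot\to\phi$ from Peirce). This completes the argument.

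The proof is almost entirely an application of \Cref{lem:exploding}; the substantive content of the completeness result is already contained there. The one delicate point I would want to double-check is that the notion of satisfaction $\vDash^A_\rho$ for the hypothesis context is preserved between the invocation of \Cref{lem:exploding} (which produces a model for $\Gamma, \dot\neg\phi$) and the application of $\vDash_e$ (which quantifies over $A$-exploding classical models satisfying $\Gamma$); since the same $A$, $\MM$, and $\rho$ are used throughout, this is immediate, but it is the only place where one must be careful that the universal quantification over $A$ in the definition of $\vDash_e$ lines up with the existentially produced $A$ from the model existence lemma.
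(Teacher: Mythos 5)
Your proposal is correct and matches the paper's proof, which is exactly the one-line instantiation of \Cref{lem:exploding} at $\TT := \Gamma, \dot\neg\phi$ that you spell out: satisfaction of $\Gamma$ and $\dot\neg\phi$ in the produced exploding classical model, firing the hypothesis $\Gamma\vDash_e\phi$ to get $\MM\vDash^A_\rho\dot\bot$, hence $\Gamma,\dot\neg\phi\vdash_c\dot\bot$, and then the standard classical step to $\Gamma\vdash_c\phi$. Your closing remark about the quantifier alignment of $A$ is also the right thing to check, and it is indeed unproblematic for the reason you give.
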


\begin{proofqed}
	Let $\Gamma\vDash_e\phi$, then $\Gamma,\dot{\neg}\phi\vdash_c\dot\bot$ follows by \Cref{lem:exploding} for $\TT := \Gamma,\dot{\neg}\phi$.
\end{proofqed}

\subsection{Completeness Extended to Full Syntax and Free Variables}
\label{sec:completeness_extended}

\setCoqFilename{DeMorgan}

The completeness statements discussed in the previous section impose syntactic limitations in two ways: we only considered formulas belonging to the $\to,\forall,\bot$-fragment and did not explain the treatment of free variables underlying \Cref{coq:semi_completeness_standard}.
Both of these shortcomings are addressed in this section.

First, we show how completeness for the full syntax $\Form$ can be reduced to completeness for the fragment $\Form^*$.
To this end, we formally distinguish the deduction systems $\Gamma \vdash_c^*\phi$ and $\Gamma \vdash_c\phi$ and satisfaction relations $\MM \vDash_\rho^* \phi$ and $\MM \vDash_\rho \phi$ involving formulas from $\Form^*$ and $\Form$, respectively.
As mentioned earlier, the classical deduction system $\Gamma \vdash_c^*\phi$ is already suitable to encode the missing connectives via the usual classical equivalents.
However, if we extend the Tarski semantics $\MM \vDash_\rho^* \phi$ to formulas $\phi : \Form $ in the natural way, in particular by setting
$$\MM\vDash_\rho \phi\dot\lor\psi ~:=~\MM\vDash_\rho\phi\lor \MM\vDash_\rho\psi\hspace{1cm}
\MM\vDash_\rho\dot{\exists}\,\phi~:=~\exists a:D.\,\MM\vDash_{a;\rho} \phi$$

then classical logic on the meta-level becomes necessary to tame the constructively stronger notions of disjunction and existence.

For ease of readability, we identify formulas in $\Form^*$ with their identity embedding into $\Form$.
The converse encoding of $\Form$ into $\Form^*$ is defined as follows:

\begin{definition}[][DM]
	We define the de Morgan translation $\phi^M$ from $\Form$ to $\Form^*$ by
	$$(\phi\dot\land\psi)^M := \dot{\neg}(\phi^M\dot{\to}\dot{\neg}\psi^M)\hspace{0.7cm}
		 (\phi\dot\lor\psi)^M := \dot{\neg}\phi^M\dot{\to}\psi^M\hspace{0.7cm}
		 (\dot\exists\phi)^M := \dot{\neg}\dot{\forall}\dot{\neg}\phi^M$$	
	in the crucial cases and with the remaining syntax just recursively traversed.
\end{definition}

We verify that the deduction system indeed cannot distinguish formulas from their de Morgan translations:

\begin{lemma}[][DM_prv]
	\label{DM_prv}
	$\Gamma \vdash_c \phi$ iff $\Gamma\vdash_c \phi^M$ and in particular $\Gamma \vdash_c \phi$ iff $\Gamma^M\vdash_c^* \phi^M$.
\end{lemma}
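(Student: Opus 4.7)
The plan is to establish both equivalences by first showing that $\phi$ and $\phi^M$ are classically interderivable. The engine is a straightforward induction on $\phi : \Form$ proving $\vdash_c \phi \dot\to \phi^M$ and $\vdash_c \phi^M \dot\to \phi$. For $\dot\bot$ and atomic predicates the translation is the identity, so there is nothing to show. For the cases $\phi\dot\to\psi$ and $\dot\forall\phi$, the translation commutes structurally with the connective, so the induction hypothesis together with the standard introduction/elimination rules closes the case. The three interesting cases are exactly those in which $\_^M$ alters the shape of the formula; here one needs to derive the classical tautologies
$$\phi \dot\land \psi \dot\leftrightarrow \dot\neg(\phi \dot\to \dot\neg \psi), \quad \phi \dot\lor \psi \dot\leftrightarrow \dot\neg\phi \dot\to \psi, \quad \dot\exists \phi \dot\leftrightarrow \dot\neg\dot\forall\dot\neg\phi$$
which are routine derivations in $\vdash_c$ using Peirce's law in combination with the induction hypothesis on the subformulas.

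From this interderivability the first equivalence $\Gamma \vdash_c \phi \Leftrightarrow \Gamma \vdash_c \phi^M$ follows immediately via modus ponens. For the second equivalence, I would first note the easily verified fact that $\_^M$ acts as the identity on $\sForm$, so that $\Gamma^M$ and $\phi^M$ genuinely inhabit the restricted fragment and are valid targets for $\vdash_c^*$. The reverse direction $\Gamma^M \vdash_c^* \phi^M \Rightarrow \Gamma \vdash_c \phi$ is the easy one: embed the $\vdash_c^*$-derivation into $\vdash_c$ to obtain $\Gamma^M \vdash_c \phi^M$, then apply the first equivalence pointwise on hypotheses (using $\gamma \dot\leftrightarrow \gamma^M$) and once on the conclusion to get $\Gamma \vdash_c \phi$.

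The forward direction $\Gamma \vdash_c \phi \Rightarrow \Gamma^M \vdash_c^* \phi^M$ is where the real work lies and is the main obstacle. Chaining through the first equivalence and weakening by the interderivability of $\Gamma$ with $\Gamma^M$ reduces the task to converting a $\vdash_c$-derivation with $\sForm$-endpoints into a $\vdash_c^*$-derivation — a conservativity statement. The natural strategy is a recursive proof translation: apply $\_^M$ to every formula appearing in the derivation, so that everything becomes a $\sForm$-formula, and then simulate each rule. Rules for $\dot\to$, $\dot\forall$, $\dot\bot$, and Peirce's axiom translate directly; rules for $\dot\land$, $\dot\lor$, and $\dot\exists$ must be simulated in $\vdash_c^*$ using the classical equivalences above (e.g.\ conjunction introduction becomes a $\vdash_c^*$-derivation of $\dot\neg(\phi^M \dot\to \dot\neg\psi^M)$ from $\phi^M$ and $\psi^M$). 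The delicate points to double-check are the quantifier rules (AI) and (EE) where substitution and fresh-variable management interact with the translation, which works because $\_^M$ commutes with substitution and preserves freshness — a lemma I would prove separately by induction on $\phi$ before executing the translation argument.
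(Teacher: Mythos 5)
Your proposal is correct and matches the paper's proof in all essentials: the first equivalence by induction on $\phi$ with Peirce's law handling the backward directions, the forward implication of the second equivalence by induction on the derivation using that substitution commutes with the de Morgan translation, and the converse by replaying fragment deductions in the full system. Your detour through a "conservativity" reformulation is harmless but redundant, since the induction you describe (translate every formula in the derivation and simulate each rule) is exactly the direct proof of $\Gamma \vdash_c \phi \Rightarrow \Gamma^M \vdash_c^* \phi^M$ that the paper carries out.
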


\begin{proofqed}
	The first equivalence is by induction on $\phi$ with $\Gamma$ generalised with the backwards directions relying on the classical (P) rule as expected.
	The implication from $\Gamma \vdash_c \phi$ to $\Gamma^M\vdash_c^* \phi^M$ is by induction on $\Gamma \vdash_c \phi$ employing that substitution commutes with the de Morgan translation.
	The converse implication follows with the first equivalence since all fragment deductions can be replayed in the full system.
\end{proofqed}

Turning to the semantics, the deductive equivalence can be mimicked when assuming classical logic.

\begin{lemma}[][DMT_sat]
	\label{lem:DMT_sat}
	Given \EM, we have $\MM \vDash_\rho \phi$ iff $\MM \vDash_\rho^* \phi^M$ for all $\MM$ and $\rho$.
\end{lemma}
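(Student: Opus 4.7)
The plan is to prove the equivalence by induction on $\phi$ with $\rho$ generalised, so that the quantifier case can appeal to the IH at arbitrary extensions $a;\rho$. In all cases the translation just threads through the recursion, so the bulk of the work lies in comparing the semantic clauses of the constructive connectives $\dot\land,\dot\lor,\dot\exists$ with those of their de Morgan encodings. The atomic cases ($\dot\bot$, $P\,\vec t$) and the cases $\phi\dot\to\psi$ and $\dot\forall\phi$ are immediate: here $\phi^M$ is built by just recursively translating subformulas, and the clauses of $\vDash_\rho$ and $\vDash_\rho^*$ coincide on these connectives, so the IH transfers the equivalence directly without any classical reasoning.

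The interesting cases are the three classically definable connectives. For $\phi\dot\land\psi$, unfolding the semantics on both sides and using the IH twice reduces the goal to the propositional equivalence $A\land B \leftrightarrow \neg(A\to\neg B)$; the backwards direction requires a double negation elimination, which $\EM$ supplies. For $\phi\dot\lor\psi$, the same strategy reduces the goal to $A\lor B \leftrightarrow (\neg A\to B)$, whose backwards direction needs a case analysis on $A$ provided by $\EM$. For $\dot\exists\phi$, applying the IH under the binder reduces the goal to $(\exists a.\,P\,a)\leftrightarrow \neg\forall a.\neg P\,a$; again only the non-constructive direction is at issue, and a single instance of $\EM$ on $\exists a.\,P\,a$ suffices to obtain a witness from the double negation.

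The main subtlety to watch is that the IH must be stated with $\rho$ universally quantified before starting the induction, so that in the $\dot\forall$ and $\dot\exists$ cases one can apply it at the extended assignment $a;\rho$ for every $a:D$; otherwise one would be stuck trying to move a universal quantifier past the induction hypothesis. Beyond that, nothing is genuinely hard: the proof is essentially three applications of the classical tautologies defining $\land$, $\lor$, and $\exists$ from $\to,\forall,\bot$, all of which are consequences of $\EM$. No substitution lemmas or additional facts about $\MM$ are needed, since the semantic clauses for the $\to,\forall,\bot$-fragment are the same whether read through $\vDash$ or $\vDash^*$.
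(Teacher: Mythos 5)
Your proposal is correct and follows exactly the paper's proof, which is stated in one line as ``by induction on $\phi$ with $\rho$ generalised, using $\EM$ to get from $\phi^M$ to $\phi$''; your elaboration of the three classical tautologies for $\dot\land$, $\dot\lor$, $\dot\exists$ and the need to generalise $\rho$ for the quantifier cases fills in precisely the details the paper leaves implicit.
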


\begin{proofqed}
	By induction on $\phi$ with $\rho$ generalised, using $\EM$ to get from $\phi^M$ to $\phi$.
\end{proofqed}

\begin{corollary}[][DMT_valid]
	\label{corollay:DMT_valid}
	Given \EM, $\TT \vDash \phi$ implies $\TT^M \vDash^* \phi^M$ for all $\TT$ and $\phi$.
\end{corollary}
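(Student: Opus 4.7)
The plan is to deduce the corollary directly from \Cref{lem:DMT_sat}, which already provides the semantic bridge between full-syntax satisfaction and fragment satisfaction via the de Morgan translation. A Tarski model $\MM$ is, by \Cref{coq:interp}, just an interpretation of the signature; the underlying data is identical whether we evaluate fragment formulas via $\vDash^*$ or full formulas via $\vDash$. In particular, the condition ``$\MM$ is classical'', being phrased solely in terms of instances of Peirce's law on the shared $\to$-fragment, transfers verbatim between the two satisfaction relations.

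Given this setup, the argument is short. Assume \EM and $\TT \vDash \phi$. To show $\TT^M \vDash^* \phi^M$, fix a classical model $\MM$ and assignment $\rho$ with $\MM \vDash_\rho^* \TT^M$; we must derive $\MM \vDash_\rho^* \phi^M$. First, for every $\psi \in \TT$ we have $\psi^M \in \TT^M$ and thus $\MM \vDash_\rho^* \psi^M$. Applying \Cref{lem:DMT_sat} to each $\psi$ (in the direction $\vDash^* \phi^M \Rightarrow \vDash \phi$, which is available because we have assumed \EM), we conclude $\MM \vDash_\rho \TT$. Since $\MM$ is classical, the assumption $\TT \vDash \phi$ now yields $\MM \vDash_\rho \phi$. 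A final appeal to \Cref{lem:DMT_sat} in the forward direction produces $\MM \vDash_\rho^* \phi^M$, as required.

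The only substantive point to check is that the two notions of ``classical model'' really coincide on the shared data, and that \Cref{lem:DMT_sat} is applicable uniformly to every $\psi \in \TT$ (not just to a single formula). Both are mild: the former because Peirce's law lives in the fragment and is preserved under the identity embedding $\Form^* \hookrightarrow \Form$, and the latter because \Cref{lem:DMT_sat} is stated for arbitrary $\phi$, $\MM$, and $\rho$. No further assumption beyond \EM is needed.
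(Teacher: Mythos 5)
Your proof is correct and is precisely the argument the paper intends: the corollary follows from \Cref{lem:DMT_sat} by unfolding the definition of $\TT \vDash \phi$, transferring the hypothesis $\MM \vDash_\rho^* \TT^M$ backward through the lemma formula-by-formula (using \EM), and pushing the conclusion forward. Your side remark that classicality of $\MM$ is the same condition for both satisfaction relations, since Peirce's law lives in $\Form^*$, is a valid and worthwhile observation that the paper leaves implicit.
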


Therefore, we can conclude a completeness statement as follows.

\begin{theorem}[][full_completeness]
	\label{thm:full_completeness}
	Given \EM, $\TT\vDash\phi$ implies $\TT\vdash_c\phi$ for closed $\TT$ and $\phi$ in $\Form$.
\end{theorem}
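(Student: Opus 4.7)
The plan is to reduce completeness for the full syntax $\Form$ to the already established completeness for the fragment $\Form^*$, using the de Morgan translation $(\cdot)^M$ as a transport between the two. Given $\TT \vDash \phi$ for closed $\TT$ and $\phi$ in $\Form$, I would first apply \Cref{corollay:DMT_valid} together with the assumed \EM to obtain the fragment-level validity $\TT^M \vDash^* \phi^M$.

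Next, I would invoke fragment-level completeness for arbitrary theories: by \Cref{thm:characterisations}(3), \EM entails this completeness, so from $\TT^M \vDash^* \phi^M$ we obtain $\TT^M \vdash_c^* \phi^M$ in the fragment's classical deduction system. Since we are working in $\Form^*$ here, no extra side conditions beyond what \EM already supplies are needed.

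Finally, I would transport back using \Cref{DM_prv}. Unfolding $\TT^M \vdash_c^* \phi^M$ as the existence of a finite sub-context $\Gamma' \subseteq \TT^M$ with $\Gamma' \vdash_c^* \phi^M$, observe that such a $\Gamma'$ can always be written as $\Gamma^M$ for some finite $\Gamma \subseteq \TT$ (pick a preimage of each formula in $\Gamma'$). Since fragment-deductions embed into the full system, we have $\Gamma^M \vdash_c \phi^M$ and hence, by \Cref{DM_prv} applied to the contexts $\Gamma$ and $\Gamma^M$, $\Gamma \vdash_c \phi$; consequently $\TT \vdash_c \phi$.

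The main subtlety is the last transport step: one must verify that the de Morgan translation interacts well with the identification between theories and their finite sub-contexts, so that any finite subset of $\TT^M$ lifts to a finite subset of $\TT$, and that the two directions of \Cref{DM_prv} compose correctly with the embedding of $\vdash_c^*$ into $\vdash_c$. The semantic and syntactic transports themselves are immediate from the established lemmas, and \EM is used only once, at the initial application of \Cref{corollay:DMT_valid}.
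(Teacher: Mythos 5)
Your proposal is correct and follows exactly the paper's own proof, which is stated as the composition of \Cref{corollay:DMT_valid}, \Cref{thm:characterisations}, and \Cref{DM_prv}. The extra care you take in lifting a finite subcontext of $\TT^M$ back to a finite subcontext of $\TT$ is a legitimate detail of how \Cref{DM_prv} extends from contexts to theories, but it does not change the route.
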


\begin{proofqed}
	By composing \Cref{corollay:DMT_valid}, \Cref{thm:characterisations}, and \Cref{DM_prv}.
\end{proofqed}

Note that this concluding theorem requires full classical logic as analysed before in \Cref{thm:characterisations}.
Moreover, so does the general statement of \Cref{lem:DMT_sat}:

\begin{fact}[][DMT_sat_back]
	If $\MM \vDash_\rho \phi$ iff $\MM \vDash_\rho^* \phi^M$ for all $\Sigma$, $\MM$ and $\rho$, then $\EM$ holds.
\end{fact}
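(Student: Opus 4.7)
The plan is to exploit the fact that under the de Morgan translation, object-level disjunction becomes an implication in the minimal fragment, so that a semantic equivalence between $\MM\vDash_\rho(\phi\dot\lor\psi)$ and $\MM\vDash_\rho^*(\dot\neg\phi^M\dot\to\psi^M)$ collapses meta-level disjunction into meta-level implication. Choosing $\psi:=\dot\neg\phi$ makes the translated formula a trivial tautology while the original reads as an instance of excluded middle.

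Concretely, given an arbitrary $A:\Prop$, I would pick the signature $\Sigma$ with no function symbols and a single $0$-ary predicate symbol $P$, and define the model $\MM$ over some inhabited domain by $P^\MM:=A$. Let $\rho$ be any assignment and take $\phi:=P\dot\lor\dot\neg P:\Form$. Unfolding the Tarski clauses for $\dot\lor$ and $\dot\neg$ in $\MM\vDash_\rho\phi$ gives exactly $A\lor\neg A$. On the other hand, using the definition of the de Morgan translation one computes
$$\phi^M~=~\dot\neg P^M\dot\to(\dot\neg P)^M~=~\dot\neg P\dot\to\dot\neg P,$$
so $\MM\vDash_\rho^*\phi^M$ unfolds to $\neg A\to\neg A$, which holds trivially.

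Applying the assumed equivalence $\MM\vDash_\rho\phi\iff \MM\vDash_\rho^*\phi^M$ in the backwards direction then yields $A\lor\neg A$, establishing $\EM$. I do not expect any real obstacle here: the argument is essentially a single unfolding once the right instance is chosen, and the only mild subtlety is making sure the de Morgan translation is computed correctly on the negated atom (using that $P$ is atomic so $P^M=P$, whence $(\dot\neg P)^M=P\dot\to\dot\bot=\dot\neg P$).
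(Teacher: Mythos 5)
Your proposal is correct and follows essentially the same route as the paper's own proof: instantiate the equivalence at a signature with a single propositional variable interpreted as the given proposition, take $\phi := P\dot\lor\dot\neg P$, and observe that $\phi^M$ reduces to the tautology $\dot\neg P\dot\to\dot\neg P$. The computation of the translation on the negated atom is handled correctly, so there is nothing to add.
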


\begin{proof}
	Given a proposition $P$, we instantiate the assumed equivalence with the signature containing only a single propositional variable $p$, the model $\MM$ on domain $\mathbbm{1}$ interpreting $p$ as $P$, and the constant environment $\rho\,n:=\star$.
	Then the claim $P\lor \neg P$ can be expressed as $\MM \vDash_\rho p\dot{\lor}\dot{\neg}p$.
	By the assumed equivalence, we just need to prove $\MM \vDash_\rho^* (p\dot{\lor}\dot{\neg}p)^M$ which reduces to the tautology $\neg P \to \neg P$.
\end{proof}

However, we suspect that \Cref{corollay:DMT_valid} actually requires only a weaker assumption due to the restriction to classical models in the relation $\TT\vDash\phi$.

Secondly, to extend the completeness results to open theories, we show that the free variables of such
theories may be replaced with fresh constants, thereby closing them, without
changing any of their consequences up to substitution of constants. Note that as
our construction requires $\varphi_\bot$ to be a closed formula, which means
shifting the free variables to guarantee the presence of
countably many unused free variables, such as in~\cite{HerbelinHenkin}, will not
be sufficient.

For this, we take $\Sigma_c$ to be the extension of a signature $\Sigma$ with
countably many new constants $c_{-} : \Nat \to \MCL{F}_{\Sigma_c}$ and $\Uparrow
\,: \Form_\Sigma \to \Form_{\Sigma_c}$ to be the associated embedding of formulas.
We then define a dropping operation $\Downarrow^m : \Form_{\Sigma_c} \to
\Form_{\Sigma}$ which replaces occurrences of $c_n$ with the variable $m + n$.
This index is necessary as variables have to be shifted when moving below
quantifiers to refer to the intended free variable. We can now extend the
constructivised completeness result to open theories and formulas.

\setCoqFilename{GenCompleteness}
\begin{lemma}[][strong_completeness_standard]\label{lem:closing}
  If $\TT \vDash \varphi$ entails $\neg \neg (\TT \vdash_c \varphi)$ for closed
  $\TT$ and $\varphi$, then this can be extended to arbitrary $\TT$ and $\varphi$.
\end{lemma}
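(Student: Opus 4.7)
The plan is to close $\TT$ and $\varphi$ by replacing free variables with fresh constants from the extended signature $\SigmaC$, apply the closed-case hypothesis there, and then drop back to the original signature. Concretely, let $\sigma : \Nat \to \Term_{\SigmaC}$ be defined by $\sigma\,n := c_n$, and set $\TT' := \{\,\siglift{\psi}[\sigma] \mid \psi \in \TT\,\}$ and $\varphi' := \siglift{\varphi}[\sigma]$. Both are closed over $\SigmaC$, so the hypothesis will be applicable once validity has been transported through the embedding.

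First I would show that $\TT \vDash \varphi$ entails $\TT' \vDash \varphi'$ in the extended signature. Given a classical $\SigmaC$-model $\MM$ with $\MM \vDash_\rho \TT'$, form the $\Sigma$-reduct $\MM|_\Sigma$ (which remains classical, since Peirce instances over $\Sigma$ are a subset of those over $\SigmaC$) and the assignment $\rho'\,n := c_n^\MM$. A straightforward induction on $\Sigma$-formulas $\psi$ yields $\MM \vDash_\rho \siglift{\psi}[\sigma]$ iff $\MM|_\Sigma \vDash_{\rho'} \psi$, so $\MM|_\Sigma \vDash_{\rho'} \TT$ and hence $\MM|_\Sigma \vDash_{\rho'} \varphi$, which transports back to $\MM \vDash_\rho \varphi'$. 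The closed-case hypothesis then yields $\neg\neg(\TT' \vdash_c \varphi')$.

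Next I would show that $\TT' \vdash_c \varphi'$ implies $\TT \vdash_c \varphi$, whence the double negation descends by contraposition. Since any deduction mentions only finitely many constants, applying the dropping operation $\sigdrop{\_}$ termwise to the derivation tree replaces each $c_n$ with the variable $n$ and produces a $\Sigma$-derivation. The identity $\sigdrop{\siglift{\psi}[\sigma]} = \psi$ for $\Sigma$-formulas $\psi$ then ensures that the conclusion is indeed $\varphi$ and the assumptions belong to $\TT$.

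The main technical obstacle is the interaction of $\sigdrop{\_}$ with binders: when moving below a quantifier, the shift parameter $m$ in $\sigdropn{m}{\_}$ must be incremented so that a constant $c_n$ always maps back to the intended free variable, accounting for the quantifiers shadowing it. Once this commutation with substitution and with the proof rules --- notably universal introduction, which manipulates shifted contexts --- is verified, the semantic and syntactic transfers combine to give the claimed extension.
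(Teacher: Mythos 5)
Your proposal is correct and follows essentially the same route as the paper: embed into the constant-extended signature $\SigmaC$, close $\TT$ and $\varphi$ by substituting the fresh constants $c_n$ for free variables, apply the closed-case hypothesis, and transport the (doubly negated) derivation back via the dropping operation $\sigdropn{m}{\_}$ together with the identity $\sigdropn{0}{((\siglift{\psi})[c_{-}])}=\psi$. The only cosmetic difference is that you establish the validity transfer in one step by a reduct-and-satisfaction-lemma argument, where the paper factors it into signature lifting followed by weakening of validity under substitution.
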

\begin{proof}
  For this we need the following intermediary facts which are proven per
  induction on the formula and the derivation, respectively.
  \begin{enumerate}[(1)]
  \item For any $\varphi$, $\subst{\varphi}{c_{-}}$ is closed and
    $\sigdropn{0}{( \subst{( \siglift{\varphi} )}{c_{-}} )} = \varphi$
  \item For any $\TT$ and $\varphi$, $\TT \vDash \varphi \to
    \siglift{\TT} \vDash\, \siglift{\varphi}$ and $\varphi$, $\TT \vdash_c \varphi \to
    \sigdropn{m}{\TT} \vdash_c \sigdropn{m}{\varphi}$
  \end{enumerate}
  We may then derive the claim by the following sequence of implications:
  \begin{align*}
    & \TT \vDash \varphi & \\
    \to \quad & \siglift{\TT}\, \vDash \siglift{\varphi} & (2)  \\
    \to \quad & \subst{(\siglift{\TT})}{c_{-}} \vDash \subst{(\siglift{\varphi})}{c_{-}} & \text{Weakening under substitution} \\
    \to \quad & \neg \neg (\subst{(\siglift{\TT})}{c_{-}} \vdash_c \subst{(\siglift{\varphi})}{c_{-}}) & \text{Assumption}, (1) \\
    \to \quad & \neg \neg (\sigdrop{( \subst{(\siglift{\TT})}{c_{-}} )} \vdash_c \sigdrop{( \subst{(\siglift{\varphi})}{c_{-}} )}) & (2) \\
    \to \quad & \neg \neg \TT \vdash_c \varphi & (1)&&{~~~\qed} 
  \end{align*}
\end{proof}

\subsection{Compactness and Weak \Konig's Lemma}
\label{sec:compactness_WKL}
\setCoqFilename{WKL}

We have proved the model existence theorem for classical models fully constructively and deduced completeness of provability in arbitrary theories using $\EM$.
Recall that we defined both the interpretation of atoms in a model and the satisfiability relation to be propositional, as is most natural in our setting.
In classical presentations, defining satisfiability as relation is equivalent to a definition as Boolean function by relying on \EM.
In type theory however, identifying relations with Boolean functions needs choice axioms.
In this section, we analyse the connection between Weak \Konig's Lemma and the model existence theorem w.r.t. models employing Boolean interpretations of symbols and Boolean satisfaction relations.

\begin{definition}[][omniscient]
  We call a classical model \emph{decidable} if its predicate interpretation is decidable, and \emph{omniscient} if the satisfiability relation is.
\end{definition}

\begin{lemma}[][omniscient_to_classical]
  Non-empty omniscient models are decidable.
  Decidable models with finite domain are omniscient.
\end{lemma}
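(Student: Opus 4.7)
The plan is to handle the two implications independently and quite directly, with the key ingredient for each being a small amount of set-up followed by an induction (for the second) or an unfolding (for the first).

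For the first implication, suppose $\MM$ is non-empty and omniscient, and pick a witness $d_0 : D$. To show $P^\MM$ decidable, fix $P : \Preds$ and $\vec a : D^{|P|}$; I define an assignment $\rho : \Nat \to D$ by $\rho\,i := a_i$ for $i < |P|$ and $\rho\,i := d_0$ otherwise. Let $\vec x$ be the vector of the first $|P|$ de Bruijn indices $(0, 1, \dots, |P|-1)$. Then $\hat\rho\,@\,\vec x = \vec a$, so the satisfaction clause for atomic formulas yields that $\MM \vDash_\rho P\,\vec x$ unfolds definitionally to $P^\MM\,\vec a$. Omniscience gives a decision procedure for the former, hence for the latter. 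Non-emptiness is only used to make $\rho$ total; it is unavoidable because with empty $D$ there is no assignment at all, so satisfaction is vacuously decidable while the predicate interpretation is still meaningful on, say, nullary symbols only.

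For the second implication, suppose $\MM$ is decidable with a finite domain $D$, and proceed by induction on $\phi : \sForm$, with $\rho$ generalised. The case $\dot\bot$ is immediate since $\MM \vDash_\rho \dot\bot$ unfolds to the decidable proposition $\bot$. For $P\,\vec t$, the assignment $\rho$ computes a vector $\hat\rho\,@\,\vec t : D^{|P|}$, and decidability of $\MM \vDash_\rho P\,\vec t = P^\MM\,(\hat\rho\,@\,\vec t)$ follows from decidability of $P^\MM$. For $\phi \dot\to \psi$, apply the induction hypothesis to both sides with the same $\rho$; decidable implication is decidable. For $\dot\forall \phi$, the induction hypothesis applied to $(a;\rho)$ gives decidability of $\MM \vDash_{a;\rho} \phi$ for every $a : D$, and since $D$ is finite the universal quantification over $a$ preserves decidability (a standard fact about universal quantification over an enumerable, decidable type).

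There is essentially no obstacle beyond getting the set-up right. The only potentially delicate point is the universal-quantifier case of the second part: one has to know that if $D$ comes equipped with a finite enumeration and $p : D \to \Prop$ is decidable, then $\forall a : D.\,p\,a$ is decidable — but this is a standard consequence of folding the decision procedure over the enumerating list. Correspondingly, in the first part one must remember to supply the default value $d_0$ so that $\rho$ is defined on all of $\Nat$; this is why non-emptiness is listed as a hypothesis.
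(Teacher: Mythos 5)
Your proof is correct; the paper states this lemma without a written proof (it is discharged only in the Coq development), and your argument --- reading off $P^\MM\,\vec a$ as satisfaction of an atom under an assignment built from $\vec a$ and a default element, and conversely deciding satisfaction by induction on the formula using finiteness of $D$ at the quantifier case --- is exactly the expected one. The only caveat is that Section 3.3 actually works with the full syntax (the theory $\TT_n$ uses $\dot\land$, $\dot\lor$, $\dot\neg$), so the induction in your second part should also cover $\dot\land$, $\dot\lor$, and $\dot\exists$; these cases are handled verbatim by closure of decidability under conjunction, disjunction, and finite existential quantification, so nothing essential changes.
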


Neither classicality nor decidability imply omniscience, and nor do they imply each other:
The standard model for (Heyting) arithmetic in type theory is decidable (because equality on natural numbers is), but can neither be proved classical (because type theory is constructive), nor omniscient (because of Hilbert's tenth problem).
A model where the domain ranges over Turing machines for a signature with exactly one unary predicate denoting Turing machine halting is not decidable (and thus not omniscient), but classical under the assumption of $\EM$.

In classical reverse mathematics, where one assumes classical logic but only restricted forms of set existence axioms and induction, it is a well-known theorem that the model existence theorem and the compactness theorem are equivalent to Weak \Konig's Lemma (\WKL)~if satisfiability is a Boolean function~\cite{simpson2009subsystems}, i.e. if model-existence is phrased for omniscient models.
A similar analysis is carried out in publications on constructive reverse mathematics, where one does not assume classical logic, but the axiom of countable or even dependent choice.
However, we are only aware of an analysis for the compactness theorem for propositional logic rather than first-order logic, which is equivalent to \WKL~for decidable trees~\cite{dienerConstructiveReverseMathematics2020}.

We start by deducing the compactness theorem for various classes of models.
The compactness theorem already appeared in Gödel's seminal paper~\cite{GodelCompleteness} and states that a theory $\TT$ has a model if every context $\Gamma \subseteq \TT$ has a model.
It is in fact easy to deduce the compactness theorem for classes of models which are at least classical from model existence:

\begin{theorem}[][modex_compact]
  Let $C$ be a predicate on models s.t. $C \MM$ implies that $\MM$ is classical.
  Then the model existence theorem for models in $C$ implies the compactness theorem for models in $C$.
  Formally, for every signature $\Sigma$:
  \begin{align*}
    &(\forall \TT.\, \TT \not \vdash_c \dot\bot \to \exists \MM.\, C \MM \land \MM \vDash \TT  ) \\ \to &(\forall \TT.\,(\forall \Gamma \subseteq \TT.\, \exists \MM.\, C \MM \land \MM \vDash \Gamma) \to \exists \MM.\, C \MM \land \MM \vDash \TT)
  \end{align*}
\end{theorem}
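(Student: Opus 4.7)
The plan is to reduce the compactness statement directly to the hypothesised model existence theorem by establishing the consistency of $\TT$. Concretely, assume the premise that every finite subcontext $\Gamma \subseteq \TT$ admits a model in $C$. To apply the model existence hypothesis to $\TT$ itself, I need only verify $\TT \not\vdash_c \dot\bot$; the existence of a model in $C$ for $\TT$ then follows immediately.

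The consistency of $\TT$ I would establish by contradiction in the following way. Suppose $\TT \vdash_c \dot\bot$. By definition of provability from a theory, there must be a finite context $\Gamma \subseteq \TT$ with $\Gamma \vdash_c \dot\bot$. Applying the premise to this particular $\Gamma$ yields a model $\MM$ with $C\,\MM$ and $\MM \vDash \Gamma$. Since $C\,\MM$ implies by assumption that $\MM$ is classical, I may invoke the soundness result for classical deduction (\Cref{tarski_soundness}) to transport $\Gamma \vdash_c \dot\bot$ to $\Gamma \vDash \dot\bot$, and in particular to $\MM \vDash_\rho \dot\bot$ for any assignment $\rho$. Unfolding the Tarski interpretation, $\MM \vDash_\rho \dot\bot$ is by definition $\bot$, which delivers the required contradiction and establishes $\TT \not\vdash_c \dot\bot$.

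With consistency in hand, the model existence hypothesis applied to $\TT$ directly produces the desired $\MM$ with $C\,\MM$ and $\MM \vDash \TT$, completing the derivation of compactness.

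I do not expect any serious obstacle: every ingredient is already in place. The one point that needs a moment of care is ensuring that the soundness result is applicable to models in $C$ — this is exactly what the hypothesis $C\,\MM \to \MM$ classical is used for, and why the theorem is formulated only for classes $C$ of classical models. Note also that the argument is entirely constructive in that the final extraction of a model uses only the given (constructive) model existence assumption, and the consistency step is an unconditional derivation of $\neg(\TT \vdash_c \dot\bot)$ rather than a double-negation elimination.
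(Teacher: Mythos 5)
Your proposal is correct and follows essentially the same route as the paper: compactness is reduced to the model existence hypothesis by establishing $\TT \not\vdash_c \dot\bot$, which in turn follows because any finite $\Gamma \subseteq \TT$ deriving $\dot\bot$ would, via the premise and soundness (\Cref{tarski_soundness}) applied to the resulting classical model, yield $\bot$. Your additional remarks on constructivity and on why $C$ must consist of classical models are accurate but not needed beyond what the paper records.
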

\begin{proofqed}
  Let $\TT$ be a theory.
  It suffices to prove that $\MM \vDash \Gamma$ for $\Gamma\subseteq \TT$ and $C \MM$ implies $\Gamma \not\vdash_c \dot\bot$, which follows directly using \Cref{tarski_soundness}.
\end{proofqed}

\begin{corollary}[][compact_standard]
  If every $\Gamma \subseteq \TT$ has a classical model, $\TT$ has a classical model.
\end{corollary}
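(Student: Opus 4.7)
The plan is a direct application of the preceding Theorem (modex\_compact) with $C\MM$ taken to be the predicate ``$\MM$ is classical''. This choice trivially satisfies the side condition that $C\MM$ implies $\MM$ is classical, so the corollary reduces to establishing the model-existence hypothesis, namely $\forall\TT.\,\TT\not\vdash_c\dot\bot \to \exists \MM.\,\MM \text{ classical} \land \MM\vDash\TT$.

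For closed $\TT$, this is precisely \Cref{thm:model}, which constructs the classical syntactic model $\MM_{\TT'}$ from the Henkin-style extension $\TT'$. For theories containing free variables, I would reuse the closing construction underlying \Cref{lem:closing}: lift $\TT$ via $\siglift{}$ into the extended signature $\Sigma_c$, substitute fresh constants $c_{-}$ for the free variables to obtain a closed, still-consistent $\Sigma_c$-theory, invoke the closed case of model existence there, and reinterpret the constants $c_n$ as the intended values to recover a $\Sigma$-model witnessing $\MM\vDash\TT$ under the desired environment. Preservation of (in)consistency along the substitution and dropping operations is the content of item (2) of the proof of \Cref{lem:closing}, so no new syntactic lemmas are needed.

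With model existence for classical models in hand, Theorem (modex\_compact) delivers the compactness conclusion immediately: from the assumption that every finite $\Gamma \subseteq \TT$ has a classical model one infers, via soundness (\Cref{tarski_soundness}) applied to a hypothetical derivation $\TT \vdash_c \dot\bot$ (which uses only finitely many assumptions), that $\TT$ is consistent, and hence model existence produces the desired classical model of $\TT$.

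The main obstacle is not the logical content, which is essentially a one-line instantiation, but the bookkeeping in the free-variable case: the chain of shifts and substitutions has to be threaded so that the $\Sigma_c$-model produced by the closed case projects back to a genuine $\Sigma$-model of $\TT$. Once that detail is handled, the corollary follows.
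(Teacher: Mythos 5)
Your proposal is correct and matches the paper's intended argument: the corollary is exactly the instantiation of \Cref{coq:modex_compact} with $C\,\MM$ taken to be ``$\MM$ is classical'', discharging the model-existence hypothesis via \Cref{thm:model}. Your additional attention to open theories (closing via the $\Sigma_c$-construction of \Cref{lem:closing}) is the right instinct for the case where $\TT$ is not assumed closed, but it does not change the route of the proof.
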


We continue by proving that compactness for decidable models implies $\WKL$.
We introduce $\WKL$ formally:

\newcommand\tree\tau
\begin{definition}[][WKL]
  A \emph{binary tree} is a non-empty and prefix-closed predicate\\ $\tree : \List(\Bool) \to \Prop$, i.e.
  $\tau \,[]$ holds and $\tree\, v$ implies $\tree\, u$ for all prefixes $u$ of $v$.

  A binary tree $\tau$ is \emph{infinite} if $\forall k.\, \exists u.\, \tau u \land | u | \geq k$ and $\tau$ has an \emph{infinite path} if $\exists f : \Nat \to \Bool.\, \forall n.\, \tau\, [f\, 0, \dots, f\, n]$.

  $\WKL$ states that every infinite binary tree has an infinite path.
  $\WKL_{\mathcal{D}}$ states that every \emph{decidable} infinite binary tree has an infinite path.
\end{definition}

Note that in the context of constructive reverse mathematics (e.g. in~\cite{dienerConstructiveReverseMathematics2020}) $\WKL$ is only stated for decidable trees.
We however need both notions and thus distinguish them by an index.

$\WKL$ is a consequence of compactness for decidable models.
The proof is essentially the same as the one for propositional logic and $\WKL_{\mathcal{D}}$ by Diener~\cite{dienerConstructiveReverseMathematics2020} and the one for first-order logic using the classical base theory $\mathsf{RCA}_0$ by Simpson~\cite{simpson2009subsystems}.
\newcommand{\filter}[3]{#2 \subseteq_{#1} #3}
Intuitively, given a tree $\tau$, one can construct a formula $\phi_n$ over the siganture $\Sigma_{\Nat}$ which is satisfiable iff $\tau$ contains an element of length $n$.

\begin{definition}[][count_sig]
  We define the signature $\Sigma_{\Nat} := (\bot, \Nat)$ with constant arity $0$, i.e. no term symbols and countably many propositions $P_i$.
\end{definition}

\begin{fact}[][listable_list_length]
  There is a function $L_{-} : \Nat \to \List(\List (\Bool))$ s.t. $\forall l : \List (\Bool).\, |l| = n \leftrightarrow l \in L_n$.
\end{fact}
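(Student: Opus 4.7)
The plan is to construct $L_n$ by recursion on $n$, building each $L_{n+1}$ by prepending both Boolean values to every list in $L_n$. Concretely, I set
\[
L_0 := [\,[\,]\,] \qquad L_{n+1} := \mathsf{map}\,(\lambda l.\,\btrue :: l)\, L_n \;\app\; \mathsf{map}\,(\lambda l.\,\bfalse :: l)\, L_n.
\]
Both operations are computable functions on concrete inductive data (lists of Booleans), so this defines $L_{-} : \Nat \to \List(\List(\Bool))$ directly in \textsf{pCuIC}.

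For the correctness statement, I would prove the two directions by induction on $n$ with $l$ generalised. The forward direction ($l \in L_n \to |l| = n$) is immediate from the definition: for $n = 0$, membership in $[\,[\,]\,]$ forces $l = []$, which has length $0$; for $n+1$, membership in $L_{n+1}$ yields some $l' \in L_n$ with $l = b :: l'$ for $b \in \{\btrue, \bfalse\}$, so the inductive hypothesis gives $|l'| = n$ and hence $|l| = n+1$. The backward direction ($|l| = n \to l \in L_n$) proceeds similarly by case analysis on $l$: if $l = []$ then $n = 0$ and $l \in L_0$, otherwise $l = b :: l'$ with $|l'| = n-1$ (for $n \geq 1$), so the inductive hypothesis yields $l' \in L_{n-1}$ and hence $l \in L_n$ by construction.

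There is no real obstacle here; the only subtlety is ensuring the definition matches the intended semantics of ``all Boolean lists of length $n$'' without duplication or omission, which the concatenation-of-maps construction handles by exhausting the two possible head values. The correctness proof is a straightforward double induction that could be discharged automatically in Coq by standard list and arithmetic tactics.
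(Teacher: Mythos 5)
Your construction and correctness argument are correct; the paper states this fact without proof in the text, and the recursive definition $L_0 := [[\,]]$, $L_{n+1} := \mathsf{map}\,(\btrue :: \cdot)\,L_n \app \mathsf{map}\,(\bfalse :: \cdot)\,L_n$ with the double induction you describe is exactly the standard argument one would (and the mechanisation does) use. No issues.
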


For example the tree $\tau l := l = [] \lor \exists l'.~l = \btrue :: l'$ contains the elements $$[\btrue, \btrue, \btrue], [\btrue, \btrue, \bfalse],[\btrue, \bfalse, \btrue], [\btrue, \bfalse, \bfalse]$$ of length $3$ and we build the formula
$$\phi_3 := (P_0 \dot\land P_1 \dot\land P_2) \dot\lor (P_0 \dot\land P_1 \dot\land \dot\neg P_1) \dot\lor (P_0 \dot\land \dot\neg P_1 \dot\land P_2) \dot\lor (P_0 \dot\land \dot\neg P_1 \dot\land \dot\neg P_1).$$

Now for an infinite tree every context in the theory $\TT := \{ \phi_n \mid n : \Nat \}$ is satisfiable by an (omniscient) model, because the tree is infinite.
By compactness, the interpretation of $P_i$ in a decidable model for the whole theory yields an infinite path through $\tau$.

Since trees are not necessarily decidable predicates, it is not possible to construct a list of elements up to length $n$ explicitly, and thus not possible to construct $\phi_n$ explicitly.
However, we can prove the double-negation of the existence of such lists using a filtering predicate:

\begin{definition}[][Is_filter]
We define the filtering $L'$ of a list $L$ under a predicate $P$:
\[\infer{\filter P {[]}{[]}}{~}\hspace{3em}
  \infer{\filter P {x :: L} {x :: L'}}{P x \quad \filter P L L'} \hspace{3em}
  \infer{\filter P {x :: L} {L'}}{\neg P x \quad \filter P L L'}
\]
\end{definition}

\begin{fact}[][Is_filter_exists]
$\forall L.\neg\neg \exists L'. \filter P L {L'}$
\end{fact}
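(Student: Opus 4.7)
The plan is to prove the statement by induction on $L$, using the fact that the double negation monad allows case analysis on arbitrary propositions even without decidability. The only ingredient from classical reasoning we need is $\neg\neg(Px \lor \neg Px)$, which is a constructive theorem (it follows directly from $\lambda h.\,h(\mathrm{inr}(\lambda p.\,h(\mathrm{inl}\,p)))$).

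For the base case $L = []$, the existence is immediate: the empty list satisfies $\filter{P}{[]}{[]}$ by the first rule of the inductively defined filtering relation, so $\exists L'.\,\filter{P}{[]}{L'}$ holds directly (and in particular after double negation).

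For the inductive step $L = x :: L_0$, the induction hypothesis gives us $\neg\neg\exists L_0'.\,\filter{P}{L_0}{L_0'}$. Assume toward contradiction that $\neg\exists L'.\,\filter{P}{x :: L_0}{L'}$. Using the IH, we may assume under the double negation that some concrete $L_0'$ with $\filter{P}{L_0}{L_0'}$ is given. We then invoke $\neg\neg(Px \lor \neg Px)$ to split into two cases: if $Px$, then $x :: L_0'$ is a filtering of $x :: L_0$ by the second rule; if $\neg Px$, then $L_0'$ itself is a filtering of $x :: L_0$ by the third rule. Either case contradicts our assumption that no such $L'$ exists.

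There is no real obstacle here; the proof is a textbook application of the double-negation monad, the point being that although we cannot in general compute the filtered list $L'$ constructively (since $P$ need not be decidable), we can nevertheless prove its existence under double negation. This is exactly what subsequent uses of the fact will need: combined with other double-negated statements, the filtered list will eventually be used to construct the formula $\phi_n$ mentioned in the preceding discussion without requiring $P$ to be decidable.
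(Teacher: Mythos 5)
Your proof is correct and is exactly the intended argument: induction on $L$, with the constructively valid $\neg\neg(P\,x \lor \neg P\,x)$ supplying the case split inside the double-negation monad. The paper states this fact without proof, but your argument matches what the mechanisation does and there is nothing to add.
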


Given a tree $\tau$ we can define the (singleton) theory $\TT_n$ where all elements have the shape of $\phi_n$ described above:
$$\TT_n := \left\{\phi \mid \exists L. \filter \tree {L_n} L \land \phi = \bigvee \left[ \bigwedge \left[ P_i^{(b)} \mid b \in l \textit{ at position } i \right]\mid l \in L \right]\right\}$$ where $P_i^{(b)} := P_i $ if $b = \btrue$ and $P_i^{(b)} := \dot\neg P_i$  if $b = \bfalse$.

This preparation now suffices to prove the following central Lemma:

\begin{lemma}[][compact_implies_WKL]
  Given a tree $\tree$ one can construct a theory $\mathcal T$ over $\Sigma_{\Nat}$ s.t.
  \begin{enumerate}
    \coqitem[infinite_finitely_satisfiable] If $\tree$ is infinite, all $\Gamma \subseteq \mathcal T$ have an omniscient model.
    \coqitem[decidable_to_decidable] If $\tree$ is infinite and decidable, $\mathcal T$ is decidable.
    \coqitem[exists_quasi_path] If $\mathcal T$ has a decidable model, $\exists f : \Nat \to \Bool.\forall n. \neg\neg \tau [f\,0,\dots,f\,n]$.
  \end{enumerate}
\end{lemma}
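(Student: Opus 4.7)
The plan is to take $\mathcal{T} := \bigcup_n \mathcal{T}_n$ with $\mathcal{T}_n$ as defined in the text preceding the lemma, and address the three claims in turn. The design of $\mathcal{T}_n$ cleanly encodes ``$\tau$ has an element of length $n$'' as a single propositional formula over $\Sigma_\Nat$, so most of the work lies in bridging between filtered lists and the interpretations $P_i^\MM$.

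For (1), let $\Gamma \subseteq \mathcal{T}$ be finite. Each $\phi \in \Gamma$ comes with a witnessing length $n_\phi$ and a filtering $L_\phi$ of $L_{n_\phi}$ by $\tau$; let $N := \max_\phi n_\phi$. Using infinity of $\tau$, extract some $u$ with $\tau\,u$ and $|u| = N$ (truncating a longer witness by prefix-closure), and define a model $\MM$ on the singleton domain $\mathbbm{1}$ by $P_i^\MM := u_i$ for $i < N$ and $\bot$ otherwise. Each $P_i^\MM$ is either $\top$ or $\bot$, so $\MM$ is decidable, and its finite domain makes it omniscient (and hence classical, since Peirce's law holds on decidable propositions) by the second part of the omniscience lemma. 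To see $\MM \vDash \phi$ for each $\phi \in \Gamma$, note that the prefix of $u$ of length $n_\phi$ lies in $L_{n_\phi}$ and in $\tau$ (by prefix-closure), hence by the functionality of filtering it lies in $L_\phi$, so the corresponding disjunct of $\phi$ evaluates to $\top$ under $\MM$.

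For (2), assume $\tau$ is decidable. We decide $\phi \in \mathcal{T}$ as follows: inspect $\phi$ for the canonical syntactic shape of a $\mathcal{T}_n$-formula, reading off a candidate length $n$ and list $L'$, and reject if the shape does not match. Using decidability of $\tau$ we compute the unique filtering $L$ of $L_n$ and check $L' = L$ via discreteness of $\List(\List(\Bool))$. Since $\mathcal{T}_n$ contains exactly the one formula built from this unique $L$, this yields a decision procedure for $\mathcal{T}$.

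For (3), let $\MM$ be a decidable model of $\mathcal{T}$ and set $f\,i$ to be the Boolean witness of $P_i^\MM$ obtained from decidability. Fix $n$ and assume for contradiction that $\neg\,\tau[f\,0,\dots,f\,n]$. By the fact that filterings exist double-negatively, $\neg\neg \exists L.\ \filter{\tau}{L_{n+1}}{L}$, so under our contradictory goal we may extract $L$ and form the associated $\phi_L \in \mathcal{T}_{n+1} \subseteq \mathcal{T}$. Since $\MM \vDash \phi_L$, some $l \in L$ has its conjunction satisfied, pinning down $l_i = P_i^\MM = f\,i$ for each $i \leq n$, i.e.\ $l = [f\,0,\dots,f\,n]$; combined with the filtering relation this gives $\tau\,l$, contradicting the assumption. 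The main obstacle I expect is the combinatorial bookkeeping in (1) and (3), in particular establishing uniqueness of filtering and the exact correspondence between the satisfied disjunct and the string of $f$-values; once these are nailed down each step is a direct calculation.
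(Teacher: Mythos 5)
Your proposal is correct and follows essentially the same route as the paper: the same theory $\TT = \bigcup_n \TT_n$, the same finite omniscient models read off from a sufficiently long tree element for (1), the same parsing/filter-comparison argument for (2), and the same combination of the double-negated filtering existence with the Boolean path $f\,i := $ (witness of $P_i^\MM$) for (3). The only deviations are cosmetic — you truncate $u$ and default out-of-range atoms to $\bot$ where the paper defaults them to $\top$, and the property you call ``functionality of filtering'' is really the fact that filtering retains exactly the elements satisfying the predicate — neither of which affects the argument.
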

\begin{proofqed}
  Let $\tree$ be given.
  We define the theory $\TT := \{\phi \mid \exists n.\,P\,n\,\phi\}$

  Given $u : \List(\Bool)$ we can define the omniscient model $\MM_u$ which satisfies the atom $a_i$ if $i > |u|$ or the $i$-th element of $u$ is $\btrue$.

  We need the following lemmas:
  \begin{enumerate}[(a)]
    \coqitem[get_index_list] $\forall \Gamma \subseteq \TT.\exists L : \List(\Nat).\forall \phi \in \Gamma.\exists n \in L. \phi \in \TT_n$. 
    \coqitem[phi_down] For omniscient models $\MM$ and $\phi_n \in \TT_n$, $\phi_m \in \TT_m$, and $n \geq m$ we have $\rho \vDash_{\MM} \phi_n \to \rho \vDash_{\MM} \phi_m$.
    \coqitem[model_u] For all $u$ s.t. $\tree u$ we have $\forall n \leq |u|.\forall \phi.\, \phi \in \TT_n \to \MM_u \vDash \psi$.
    \coqitem[phi_exists] $\forall n. \neg\neg\exists \phi.\, \phi \in \TT_n$.
    \coqitem[exists_quasi_path] If $\TT$ has a decidable model $\MM$, the function $f i := \btrue$ if $\MM \vDash P_i$ and $f i := \bfalse$ otherwise fulfills $\forall n \phi.\, \TT_n\, \phi \to T [f\,0,\dots,f\,n]$.
  \end{enumerate}

  The proof of (a) is by induction on $\Gamma$.
  The proof of (b) is technical but not hard.
  The proof of (c) is straightforward using (b).
  The proof of (d) uses \Cref{coq:Is_filter_exists}.
  The proof of (e) is again technical but not hard.
  
  Now for claim (1) let $\tree$ be infinite and $\Gamma \subseteq \TT$.
  We use (a) and compute the maximum $m$ of $L$.
  By infinity of $\tree$ there is $u$ s.t.\ $\tree u$ and $|u| \geq m$ and by (c) $\MM_u$ satisfies all $\phi \in \Gamma$.

  Claim (2) is by computing the filtering of $L_n$ using the decider for $\tree$.

  Claim (3) is immediate from (d) and (e).
\end{proofqed}

\begin{corollary}[][compact_implies_WKL]
  Given $\EM$, compactness for decidable models implies $\WKL$.
\end{corollary}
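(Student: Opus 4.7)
The plan is to chain together the three clauses of the preceding lemma, the compactness theorem for decidable models (instantiated with $C\MM := \MM \text{ is decidable}$), the fact that non-empty omniscient models are decidable, and a single application of $\EM$ at the very end to strip a double negation.

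Concretely, let $\tau$ be an arbitrary (not necessarily decidable) infinite binary tree and let $\TT$ be the theory over $\Sigma_\Nat$ produced by the previous lemma. By clause (1), every finite context $\Gamma \subseteq \TT$ is satisfied by some omniscient model $\MM_\Gamma$. These models are non-empty (the construction $\MM_u$ in the lemma's proof uses any fixed non-empty domain, e.g. $\mathbbm{1}$, which suffices since $\Sigma_\Nat$ has no function symbols), so by the lemma stating that non-empty omniscient models are decidable, each $\MM_\Gamma$ is in fact decidable. Since decidable models are classical by definition, the compactness theorem for decidable models applies to $\TT$ and yields a single decidable model $\MM$ with $\MM \vDash \TT$.

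Applying clause (3) to this decidable model of $\TT$ produces a function $f : \Nat \to \Bool$ with $\forall n.\,\neg\neg \tau\,[f\,0,\dots,f\,n]$. To obtain $\WKL$ we must upgrade this to $\forall n.\, \tau\,[f\,0,\dots,f\,n]$. Here $\EM$ enters: for each $n$, we case-split on $\tau\,[f\,0,\dots,f\,n] \lor \neg\tau\,[f\,0,\dots,f\,n]$; the right disjunct directly contradicts the double-negation statement, so the left disjunct holds. Hence $f$ is an infinite path through $\tau$, as required.

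The only genuinely non-routine step is the very last one, namely the use of $\EM$ to strip the double negation — this is exactly why $\EM$ is assumed and why clause (3) of the lemma is phrased with $\neg\neg$ rather than a clean $\tau\,[f\,0,\dots,f\,n]$. Everything else is bookkeeping: verifying non-emptiness so that omniscience upgrades to decidability, and then feeding the family of decidable models into compactness. Note also that the extra strength of $\WKL$ over $\WKL_{\mathcal D}$ is paid for precisely by $\EM$; indeed, if $\tau$ were assumed decidable, clause (2) would let us avoid $\EM$ at the end and reproduce the classical equivalence of $\WKL_{\mathcal D}$ with compactness for propositional logic from~\cite{dienerConstructiveReverseMathematics2020}.
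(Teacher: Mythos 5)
Your proof is correct and matches the intended argument: the paper leaves this corollary as an immediate consequence of the preceding lemma, chaining clause (1), compactness for decidable models (with omniscience upgraded to decidability via non-emptiness of $\MM_u$), and clause (3), with $\EM$ used only to strip the final double negation. The one quibble is in your closing aside: for $\WKL_{\mathcal D}$ it is the decidability of $\tau$ itself (making $\tau\,u$ stable) that removes the $\neg\neg$ at the end, whereas clause (2) serves the separate purpose of making the theory $\TT$ decidable so that compactness restricted to decidable theories suffices.
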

\begin{corollary}[][compact_implies_WKL_D]
  Compactness for decidable models implies $\WKL_{\mathcal D}$, even if compacteness is only assumed for decidable theories.
\end{corollary}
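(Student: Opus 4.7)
The plan is to rerun the argument of the preceding corollary on the given tree $\tau$, but use decidability of $\tau$ twice so as to avoid the global appeal to $\EM$: once to make the induced theory decidable (so that the restricted compactness hypothesis applies), and once to turn the resulting ``quasi-path'' into a genuine infinite path.

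Concretely, I would let $\tau$ be a decidable infinite binary tree and construct the theory $\TT$ over $\Sigma_{\Nat}$ provided by \Cref{coq:compact_implies_WKL}. By part (2) of that lemma the theory $\TT$ is decidable, so the weakened compactness hypothesis (for decidable theories only) is in principle applicable. Part (1) yields, for every finite $\Gamma \subseteq \TT$, an omniscient model $\MM_u$, and by \Cref{coq:omniscient_to_classical} any such (manifestly non-empty) omniscient model is decidable. Feeding $\TT$ together with this family of decidable models into compactness for decidable models then produces a decidable model $\MM$ of $\TT$.

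Part (3) of \Cref{coq:compact_implies_WKL} now supplies a function $f : \Nat \to \Bool$ with $\forall n.\, \neg\neg\, \tau\,[f\,0, \dots, f\,n]$. Since $\tau$ is decidable, each proposition $\tau\,[f\,0, \dots, f\,n]$ is decidable and hence stable, so the double negations may be eliminated pointwise to give $\forall n.\, \tau\,[f\,0, \dots, f\,n]$, i.e.\ an infinite path, which establishes $\WKL_{\mathcal D}$.

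The main thing to watch is that neither step silently requires $\EM$: for the compactness step we must feed in the \emph{decidable} theory provided by part (2), rather than needing classical logic to upgrade an arbitrary theory, and for path extraction we rely on decidability of $\tau$ itself rather than on the classical stability of arbitrary propositions used in the proof of \Cref{coq:compact_implies_WKL} above. Once those two uses of decidability are placed correctly, the rest is a routine chaining of the three parts of \Cref{coq:compact_implies_WKL}.
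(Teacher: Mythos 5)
Your proposal is correct and matches the intended derivation: the paper's corollary follows from the central lemma exactly as you describe, with part (2) making the restricted compactness hypothesis applicable and decidability of $\tau$ replacing the appeal to $\EM$ when eliminating the double negations in the quasi-path from part (3). The one point you rightly flag — that the omniscient models from part (1) must be non-empty to be decidable via \Cref{coq:omniscient_to_classical} — is indeed satisfied by the construction of $\MM_u$.
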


Note that since compactness for enumerable theories implies compactness for decidable theories, the latter implication also holds for compactness w.r.t.\ enumerable theories.
It seems however that this proof cannot be directly strengthened to also yield $\WKL$ for enumerable trees.

As a last step we prove that $\WKL$ makes every logically decidable predicate on data types decidable:

\begin{lemma}[][WKL_to_decidable]
  Let $p$ s.t. $\forall n : \Nat.\, p\, n \lor \neg p\, n$.
  Then $\WKL$ implies that $p$ is decidable.
\end{lemma}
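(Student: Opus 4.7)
The plan is to encode the putative characteristic function of $p$ as an infinite path through a suitably chosen binary tree and then extract a decider using \WKL. Concretely, I define
\[
  \tau\,u ~:=~ \forall i < |u|.\,(u[i] = \btrue \to p\,i) \land (u[i] = \bfalse \to \neg p\,i).
\]
This is clearly non-empty (as $\tau\,[]$ holds vacuously) and prefix-closed (since the conjunction over $i$ is inherited), so $\tau$ is a binary tree in the sense of \Cref{coq:WKL}.

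Next I would show that $\tau$ is infinite. By induction on $k$, I prove $\exists u.\,\tau\,u \land |u| = k$ at the propositional level. The base case $k=0$ takes $u = []$. For the step, assume some $u$ of length $k$ with $\tau\,u$; by the logical decidability $p\,k \lor \neg p\,k$ I can propositionally choose $u' := u \app [\btrue]$ in the first case and $u' := u \app [\bfalse]$ in the second, and verify $\tau\,u'$ in either case using $\tau\,u$ for indices below $k$ and the respective disjunct for $i = k$. This uses disjunction elimination into a proposition, which is admissible in $\Prop$, so no choice axiom creeps in.

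Now \WKL applied to $\tau$ yields a function $f : \Nat \to \Bool$ such that $\tau\,[f\,0,\dots,f\,n]$ holds for every $n$. I claim $f$ is a decider for $p$, i.e.\ $p\,n \leftrightarrow f\,n = \btrue$. The forward direction of $f\,n = \btrue \to p\,n$ is immediate from the first conjunct of $\tau$. For the converse, suppose $p\,n$; since $\Bool$ is discrete, either $f\,n = \btrue$ (done) or $f\,n = \bfalse$, in which case the second conjunct of $\tau\,[f\,0,\dots,f\,n]$ gives $\neg p\,n$, contradicting the hypothesis.

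The main subtlety is ensuring that the tree-infiniteness argument stays purely propositional: we must not try to compute the witnessing $u$ as data, since that would amount to deriving decidability of $p$ directly from the logical decidability hypothesis (which is precisely what \WKL is needed for). It is also essential that \WKL is used here in its general form rather than $\WKL_{\mathcal D}$, because $\tau$ is defined in terms of the not-yet-decidable predicate $p$ and therefore cannot itself be shown decidable.
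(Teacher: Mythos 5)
Your proposal is correct and follows essentially the same route as the paper: build the tree of prefixes of deciders for $p$ (your two-implication formulation of $\tau$ is equivalent to the paper's $b = \btrue \leftrightarrow p\,i$), prove infiniteness by a purely propositional existence argument using $p\,k \lor \neg p\,k$, and read off a decider from the infinite path supplied by $\WKL$. The two subtleties you flag --- that the witnessing lists exist only propositionally, and that the general $\WKL$ rather than $\WKL_{\mathcal D}$ is needed since $\tau$ is not yet known to be decidable --- are exactly the points the paper's proof emphasises.
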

\begin{proofqed}
  We define a tree $\tree$ which contains prefixes of a decider for $p$ as $\tree\, u := \forall b \in u \text{ at position } i.\, b = \btrue \leftrightarrow p\, i$.
  Now $\tree$ is infinite because we can prove the \emph{existence} of lists of arbitrary length in $\tree$.
  An infinite path through $\tau$ decides $p$.
\end{proofqed}
\begin{corollary}[][CO_iff_EM_WKL]
  $\EM \land \WKL$ implies that every predicate on data types is decidable.
\end{corollary}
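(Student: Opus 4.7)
The plan is to reduce the statement to the preceding lemma (\Cref{coq:WKL_to_decidable}) by transferring from predicates on an arbitrary data type $X$ to predicates on $\Nat$. Let $p : X \to \Prop$ be given and let $e : \Nat \to \Opt(X)$ be the enumeration of $X$; discreteness of $X$ lifts to decidable equality on $\Opt(X)$, which is what we will actually use.

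First, I would introduce a companion predicate $p' : \Nat \to \Prop$ defined by $p'\, n := \forall x.\, e\, n = \some{x} \to p\, x$, so that $p'\, n$ is vacuous when $e\, n = \emptyset$ and equivalent to $p\, x$ when $e\, n = \some{x}$ (using injectivity of $\some{\cdot}$). Invoking $\EM$ to decide $p\, x$ together with the decidable equality on $\Opt(X)$ yields $\forall n.\, p'\, n \lor \neg p'\, n$. Applying \Cref{coq:WKL_to_decidable} under $\WKL$ then produces a Boolean decider $g : \Nat \to \Bool$ with $\forall n.\, p'\, n \leftrightarrow g\, n = \btrue$.

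Second, I would transport $g$ back to a decider for $p$. Given $x : X$, enumerability provides some $n$ with $e\, n = \some{x}$; since this existential concerns the type-level decidable predicate $\lambda n.\, e\, n = \some{x}$ on $\Nat$, constructive unbounded search produces a concrete witness $n_x$. Setting $f\, x := g\, n_x$ then yields the required decider, because $p\, x \leftrightarrow p'\, n_x \leftrightarrow g\, n_x = \btrue$ by construction of $p'$ and $g$.

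The whole argument is essentially routine once the previous lemma is in hand; the only delicate point is the extraction of $n_x$ from a propositional existence statement, which is standard for type-level decidable predicates on $\Nat$ and so introduces no extra assumptions beyond $\EM$ and $\WKL$.
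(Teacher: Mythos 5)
Your proposal is correct and follows the route the paper intends (it states the corollary without proof, as an immediate consequence of the preceding lemma): $\EM$ supplies the hypothesis $\forall n.\, p'\,n \lor \neg p'\,n$, and the passage between an arbitrary data type and $\Nat$ via the enumeration $e$ — with witness extraction by unbounded search over the decidable predicate $\lambda n.\, e\,n = \some{x}$ — is exactly the standard reduction left implicit there. The only remark is that $\EM$ applied directly to the proposition $p'\,n$ already gives its logical decidability, so the detour through decidable equality on $\Opt(X)$ in that step is unnecessary (though harmless); the discreteness of $X$ is genuinely needed only for the witness search.
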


\begin{lemma}[][WKL_implies_modex]\label{lem:WKL_to_modex}
  If every predicate on data types is decidable, model existence for omniscient models holds.
\end{lemma}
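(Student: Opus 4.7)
My plan is to re-use the Henkin-style syntactic model $\MM_{\TT'}$ that underlies \Cref{thm:model} and show that, under the decidability assumption, it is automatically omniscient. Thus no new construction is needed; only a re-analysis of the existing one.

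First, I would unfold the target: given a consistent theory~$\TT$, after closing it via \Cref{lem:closing} one may assume $\TT$ is closed. Apply \Cref{lem:extension} with $\phi_\bot := \dot\bot$ to obtain the maximal consistent extension $\TT' \supseteq \TT$, and form the syntactic model $\MM_{\TT'}$ on the domain $\Term$ via $P^{\TT'} \vec t := (P\,\vec t \in \TT')$. Recall that the central invariant of this construction is the coincidence lemma $\MM_{\TT'} \vDash_\sigma \phi \iff \phi[\sigma] \in \TT'$, proved by induction on~$\phi$ from the four closure properties of $\TT'$ stated in \Cref{lem:extension}. That $\MM_{\TT'}$ is classical and satisfies $\TT$ is exactly as in \Cref{thm:model}, so the remaining task is to verify decidability and omniscience.

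For decidability of the predicate interpretation, observe that by \Cref{fact:data} the types $\Term$ and $\Form$ are data types, hence so is each product $\Term^{|P|}$ and so is $\Form$ itself. The predicate $\TT'$ of \Cref{lem:extension}, viewed as a predicate on the data type $\Form$, is therefore decidable by the hypothesis that all predicates on data types are decidable. Consequently, $\vec t \mapsto (P\,\vec t \in \TT')$ is a decidable predicate on $\Term^{|P|}$, which shows that $\MM_{\TT'}$ is decidable. For omniscience, the coincidence lemma reduces $\MM_{\TT'} \vDash_\sigma \phi$ to $\phi[\sigma] \in \TT'$ for each fixed assignment~$\sigma$; since substitution is a function into the data type $\Form$, this is again a decidable predicate of $\phi$, giving omniscience.

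The main conceptual point, and the only place where care is needed, is that decidability of satisfaction is \emph{not} established inductively on the formula structure (which would fail at the $\dot\forall$ case, since we cannot in general decide a universal statement over the infinite domain $\Term$); instead, it is obtained in one shot via the coincidence lemma, pushing the burden entirely onto the decidability of membership in~$\TT'$. Once that bookkeeping is handled, the lemma follows directly, and closing open theories via \Cref{lem:closing} is the only extra step needed to match the formal statement of model existence for arbitrary theories.
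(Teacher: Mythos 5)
Your proposal is correct and takes essentially the same approach as the paper: both reuse the already-constructive Henkin-style syntactic model and obtain omniscience by a single application of the decidability hypothesis to a predicate on a data type. The only (immaterial) difference is where that hypothesis is applied --- the paper decides the satisfaction relation directly, viewed as a predicate on $\List(\Term)\times\Form$, whereas you decide membership in $\TT'$ and transfer decidability along the coincidence lemma $\MM_{\TT'}\vDash_\sigma\phi\leftrightarrow\phi[\sigma]\in\TT'$; both routes work.
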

\begin{proofqed}
  Model existence is constructively provable.
  The model is omniscient by assumption since $\vDash_{\MM}$ can equivalently be seen as a predicate $\List(D) \times \Form \to \Prop$.
\end{proofqed}

This suffices to state our final equivalence theorem for Tarski semantics:

\setCoqFilename{Analysis}
\begin{theorem}\label{tarski_theorem}
  The following are equivalent for arbitrary theories:%
  \begin{enumerate}
  \coqitem[iff_1a_2a] Completeness of $\TT \vdash_c \phi$ for omniscient/decidable models.
  \coqitem[impl_5_2a] $\EM$ and model existence for omniscient/decidable models.
  \coqitem[impl_2a_3a] $\EM$ and compactness for omniscient/decidable models.
  \coqitem[impl_3a_4] $\EM$ and $\WKL$.
  \coqitem[impl_4_5] Every predicate on data types is decidable.
  \end{enumerate}
\end{theorem}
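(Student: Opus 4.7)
The plan is to establish a cycle of implications (1) $\Rightarrow$ (2) $\Rightarrow$ (3) $\Rightarrow$ (4) $\Rightarrow$ (5) $\Rightarrow$ (1), treating the omniscient and decidable versions uniformly and leveraging the technical lemmas already established in the section.

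For (1) $\Rightarrow$ (2), I would first recover $\EM$ by imitating the argument in \Cref{thm:characterisations}(3): given a proposition $P$, apply completeness to the theory $\TT := \lambda\phi.\,\phi = \dot\bot \land P$, which renders $P$ equivalent to the decidable statement $\TT \vdash_c \dot\bot$ and thus yields $P \lor \neg P$. Model existence then follows by contraposition: if $\TT$ is consistent, so $\TT \not\vdash_c \dot\bot$, completeness forces $\TT \not\vDash \dot\bot$, exhibiting a satisfying omniscient (resp. decidable) model. The implications (2) $\Rightarrow$ (3), (3) $\Rightarrow$ (4), and (4) $\Rightarrow$ (5) are essentially already packaged as \Cref{coq:modex_compact}, \Cref{coq:compact_implies_WKL}, and \Cref{coq:CO_iff_EM_WKL} respectively, so I would simply invoke them, taking care to carry the assumption of $\EM$ through each step since \Cref{coq:compact_implies_WKL} explicitly requires it.

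The main work lies in (5) $\Rightarrow$ (1). First, (5) implies $\EM$ via the familiar trick: for $P : \Prop$, decide the predicate $p : \mathbbm{1} \to \Prop$ defined by $p\,\star := P$ on the data type $\mathbbm{1}$, so that $P \lor \neg P$ follows. Second, since the type of formulas is a data type by \Cref{fact:data}, all deduction predicates become decidable and hence stable under double negation. Third, \Cref{lem:WKL_to_modex} supplies model existence for omniscient models. Combining these pieces, I would reprove \Cref{thm:quasi} in the omniscient/decidable setting to obtain $\TT \vDash \phi \to \neg\neg(\TT \vdash_c \phi)$, then strip the double negation using the stability just derived to conclude completeness.

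The principal obstacle I anticipate is bookkeeping rather than deep: one must ensure that the omniscient and decidable versions stay in step throughout the cycle (using \Cref{coq:omniscient_to_classical} where needed to bridge between them), and that classicality of the models obtained in (5) $\Rightarrow$ (1) is actually witnessed so that soundness of $\vdash_c$ applies when transporting validity to provability. A minor subtlety is that \Cref{thm:quasi} as stated deals with closed theories, so the extension via \Cref{lem:closing} must be invoked to lift completeness to arbitrary $\TT$ and $\phi$.
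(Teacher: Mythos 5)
Your proposal is correct and follows essentially the same route as the paper, which chains the very same lemmas --- model existence $\Rightarrow$ compactness (\Cref{coq:modex_compact}), compactness $\Rightarrow \WKL$ under $\EM$ (\Cref{coq:compact_implies_WKL}), $\EM\land\WKL\Rightarrow$ decidability of all predicates on data types (\Cref{coq:CO_iff_EM_WKL}), and that decidability $\Rightarrow$ model existence (\Cref{lem:WKL_to_modex}) --- and closes the cycle via the equivalence of completeness with $\EM$ plus model existence, so your step (5)$\Rightarrow$(1) is just the paper's (5)$\Rightarrow$(2)$\Rightarrow$(1) spelled out. One small wording slip: in (1)$\Rightarrow$(2) the statement $\TT\vdash_c\dot{\bot}$ is not decidable (only enumerable); $\EM$ is recovered from the \emph{stability} of $\TT\vDash\dot{\bot}$, exactly as in the argument of \Cref{thm:characterisations}(3) that you cite.
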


In other words, completeness of $\TT \vdash_c \phi$ w.r.t.\ omniscient and decidable models is equivalent to a purely logical principle ($\EM$) and a function existence principle ($\WKL$).
Item (5) is reminiscent of the axiom $\mathsf{CC}^\lor$ used in~\cite{berger2012weak}, where $\WKL_{\mathcal D}$ is proven equivalent to $\mathsf{LLPO} \land \Pi^0_1\textsf{-CC}^\lor$
Given that under $\EM$, $\WKL$ is equivalent to Brouwer's fan theorem $\mathsf{FAN}$~\cite{berger2012weak}, we could have stated item (4) equivalently as $\EM \land \mathsf{FAN}$.

If one is interested in similar equivalences to completeness of $\Gamma \vdash_c \phi$ and $\TT \vdash_c \phi$ for enumerable $\TT$ w.r.t.\ omniscient and decidable models, our previous analysis has yielded that the corresponding logical principles will be $\MPL$ and $\MP$ respectively instead of $\EM$.
To the best of our knowledge, it is an open question which function existence or choice principle replaces $\WKL$ in this situation, i.e. which adjustment of $\WKL$, $\textsf{CC}^\lor$, or $\mathsf{FAN}$ becomes necessary.

\subsection{Kripke Semantics}
\label{sec:kripke}
\setCoqFilename{Kripke}

Turning to intuitionistic logic, we present Kripke semantics immediately generalised to arbitrary interpretations of falsity.

\begin{definition}[][kmodel]
	A \emph{Kripke model} \KK over a domain $D$ is a preorder $(\WW,\preceq)$ with
	$$\_^\KK~:~\forall f:\Funcs.\,D^{|f|}\to D\hspace{2em}
	\_^\KK~:~\forall P:\Preds.\,\WW\to D^{|P|}\to \Prop\hspace{2em}
	\bot^\KK~:~\WW\to\Prop.$$
	The interpretations of predicates and falsity are required to be monotone, i.e. $P_v^\KK\,\vec a\to P_w^\KK\,\vec a$ and $\bot_v^\KK\to\bot_w^\KK$ whenever $v\preceq w$.
	Assignments $\rho$ and their term evaluations $\hat\rho$ are extended to formulas via the relation $w\Vdash_\rho \phi$ defined by
	\begin{align*}
	w\Vdash_\rho \dot{\bot}&~:=~\bot^\KK_w&
	w\Vdash_\rho \phi\dot\to\psi &~:=~\forall v\succeq w.\,v\Vdash_\rho\phi\to v\Vdash_\rho\psi\\
	w\Vdash_\rho P\,\vec t\,&~:=~P_w^\KK\,(\hat{\rho}\,@\,\vec t\,)&
	w\Vdash_\rho\dot{\forall}\,\phi&~:=~\forall a:D.\,w\Vdash_{a;\rho} \phi
	\end{align*}
	We write $\KK\Vdash\phi$ if $w\Vdash_\rho \phi$ for all $\rho$ and $w$.
	\KK is \emph{standard} if $\bot^\KK_w$ implies $\bot$ for all $w$ and \emph{exploding} if $\KK\Vdash \dot{\bot}\dot{\to}\phi$ for all $\phi$.
	We write $\TT\Vdash \phi$ if $\KK\Vdash_\rho \phi$ for all standard $\KK$ and $\rho$ with $\KK\Vdash_\rho \TT$, and $\TT\Vdash_e \phi$ when relaxing to exploding models.
\end{definition}

Note that standard models are exploding, hence $\TT\Vdash_e \phi$ implies $\TT\Vdash \phi$.
Moreover, the monotonicity required for the predicate and falsity interpretations lifts to all formulas, i.e. $w\Vdash_\rho \phi$ implies $v\Vdash_\rho \phi$ whenever $w\preceq v$.
This property together with the usual facts about the interaction of assignments and substitutions yields soundness:

\begin{fact}[][ksoundness']
	\label{fact:ksoundness}
	$\Gamma\vdash\phi$ implies $\Gamma\Vdash_e \phi$.
\end{fact}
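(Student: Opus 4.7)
The plan is to proceed by induction on the derivation $\Gamma \vdash \phi$, showing that for every exploding Kripke model $\KK$, every assignment $\rho$, and every world $w$ with $w \Vdash_\rho \Gamma$ we have $w \Vdash_\rho \phi$. Following the pattern of the analogous soundness proof for Tarski semantics (\Cref{tarski_soundness}), most rule cases are direct consequences of the clauses defining $\Vdash$. The two properties that require preparatory lemmas are monotonicity and the interaction of satisfaction with substitution.

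First I would lift the monotonicity assumption on the atomic interpretations $P^\KK$ and $\bot^\KK$ to arbitrary formulas, proving by induction on $\phi$ that $w \Vdash_\rho \phi$ and $w \preceq v$ imply $v \Vdash_\rho \phi$. The $\dot\to$ and $\dot\forall$ cases are immediate from the quantification over $\succeq$ in the definition. This monotonicity extends in the obvious way to contexts $\Gamma$. Next I would establish a substitution lemma $w \Vdash_\rho \phi[\sigma] \Leftrightarrow w \Vdash_{\lambda x.\hat\rho(\sigma x)} \phi$, again by induction on $\phi$; its two immediate specialisations are $w \Vdash_\rho \phi[t] \Leftrightarrow w \Vdash_{\hat\rho(t);\rho} \phi$ and $w \Vdash_{a;\rho} (\up\phi) \Leftrightarrow w \Vdash_\rho \phi$, the second of which lifts to $w \Vdash_{a;\rho} (\up\Gamma) \Leftrightarrow w \Vdash_\rho \Gamma$.

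With these in hand, the induction on $\Gamma \vdash \phi$ is routine. Assumption and the introduction/elimination rules for $\dot\to$ use monotonicity to pass from $w$ to any $v \succeq w$. For exfalso one derives $w \Vdash_\rho \phi$ from $w \Vdash_\rho \dot\bot$ using the assumption that $\KK$ is exploding, i.e.\ $\KK \Vdash \dot\bot \dot\to \phi$; this is the single place where the exploding hypothesis is consumed (and the reason we conclude $\Gamma \Vdash_e \phi$ rather than $\Gamma \Vdash \phi$). The rule (AE) is handled by the substitution lemma applied to a term $t$, and dually (AI) is handled by combining the shift-equivalence $w \Vdash_{a;\rho}(\up\Gamma) \Leftrightarrow w \Vdash_\rho \Gamma$ with the induction hypothesis applied at the assignment $a;\rho$ for each $a : D$.

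The main obstacle is the quantifier case (AI): the inductive hypothesis gives satisfaction of $\phi$ in the shifted context $\up\Gamma$, whereas the goal talks about $\Gamma$ in the original assignment extended by an arbitrary $a$. Bridging this gap is exactly the purpose of the two substitution specialisations above, and getting their statements right (in particular keeping track of the de Bruijn shift $\up$ on both sides) is where the bulk of the work sits. Once the substitution lemma is available in a form compatible with the $\_;\_$ and $\up\_$ notations used in \Cref{def:form}, the remaining cases fall out mechanically.
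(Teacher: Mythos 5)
Your proposal is correct and matches the paper's approach: the paper proves this fact by induction on the derivation, relying on exactly the two preparatory ingredients you identify (monotonicity lifted from atoms to all formulas, and the interaction of assignments with substitutions), with the exploding hypothesis consumed only in the exfalso case. You simply spell out in more detail what the paper compresses into a one-line reference to the analogous Tarski soundness proof.
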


\begin{proofqed}
	By induction on $\Gamma\vdash\phi$ and analogous to~\cite[Fact 3.34]{ForsterCPP}.
\end{proofqed}

Turning to completeness, instead of showing that $\Gamma\Vdash_e \phi$ implies $\Gamma\vdash\phi$ directly, we follow Herbelin and Lee~\cite{HerbelinCut} and reconstruct a formal derivation in the normal sequent calculus LJT, hence implementing a cut-elimination procedure.
LJT is defined by judgements $\Gamma\sq \phi$ and $\Gamma\,;\psi\sq \phi$ for a focused formula $\psi$:
\begin{gather*}
	\hspace{1em}
	\infer[\small\textnormal{A}]{\Gamma \,; \varphi \sq \varphi}{}\hspace{3em}
	\infer[\small\textnormal C]{\Gamma \sq \psi}{\Gamma \,; \varphi \sq \psi \quad \varphi \in \Gamma}\hspace{3em}
	\infer[\small\textnormal{IL}]{\Gamma \,; \varphi \dot\to \psi\sq \theta}{\Gamma \sq \varphi \quad \Gamma \,; \psi \sq \theta}\\[0.3cm]
	\infer[\small\textnormal{IR}]{\Gamma \sq \varphi \dot\to \psi}{\Gamma,\varphi \sq \psi}\hspace{3em}
	\infer[\small\textnormal{AL}]{\Gamma \,; \dot\forall \varphi \sq \psi}{\Gamma \,; \varphi[t]\sq \psi}\hspace{3em}
	\infer[\small\textnormal{AR}]{\Gamma \sq \dot\forall \varphi}{\up{\Gamma} \sq \varphi}\hspace{3em}
	\infer[\small\textnormal E]{\Gamma \sq \varphi}{\Gamma \sq \dot\bot}
\end{gather*}

\setCoqFilename{Gentzen}
\begin{fact}[][cutfree_seq_ND]
	\label{fact:cutfree}
	Every sequent $\Gamma\sq\phi$ can be translated into a normal derivation $\Gamma\vdash\phi$.
\end{fact}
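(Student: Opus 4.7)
The plan is to perform a mutual induction on the two mutually-defined judgements of LJT, strengthening the statement so that both sides of the calculus can be handled simultaneously. Concretely, I would establish the two claims
\begin{align*}
&\Gamma \sq \phi ~\Rightarrow~ \Gamma \vdash \phi, \\
&\Gamma \,; \psi \sq \phi ~\Rightarrow~ \bigl(\Gamma \vdash \psi \to \Gamma \vdash \phi\bigr),
\end{align*}
in parallel. The focused judgement is thus read as saying ``if the focused formula $\psi$ is derivable in ND from $\Gamma$, then so is $\phi$.'' This captures exactly how the focus is used in LJT and mirrors how it will be discharged by the context rule (C).

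For the unfocused rules the translation is essentially the identity: (IR) becomes ND implication introduction, (AR) becomes the ND rule (AI), and (E) is handled by ex-falso (or the elimination rule for $\dot\bot$ in the ND system). For the focused rules, (A) is trivial because $\Gamma \vdash \psi$ directly gives $\Gamma \vdash \psi$; (C) combines the focus-discharging step with the ND axiom rule: from $\phi\in\Gamma$ we have $\Gamma \vdash \phi$, and the inductive hypothesis on $\Gamma\,;\phi \sq \psi$ then yields $\Gamma \vdash \psi$. Rule (IL) is handled by ND implication elimination: given $\Gamma \vdash \phi\dot\to\psi$, combine it with the IH-translation of $\Gamma \sq \phi$ to obtain $\Gamma \vdash \psi$, then apply the IH on $\Gamma\,;\psi \sq \theta$. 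Rule (AL) uses (AE) in exactly the same pattern, instantiating $\dot\forall\phi$ with the term $t$ from the LJT premise.

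The one case which requires a small amount of care is (AR), whose LJT premise is $\up\Gamma \sq \phi$. Here I would invoke \Cref{lem:named_equiv} to pass freely between the de Bruijn shift and a genuinely fresh variable, which keeps the induction clean; the conclusion $\Gamma \vdash \dot\forall\phi$ then follows by the ND rule (AI). No substitution lemma beyond \Cref{lem:weak} is needed, since LJT does not internalise cut and the translation never introduces a substitution on the ND side that is not already present in the LJT derivation.

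I do not expect any real obstacle: since the proof is \emph{syntax-directed} and each LJT rule has an obviously corresponding ND derivation, the entire argument is a one-line case split per rule once the mutual induction statement is phrased as above. The only subtle design choice is to interpret the focused judgement as an implication over ND derivability of the focused formula; once that is done, rules (A), (C), (IL), and (AL) fall into place immediately and no reasoning about cut-elimination or permutation of rules is required.
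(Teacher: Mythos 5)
Your proposal is correct and matches the paper's proof exactly: the paper also proceeds by simultaneous induction on both judgement forms, translating the focused sequent $\Gamma\,;\psi\sq\phi$ into the implication from $\Gamma\vdash\psi$ to $\Gamma\vdash\phi$, after which every rule maps to its obvious ND counterpart. (The only superfluous detail is your appeal to \Cref{lem:named_equiv} in the (AR) case, which is unnecessary since the ND rule (AI) takes the shifted premise $\up\Gamma\vdash\phi$ directly.)
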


\begin{proofqed}
	By simultaneous induction on both forms of judgements, where every sequent $\Gamma\,;\psi\sq \phi$ is translated to an implication from $\Gamma\vdash \psi$ to $\Gamma\vdash  \phi$.
\end{proofqed}

By the previous fact, completeness for LJT implies completeness for intuitionistic ND.
The technique to establish completeness for Kripke semantics is based on universal models coinciding with intuitionistic provability.
We in fact construct two syntactic Kripke models over the domain $\Term$.

\begin{itemize}
	\item
	An exploding model $\UU$ on contexts s.t. $\Gamma\Vdash_\sigma^\UU\phi $ iff $\Gamma\sq \phi[\sigma]$.
	\item
	A standard model $\CC$ on consistent contexts s.t. $\Gamma\Vdash^\CC_\sigma\phi$ iff $\neg\neg(\Gamma\sq \phi[\sigma])$.
\end{itemize}

\setCoqFilename{KripkeCompleteness}

These constructions are
adaptions of those in \cite{Wehr2019}, which in turn are based on the
proof and comments in \cite{HerbelinCut}.
We begin with the exploding model $\UU$. 

\begin{definition}[][K_ctx]
	The model $\UU$ over the domain $\Term$ of terms is defined on the
	contexts $\Gamma$ preordered by inclusion $\subseteq$. Further, we
	set:
	$$f^{\,\UU} \,\vec{d} := f\,\vec{d} \qquad \qquad P^{\,\UU}_{\,\Gamma}\,\vec{d} := \Gamma \sq
	P\,\vec{d} \qquad \qquad \bot^{\,\UU}_\Gamma := \Gamma \sq \dot\bot $$
\end{definition}

The desired properties of $\UU$ can be derived from
the next lemma, which takes the shape of a normalisation-by-evaluation
procedure~\cite{NBEBerge,NBE}.

\begin{lemma}[][K_ctx_correct]
	\label{thm:um-corr}
	In the universal Kripke model $\UU$ the following hold.
	\begin{enumerate}
		\item $\Gamma \Vdash_\sigma \varphi \to \Gamma \sq \subst{\varphi}{\sigma}$
		\item $(\forall\, \Gamma' \psi.~\Gamma \subseteq \Gamma' \to \Gamma' \,;
		\subst{\varphi}{\sigma} \sq \psi \to \Gamma' \sq \psi) \to \Gamma \Vdash_\sigma
		\varphi$
	\end{enumerate}
\end{lemma}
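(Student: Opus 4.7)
The plan is to prove (1) and (2) simultaneously by induction on $\varphi$, with $\Gamma$ and $\sigma$ generalised. The two statements exhibit a reflect/reify duality: (1) extracts a syntactic proof from semantic validity, while (2) builds semantic validity out of a continuation-style assumption about eliminating $\varphi[\sigma]$ on the left of a focused sequent. This matches the standard shape of a normalisation-by-evaluation argument, where at each connective induction hypotheses of one flavour feed the construction of the other.

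The atomic cases $\varphi = P\,\vec t$ and $\varphi = \dot\bot$ should be immediate from the definitions of $P^{\UU}$ and $\bot^{\UU}$; direction (2) is dispatched by instantiating the continuation assumption at $\Gamma' := \Gamma$ and $\psi := \varphi[\sigma]$ and closing with the axiom rule (A). For the implication $\varphi_1 \dot\to \varphi_2$, I would prove (1) by extending the context with $\varphi_1[\sigma]$ and applying (IR); the extended context is seen to semantically satisfy $\varphi_1$ via IH(2) combined with (C) and (A), and IH(1) on $\varphi_2$ then closes the goal. For (2) I would unfold the Kripke implication: given $\Gamma' \supseteq \Gamma$ with $\Gamma' \Vdash_\sigma \varphi_1$, I extract $\Gamma' \sq \varphi_1[\sigma]$ via IH(1), and reduce the remaining goal by IH(2) on $\varphi_2$ to a continuation hypothesis that is discharged through the rule (IL) applied to the overall assumption.

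The universal case $\dot\forall \varphi'$ is the crux. Substitution under a binder gives $(\dot\forall \varphi')[\sigma] = \dot\forall(\varphi'[\up\sigma])$, so (1) reduces via (AR) to $\up\Gamma \sq \varphi'[\up\sigma]$, and IH(1) in turn reduces this to $\up\Gamma \Vdash_{\up\sigma} \varphi'$; the latter follows from the semantic assumption $\forall a.\, \Gamma \Vdash_{a;\sigma} \varphi'$ via the Kripke substitution lemma together with monotonicity along the inclusion $\Gamma \subseteq \up\Gamma$. For (2), given a term witness $a$, we need $\Gamma \Vdash_{a;\sigma} \varphi'$; IH(2) on $\varphi'$ reduces this to a continuation hypothesis, which is then discharged by feeding the overall assumption through (AL) instantiated at $a$. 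The main obstacle is keeping the de Bruijn bookkeeping coherent in this case — in particular, aligning $\varphi'[a;\sigma]$ with $(\varphi'[\up\sigma])[a;\id]$ on the syntactic side and the corresponding assignment composition on the semantic side — but once the standard substitution lemmas for $\sq$ and $\Vdash$ are in place the argument closes routinely.
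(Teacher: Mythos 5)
Your overall strategy---simultaneous induction on $\varphi$ with $\Gamma$ and $\sigma$ generalised, with (1) reflecting semantic validity into sequents and (2) reifying a left-elimination continuation into validity---is exactly the paper's, and your atomic and implication cases match its proof step for step: (2) for atoms via the continuation at $\Gamma' := \Gamma$ closed by (A), the antecedent's continuation in (1) discharged by the focus rule (C), and (2) for implication closed through (IL) against the outer continuation.

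The one step that would fail as written is in direction (1) of the universal case. After (AR) you must show $\up\Gamma \sq \subst{\varphi'}{0;\lambda x.\up(\sigma\,x)}$, and you propose to obtain $\up\Gamma \Vdash \varphi'$ from the hypothesis $\forall a.\,\Gamma \Vdash_{a;\sigma}\varphi'$ ``via monotonicity along the inclusion $\Gamma \subseteq \up\Gamma$''. But that inclusion is false in general: $\up\Gamma$ shifts every free variable of every formula in $\Gamma$ (e.g.\ $\up(P\,0) = P\,1$), so $\Gamma$ and $\up\Gamma$ are typically incomparable worlds of $\UU$, and neither monotonicity nor a substitution lemma for $\Vdash$ transports a fact from world $\Gamma$ to world $\up\Gamma$. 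Since $\Gamma$ is generalised in the induction (it gets extended by $\subst{\varphi_1}{\sigma}$ in the implication case), you cannot assume it closed. The repair is to avoid passing to $\up\Gamma$ altogether: by the fresh-variable equivalence in the style of \Cref{lem:named_equiv} (in its sequent-calculus form), the goal $\up\Gamma \sq \subst{\varphi'}{0;\lambda x.\up(\sigma\,x)}$ is equivalent to $\Gamma \sq \subst{\varphi'}{x;\sigma}$ for a computed fresh $x$, which follows from IH(1) at the \emph{unchanged} world $\Gamma$ together with the hypothesis instantiated at the witness $a := x$. Your sketch of direction (2) for $\dot\forall$, via (AL) and the identity $\subst{\varphi'}{t;\sigma} = \subst{(\subst{\varphi'}{0;\lambda x.\up(\sigma\,x)})}{t}$, stays at the world $\Gamma$ and is fine as stated.
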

\begin{proof}
	We prove (1) and (2) at once by induction on $\varphi$ generalising $\Gamma$ and $\sigma$.
	We only discuss the case of implications $\varphi \dot\to \psi$ in full detail.
	
	\begin{enumerate}
		\item Assuming $\,\forall \Gamma'.~\Gamma \subseteq \Gamma' \to \Gamma' \Vdash_\sigma \varphi \to
		\Gamma' \Vdash_\sigma \psi$, one has to derive that \mbox{$\Gamma \sq \subst{(\varphi \dot\to \psi)}{\sigma}$}.
		Per (IR) and inductive hypothesis (2) for
		$\psi$ it suffices to show $\cextend{\Gamma}{\subst{\varphi}{\sigma}} \Vdash_\sigma
		\psi$. Applying the inductive hypothesis (2) for $\varphi$ and the assumption,
		it suffices to show that $\Gamma' \,; \subst{\varphi}{\sigma} \sq
		\subst{\theta}{\sigma}$ implies $\Gamma' \sq \subst{\theta}{\sigma}$ for any
		$\cextend{\Gamma}{\subst{\varphi}{\sigma}} \subseteq \Gamma'$ and $\theta$, which holds per (C).
		\item Assuming $\forall \,\Gamma'\,\theta.~\Gamma \subseteq \Gamma' \to
		\Gamma' \,;\subst{(\varphi \dot\to \psi)}{\sigma} \sq \theta \to \Gamma' \sq \theta$ one
		has to deduce $\Gamma' \Vdash_\sigma \varphi$ entailing $\Gamma'
		\Vdash_\sigma \psi$ for any $\Gamma \subseteq \Gamma'$. Because of the inductive hypothesis (2) for $\psi$ it suffices
		to show $\Delta \,;\subst{\psi}{\sigma} \sq \theta$ implying $\Delta
		\sq \theta$ for any $\Gamma' \subseteq \Delta$.
		By using the
		assumption, $\Delta \sq \theta$ reduces to $\,\Delta \,;
		\subst{(\varphi \dot\to \psi)}{\sigma} \sq \theta$. This follows by
		(IL), as the assumption $\Gamma' \Vdash_\sigma \varphi$ implies
		$\Delta \sq \subst{\varphi}{\sigma}$ per inductive hypothesis (2).
		\qed
	\end{enumerate}
	
\end{proof}

\begin{corollary}[][K_ctx_constraint]
	\label{thm:univ-iff}
	$\UU$ is exploding and satisfies $\Gamma \Vdash_\sigma \varphi$ iff $\Gamma \sq \subst{\varphi}{\sigma}$.
\end{corollary}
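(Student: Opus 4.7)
I derive both claims of the corollary directly from \Cref{thm:um-corr}, handling the exploding property and the biconditional $\Gamma \Vdash_\sigma \varphi \iff \Gamma \sq \subst{\varphi}{\sigma}$ separately. The forward direction of the biconditional is immediate from part (1), while the remaining goals will both be reduced to obligations about focused LJT sequents via the ``reflection'' principle of part (2).

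To show $\UU$ is exploding, I unfold $\UU \Vdash \dot{\bot} \dot{\to} \phi$ so that it suffices to prove, for every context $\Gamma$ with $\Gamma \sq \dot\bot$ (which is the definition of $\bot^\UU_\Gamma$), that $\Gamma \Vdash_\sigma \phi$ for arbitrary $\sigma$ and $\phi$. Applying part (2) of the lemma, this in turn reduces to showing that $\Gamma' \,; \subst{\phi}{\sigma} \sq \psi$ implies $\Gamma' \sq \psi$ for every $\Gamma' \supseteq \Gamma$ and every $\psi$. Here I simply weaken $\Gamma \sq \dot\bot$ to $\Gamma' \sq \dot\bot$ and apply rule (E) to obtain $\Gamma' \sq \psi$ directly, without needing to consume the focused premise at all.

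For the backward direction of the biconditional, I assume $\Gamma \sq \subst{\varphi}{\sigma}$ and again invoke part (2) of the lemma, so that it suffices to show $\Gamma' \,; \subst{\varphi}{\sigma} \sq \psi$ implies $\Gamma' \sq \psi$ for all $\Gamma' \supseteq \Gamma$ and $\psi$. Weakening the hypothesis yields $\Gamma' \sq \subst{\varphi}{\sigma}$, which when combined with the given focused sequent should produce $\Gamma' \sq \psi$ by a head cut on $\subst{\varphi}{\sigma}$.

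The main obstacle is precisely this appeal to head-cut admissibility for LJT: unlike the other ingredients (weakening of sequents, rules (A), (C), and (E), and a straightforward unfolding of the $\UU$ semantics), head cut is a nontrivial auxiliary result which would need to be established by a separate simultaneous induction on the cut formula and the two derivations, in the usual style for Herbelin-style focused sequent calculi. Once head cut is available, the corollary becomes essentially a routine consequence of \Cref{thm:um-corr}, with no further induction on $\varphi$ needed since the full strength of the biconditional is already packaged into its two parts.
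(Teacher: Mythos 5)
Your treatment of the exploding property and of the forward direction of the biconditional matches the paper exactly: both follow from parts (2) and (1) of \Cref{thm:um-corr}, with rule (E) plus weakening discharging the obligation in the exploding case. The divergence is in the backward direction, and there you have correctly identified the obstacle but not resolved it: your argument hinges on head-cut admissibility for LJT, which is not available at this point in the development. In this paper the cut rule for LJT (\Cref{lem:cut}) is \emph{derived from} completeness for exploding Kripke models, which in turn rests on the very corollary you are proving; so invoking cut here is circular unless you supply an independent syntactic proof of head-cut admissibility. That is a substantial piece of work (a simultaneous induction on the cut formula and both derivations, with the usual permutation lemmas), and it runs against the design of this section, whose point is precisely to obtain cut elimination \emph{semantically} rather than syntactically.

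The paper's route for the backward direction avoids cut entirely: from $\Gamma \sq \subst{\varphi}{\sigma}$ one obtains a natural deduction derivation $\Gamma \vdash \subst{\varphi}{\sigma}$ via \Cref{fact:cutfree}, then applies Kripke soundness for exploding models (\Cref{fact:ksoundness}) and instantiates at the world $\Gamma$ of $\UU$, using that $\Gamma$ satisfies its own context — which follows from part (2) of \Cref{thm:um-corr} together with rule (C), since each member of $\Gamma$ is literally present in every $\Gamma' \supseteq \Gamma$. If you replace your head-cut step with this soundness detour, your proof goes through; as written, the cut appeal is a genuine gap.
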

\begin{proofqed}
	Suppose that $\Gamma\sq\dot\bot$, then (2) of \Cref{thm:um-corr} yields that $\Gamma\Vdash_\sigma\phi$ for arbitrary $\phi$.
	Thus $\UU$ is exploding.
	The claimed equivalence then follows by (1) of \Cref{thm:um-corr} and
	soundness of LJT.
\end{proofqed}

Being universal, $\UU$ witnesses completeness for exploding Kripke models:

\begin{fact}[][K_exp_completeness]
	\label{fact:kexploding}
	\begin{enumerate}
		\item
		$\Gamma\Vdash_e \phi$ implies $\Gamma\sq\phi$.
		\item
		In the $\to,\forall$-fragment, $\Gamma\Vdash \phi$ implies $\Gamma\sq\phi$.
	\end{enumerate}
\end{fact}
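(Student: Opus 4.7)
The plan is to leverage the universal exploding model $\UU$ from \Cref{thm:univ-iff} as a concrete witness. For part (1), I would instantiate the hypothesis $\Gamma \Vdash_e \phi$ with the exploding Kripke model $\UU$, the world $\Gamma$ itself, and the identity substitution $\id$. This requires first verifying the premise $\Gamma \Vdash^{\UU}_\id \Gamma$, i.e., that $\Gamma \Vdash^{\UU}_\id \psi$ for every $\psi \in \Gamma$. By \Cref{thm:univ-iff}, this reduces to showing $\Gamma \sq \psi[\id]$. Since $\psi[\id] = \psi$ and $\psi \in \Gamma$, the sequent $\Gamma \sq \psi$ follows by applying rule (C) to the axiom (A). Having discharged the premise, the conclusion $\Gamma \Vdash^{\UU}_\id \phi$ is once more translated by \Cref{thm:univ-iff} into $\Gamma \sq \phi[\id]$, which is the desired $\Gamma \sq \phi$.

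For part (2), the key observation is that in the $\to,\forall$-fragment the satisfaction clause for $\dot{\bot}$ never arises, so $w \Vdash_\rho \phi$ does not depend on the interpretation $\bot^\KK$ at all. Consequently, any exploding model can equivalently be regarded as a standard one for the purposes of evaluating such formulas, and the quantification over standard models in $\Gamma \Vdash \phi$ in fact entails $\Gamma \Vdash^{\UU}_\id \Gamma \to \Gamma \Vdash^{\UU}_\id \phi$ for the exploding model $\UU$ too. The argument then proceeds exactly as in (1), reusing the $\UU$-instantiation. Alternatively, one can phrase the reduction as: for formulas in the $\to,\forall$-fragment, $\Gamma \Vdash \phi$ and $\Gamma \Vdash_e \phi$ coincide, and (1) applies.

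I do not expect any significant obstacle here: the universal model construction of \Cref{thm:um-corr} and its corollary carries the entire argument, turning the completeness claim into a purely book-keeping check. The only minor subtlety is the observation for part (2) that the interpretation of falsity is inert on the restricted fragment, which justifies transferring the result from exploding to standard Kripke models without further work.
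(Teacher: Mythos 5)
Your proposal is correct and matches the paper's proof: part (1) instantiates $\Gamma\Vdash_e\phi$ at the universal exploding model $\UU$ with world $\Gamma$ and the identity substitution, using \Cref{thm:univ-iff} to pass between satisfaction and sequents, and part (2) observes that $\dot{\bot}$ is inert in the $\to,\forall$-fragment so the same model carries the standard-model case. No discrepancies worth noting.
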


\begin{proof}
	\begin{enumerate}
		\item
		Since $\Gamma\Vdash_\id^\UU\Gamma$ we have that $\Gamma\Vdash_e \phi$ implies $\Gamma\Vdash_\id^\UU\phi $ and hence $\Gamma\sq \phi$.
		\item
		In the minimal fragment, $\dot{\bot}$ remains uninterpreted and hence imposes no condition on the models.
		Hence $\UU$ yields the completeness in this case.
	\end{enumerate}	
\end{proof}

Before we move on to completeness for standard models, we illustrate how the previous fact already establishes the cut rule for LJT.

\begin{lemma}[][SE_cut]
	\label{lem:cut}
	If $\Gamma\sq \phi$ and $\Gamma;\phi\sq \psi$, then $\Gamma\sq \psi$.
\end{lemma}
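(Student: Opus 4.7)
The plan is to give a semantic proof of cut admissibility by routing through natural deduction and the exploding Kripke semantics, exploiting the machinery that has already been set up: the translation of LJT to ND (\Cref{fact:cutfree}), soundness of ND for exploding Kripke models (\Cref{fact:ksoundness}), and completeness of LJT for exploding Kripke models (\Cref{fact:kexploding}.1).

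First I would unpack the two hypotheses via \Cref{fact:cutfree}. Recall that the translation sends a plain sequent $\Gamma \sq \phi$ to a derivation $\Gamma \vdash \phi$, and sends a focused sequent $\Gamma\,;\psi\sq \phi$ to a \emph{meta-level} implication turning $\Gamma \vdash \psi$ into $\Gamma \vdash \phi$. So from $\Gamma \sq \phi$ I obtain $\Gamma \vdash \phi$, and from $\Gamma\,;\phi \sq \psi$ I obtain a function $\Gamma \vdash \phi \to \Gamma \vdash \psi$. Combining these by meta-level modus ponens yields $\Gamma \vdash \psi$, which is exactly cut performed in the cut-admitting calculus ND.

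Next I would transport this ND derivation back into the normal sequent calculus using the semantic detour. Applying soundness (\Cref{fact:ksoundness}) gives $\Gamma \Vdash_e \psi$, and then applying completeness for exploding Kripke models (\Cref{fact:kexploding}.1) yields the desired $\Gamma \sq \psi$. This composition is the essence of the semantic cut-elimination procedure advertised in the introduction: cut in LJT is reduced to meta-level composition of ND derivations, and the round trip through the universal exploding Kripke model $\UU$ strips the cut back out.

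I do not anticipate a significant obstacle, since every ingredient is already in place and the shape of the two sequents to be cut matches the translation exactly. The only subtle point is reading \Cref{fact:cutfree} carefully: the focused translation crucially produces a meta-level implication rather than an ND derivation of $\phi \dot\to \psi$, so no further manipulation (such as implication introduction and elimination in ND) is required to perform the combination.
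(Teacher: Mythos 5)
Your proof is correct and is exactly the paper's argument: combine the two hypotheses via the translation of \Cref{fact:cutfree} (using that focused sequents translate to meta-level implications) to obtain $\Gamma\vdash\psi$ in ND, then recover $\Gamma\sq\psi$ by composing soundness (\Cref{fact:ksoundness}) with completeness for exploding Kripke models (\Cref{fact:kexploding}). Your remark on the focused translation producing a meta-level implication rather than an object-level one is the right subtlety to flag.
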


\begin{proof}
	By the translation given in \Cref{fact:cutfree}, we obtain a derivation $\Gamma\vdash \psi$ from the two assumptions.
	This can be turned into $\Gamma\sq \psi$ using soundness (\Cref{fact:ksoundness}) and completeness (\Cref{fact:kexploding}).
\end{proof}

We now construct the universal standard model $\CC$ as a refinement of $\UU$.
As standard models require that $\bot^\KK_v$ implies $\bot$ for any $v$,
the model $\UU$ has to be restricted to the consistent contexts, those which do
not prove $\dot\bot$.

\begin{definition}[][K_std]
	The model $\CC$ over the domain $\Term$ of terms is defined on the
	consistent contexts $\Gamma \not\sq \dot\bot$ preordered by inclusion $\subseteq$. Further, we
	set:
	$$f^{\,\CC} \,\vec{d} := f\,\vec{d} \qquad \qquad P^{\,\CC}_{\,\Gamma}\,\vec{d} := \neg\neg(\Gamma \sq
	P\,\vec{d}) \qquad \qquad \bot^{\,\CC}_{\Gamma} := \bot $$
\end{definition}

Note that $\CC$ is obviously standard and that we weakened the interpretation of atoms to doubly negated provability.
This admits the following normalisation-by-evaluation procedure for doubly negated sequents:

\begin{lemma}[][K_std_correct]
	\label{thm:cm-corr}
	In the universal Kripke model $\CC$ the following hold.
	\begin{enumerate}
		\item $\Gamma \Vdash_\sigma \varphi \to \neg\neg( \Gamma \sq \subst{\varphi}{\sigma})$
		\item $(\forall\, \Gamma' \psi.~\Gamma \subseteq \Gamma' \to \Gamma' \,;
		\subst{\varphi}{\sigma} \sq \psi \to \neg\neg(\Gamma' \sq \psi)) \to \Gamma \Vdash_\sigma
		\varphi$
	\end{enumerate}
\end{lemma}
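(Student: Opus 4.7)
The plan is to proceed by simultaneous induction on $\varphi$ with $\Gamma$ and $\sigma$ generalised, closely mirroring the structure of the proof of \Cref{thm:um-corr} for the exploding model $\UU$. The differences are purely bookkeeping: in $\CC$, atoms are interpreted by doubly-negated derivability and only consistent contexts are admissible worlds, so we must thread $\neg\neg$ through the arguments and occasionally split on consistency of extended contexts. For the atomic case $\varphi = P\,\vec t$, clause (1) is immediate because $P^\CC_\Gamma\,\vec d$ unfolds to $\neg\neg(\Gamma \sq P\,\vec d)$, while clause (2) is obtained by instantiating the continuation at $\Gamma' := \Gamma$ and $\psi := \subst{P\,\vec t}{\sigma}$ and discharging its premise by axiom (A). The case $\varphi = \dot\bot$ is analogous: (1) holds vacuously since $\Gamma \Vdash_\sigma \dot\bot$ unfolds to $\bot^\CC_\Gamma = \bot$, and (2) uses (A) in the continuation to obtain $\neg\neg(\Gamma \sq \dot\bot)$, which contradicts the consistency requirement on the world $\Gamma$ and thereby discharges the goal.

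For implications $\varphi \dot\to \psi$, clause (2) mirrors the proof for $\UU$ almost verbatim: given $\Gamma' \Vdash_\sigma \varphi$, reduce the goal $\Gamma' \Vdash_\sigma \psi$ via IH(2) for $\psi$ to a continuation condition; an application of (IL) then combines a derivation of $\subst{\varphi}{\sigma}$ with one of $\theta$ from $\subst{\psi}{\sigma}$. The derivation of the antecedent is only known under a double negation by IH(1), but this poses no obstacle since the surrounding goal $\neg\neg(\Delta \sq \theta)$ is itself doubly negated. Clause (1) is more delicate: after applying (IR) we need $\cextend{\Gamma}{\subst{\varphi}{\sigma}} \Vdash_\sigma \psi$, but the extended context may fail to be a world of $\CC$ if it is inconsistent. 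Since the goal $\neg\neg(\Gamma \sq \subst{(\varphi \dot\to \psi)}{\sigma})$ is doubly negated, we may case-split on consistency via $\neg\neg$-LEM: if $\cextend{\Gamma}{\subst{\varphi}{\sigma}} \sq \dot\bot$, rule (E) followed by (IR) directly yields the desired sequent; otherwise the extended context is a world, and the argument proceeds as for $\UU$, using IH(2) for $\varphi$ (whose premise is discharged by (C)) and IH(1) for $\psi$ before invoking (IR).

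The universal case is handled analogously to its counterpart in \Cref{thm:um-corr}, using \Cref{lem:named_equiv} (or its sequent analogue) to introduce a fresh variable when navigating the de Bruijn shift imposed by (AR), with $\neg\neg$ threaded through the invocations of IH(1) and IH(2) as in the implication case. The main obstacle throughout is the implication case (1): requiring $\CC$ to consist only of consistent contexts means $\cextend{\Gamma}{\subst{\varphi}{\sigma}}$ is not automatically available as a world, forcing the $\neg\neg$-LEM split on consistency. This explains, in retrospect, both definitional modifications of $\CC$ relative to $\UU$: the restriction of worlds to consistent contexts is what makes $\CC$ standard (since $\bot^\CC_\Gamma := \bot$), and the double negation on atoms is precisely what permits the inductive argument to keep working within a $\neg\neg$-scope.
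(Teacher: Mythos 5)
Your proof is correct and follows essentially the same route as the paper's: simultaneous induction on $\varphi$ with $\Gamma$ and $\sigma$ generalised, with the crucial point being case (1) for implication, where the doubly negated goal licenses a case split on the consistency of $\cextend{\Gamma}{\subst{\varphi}{\sigma}}$ — proceeding as in $\UU$ when it is a world of $\CC$, and using (E) followed by (IR) otherwise. Your additional observations (e.g.\ that the $\neg\neg$ from IH(1) on the antecedent in case (2) is absorbed by the doubly negated continuation, and that the $\dot\bot$ case of (2) is discharged against the consistency of the world) are exactly the details the paper leaves implicit under ``analogous to \Cref{thm:um-corr}''.
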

\begin{proof}
	We prove (1) and (2) at once by induction on $\varphi$ generalising $\Gamma$ and $\sigma$.
	Most cases are completely analogous to those in \Cref{thm:um-corr}.
	Therefore we only discuss the crucial case (1) for implications $\varphi \dot\to \psi$.

	\begin{enumerate}
		\item
		Assuming $\Gamma \Vdash_\sigma \varphi\dot{\to}\psi$ we need to derive \mbox{$\neg\neg(\Gamma \sq \phi[\sigma] \dot\to \psi[\sigma])$}.
		So we assume $\neg(\Gamma \sq \phi[\sigma] \dot\to \psi[\sigma])$ and derive a contradiction.
		Because of the negative goal, we may assume that either $\Gamma,\phi[\sigma]$ is consistent or not.
		In the positive case, we proceed as in \Cref{thm:um-corr} since the extended context is a node in $\CC$.
		On the other hand, if $\Gamma,\phi[\sigma]\sq\dot{\bot}$, then $\Gamma,\phi[\sigma]\sq\psi[\sigma]$ by (E) and hence 
		$\Gamma \sq \phi[\sigma] \dot\to \psi[\sigma]$ by (IR), contradicting the assumption.
		\qed
	\end{enumerate}
\end{proof}

\begin{corollary}[][K_std_sprv]
	$\CC$ satisfies $\Gamma \Vdash_\sigma \varphi$ iff $\neg\neg(\Gamma \sq
	\subst{\varphi}{\sigma})$.
\end{corollary}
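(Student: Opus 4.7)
The plan is to derive both directions directly from the lemma \Cref{thm:cm-corr}, with the aid of weakening for LJT sequents and the cut rule already established as \Cref{lem:cut}.

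The forward direction is immediate: it is exactly clause (1) of \Cref{thm:cm-corr} applied to $\Gamma \Vdash_\sigma \varphi$, yielding $\neg\neg(\Gamma \sq \subst{\varphi}{\sigma})$.

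For the backward direction, I would use clause (2) of \Cref{thm:cm-corr}. So assume $\neg\neg(\Gamma \sq \subst{\varphi}{\sigma})$; to conclude $\Gamma \Vdash_\sigma \varphi$ it suffices to fix an arbitrary consistent $\Gamma' \supseteq \Gamma$ and $\psi$ with $\Gamma'\,;\subst{\varphi}{\sigma}\sq \psi$ and derive $\neg\neg(\Gamma' \sq \psi)$. By weakening for LJT (applied under the double negation), the assumption gives $\neg\neg(\Gamma' \sq \subst{\varphi}{\sigma})$. Now assume toward contradiction that $\neg(\Gamma' \sq \psi)$; from $\neg\neg(\Gamma' \sq \subst{\varphi}{\sigma})$ we may strip a negation and suppose $\Gamma' \sq \subst{\varphi}{\sigma}$. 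Combined with $\Gamma'\,;\subst{\varphi}{\sigma}\sq \psi$, the cut rule \Cref{lem:cut} yields $\Gamma' \sq \psi$, contradicting the assumption. Hence $\neg\neg(\Gamma' \sq \psi)$ as required.

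There is no real obstacle here since all the heavy lifting was carried out in \Cref{thm:cm-corr} and \Cref{lem:cut}; the only subtlety is keeping track of the double negations correctly when invoking weakening and cut. Since the goal is itself doubly negated, these classical-style manipulations on sequents go through constructively.
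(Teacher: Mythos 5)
Your proof is correct and follows exactly the paper's argument: the forward direction is clause (1) of \Cref{thm:cm-corr}, and the backward direction applies clause (2) by combining $\neg\neg(\Gamma \sq \subst{\varphi}{\sigma})$ (weakened to $\Gamma'$) with the focused premise via the cut rule of \Cref{lem:cut} under the double negation. The paper's proof is a condensed version of the same reasoning.
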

\begin{proofqed}
	The first direction is (1) of \Cref{thm:cm-corr} and the converse follows with (2) since $\neg\neg (\Gamma \sq \phi[\sigma])$ and $\Gamma' \,;
	\subst{\varphi}{\sigma} \sq \psi$ for $\Gamma'\supseteq \Gamma$ together imply $\neg\neg (\Gamma' \sq \psi)$ via the cut rule established in \Cref{lem:cut}.
\end{proofqed}

The advantage of the additional double negations is that, in contrast to the proof in~\cite{HerbelinCut}, we only need a single application of stability to derive completeness.
Thus we can prove the completeness of $\Gamma\vdash\phi$ admissible in~\Cref{sec:L}.

\begin{fact}[][K_std_completeness]
	\begin{enumerate}
		\item
		$\Gamma\Vdash \phi$ implies $\Gamma\sq\phi$, provided that $\Gamma\sq\phi$ is stable.
		\item
		$\Gamma\Vdash \phi$ implies $\Gamma\vdash\phi$, provided that $\Gamma\vdash\phi$ is stable.
	\end{enumerate}
\end{fact}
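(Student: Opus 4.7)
The plan is to use the universal standard model $\CC$ as a witness, much as $\UU$ was used for \Cref{fact:kexploding}, with the extra double negation being absorbed by the assumed stability. For part~(1), since it suffices to derive $\neg\neg(\Gamma\sq\phi)$ and then apply stability, I would first argue that without loss of generality $\Gamma$ may be assumed to be consistent, i.e.\ $\Gamma\not\sq\dot\bot$: if $\Gamma\sq\dot\bot$ then $\Gamma\sq\phi$ follows immediately via rule (E), and the overall statement we are after is just negative, so we may refute the assumption $\neg(\Gamma\sq\phi)$ under the additional hypothesis of consistency.

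Once $\Gamma$ is a node of $\CC$, the key intermediate step is to establish the reflection $\Gamma\Vdash^\CC_\id\Gamma$. For every $\psi\in\Gamma$, rules (A) and (C) give $\Gamma\sq\psi$, hence trivially $\neg\neg(\Gamma\sq\psi)$, which by the corollary to \Cref{thm:cm-corr} is equivalent to $\Gamma\Vdash^\CC_\id\psi$ (using that $\psi=\psi[\id]$). Plugging $\CC$ into the assumption $\Gamma\Vdash\phi$ then yields $\Gamma\Vdash^\CC_\id\phi$, which by the same corollary gives $\neg\neg(\Gamma\sq\phi)$, contradicting our assumption and completing the negative argument. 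A final appeal to stability of $\Gamma\sq\phi$ delivers $\Gamma\sq\phi$.

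For part~(2) the strategy is the same up to the very last step: the chain of implications above in fact produces $\neg\neg(\Gamma\sq\phi)$, and by \Cref{fact:cutfree} every sequent derivation of $\Gamma\sq\phi$ can be turned into an ND derivation $\Gamma\vdash\phi$, so monotonicity of double negation yields $\neg\neg(\Gamma\vdash\phi)$. Stability of $\Gamma\vdash\phi$ then concludes. The main subtlety I expect is the justification that we may assume $\Gamma$ consistent: this is not decidable in general, but because the eventual conclusion sits under a double negation (either explicitly or via stability), the classical-looking case split $\Gamma\sq\dot\bot$ vs.\ $\Gamma\not\sq\dot\bot$ is admissible without further assumptions.
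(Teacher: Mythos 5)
Your proof is correct and follows essentially the same route as the paper: instantiate the validity assumption at the universal standard model $\CC$ to obtain $\neg\neg(\Gamma\sq\phi)$ via \Cref{thm:cm-corr}, then discharge the double negation by stability (for part~(2), pushing the translation of \Cref{fact:cutfree} under the double negation rather than invoking the full equivalence of $\sq$ and $\vdash$ as the paper does — an inessential difference). Your explicit handling of the consistency case split, needed because $\CC$ is carried only by consistent contexts, fills in a detail the paper's one-line proof glosses over, and the double-negated case distinction you use is exactly the device the paper itself employs in the implication case of \Cref{thm:cm-corr}.
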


\begin{proof}
	\begin{enumerate}
		\item
		Since $\Gamma\Vdash \phi$ implies $\neg\neg(\Gamma \sq
		\varphi)$, we can conclude $\Gamma\sq\phi$ per stability.
		\item
		Since $\Gamma\sq \phi$ iff $\Gamma\vdash \phi$ per soundness and completeness (Facts~\ref{fact:ksoundness} and \ref{fact:kexploding}).
		\qed
	\end{enumerate}
\end{proof}

Conversely, unrestricted completeness requires the stability of classical ND.

\begin{fact}[][cend_dn]
	Completeness of $\Gamma\sq\phi$ implies stability of $\Gamma\vdash_c\phi$.
\end{fact}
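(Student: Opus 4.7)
My plan is to proceed via the Gödel--Gentzen negative translation $\phi^g$, which maps atoms $P\vec t$ to $\dot\neg\dot\neg(P\vec t)$ and commutes with $\dot\bot$, $\dot\to$, and $\dot\forall$. The standard translation lemma $\Gamma \vdash_c \phi \leftrightarrow \Gamma^g \vdash \phi^g$, provable by induction on classical derivations, reduces the claim to showing stability of the intuitionistic judgement $\Gamma^g \vdash \phi^g$ uniformly in $\Gamma$ and $\phi$.

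The key observation is that in any standard Kripke model $\KK$, for every $\rho$ and $w$, the proposition $w \Vdash_\rho \phi^g$ is $\neg\neg$-stable, shown by induction on $\phi$. Translated atoms $\dot\neg\dot\neg(P\vec t)$ interpret as $\forall v \succeq w.\, \neg(v \Vdash_\rho \dot\neg(P\vec t))$, a $\neg$-proposition and hence stable; $\dot\bot$ in a standard model interprets as $\bot$, which is stable; and stability is preserved by implication (since $A \to B$ is stable whenever $B$ is) and by universal quantification (since $\forall x.\,P\,x$ is stable whenever each $P\,x$ is). Consequently, $\Gamma^g \Vdash \phi^g$, being a universal quantification with a stable conclusion, is itself $\neg\neg$-stable.

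Combining this with the assumed completeness of $\sq$, the cut-free translation (\Cref{fact:cutfree}), and intuitionistic Kripke soundness (\Cref{fact:ksoundness}, applicable since every standard Kripke model is exploding), we obtain the chain $\Gamma^g \vdash \phi^g \to \Gamma^g \Vdash \phi^g \to \Gamma^g \sq \phi^g \to \Gamma^g \vdash \phi^g$. Hence $\Gamma^g \vdash \phi^g$ coincides with the stable proposition $\Gamma^g \Vdash \phi^g$ and is thus stable; by the translation lemma, stability transfers to $\Gamma \vdash_c \phi$. The main technical step is the stability check for $w \Vdash_\rho \phi^g$: the negative translation is designed precisely so that each interpretation is $\neg\neg$-stable, which is exactly what lets us discharge the double negation in $\neg\neg(\Gamma \vdash_c \phi)$ via the semantic route.
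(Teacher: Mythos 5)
Your proof is correct, but it decomposes the problem differently from the paper's. The paper's argument also routes through a double-negation translation and the assumed completeness, yet it avoids any induction on formulas: from $\neg\neg(\Gamma\vdash_c\phi)$ it reduces the goal to $\Gamma,\dot\neg\phi\vdash_c\dot\bot$, converts this to the sequent goal $(\Gamma,\dot\neg\phi)^N\sq\dot\bot$ via the translation, applies completeness so that what remains is to refute an arbitrary standard model of $(\Gamma,\dot\neg\phi)^N$, and only then --- the goal now being (essentially) $\bot$ and hence negative --- discharges the double negation and derives the contradiction from soundness. You instead prove the general lemma that forcing of negatively translated formulas is stable in standard Kripke models, conclude that $\Gamma^g\Vdash\phi^g$ is stable, and sandwich $\Gamma^g\vdash\phi^g$ between occurrences of that stable proposition using soundness, the assumed completeness, and \Cref{fact:cutfree}. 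Your route costs an extra induction (stability of $w\Vdash_\rho\phi^g$) but makes the mechanism explicit and reusable: it isolates exactly why the semantic side is stable, rather than manufacturing a negative goal through the consistency trick. One small point of care: the paper's standard models only require $\bot^\KK_w\to\bot$, not that $\bot^\KK_w$ equals $\bot$, so in your base cases the interpretations of $\dot\bot$ and of translated atoms are refutable propositions rather than literal negations; refutable propositions are still stable, so your induction goes through, but the justification ``a $\neg$-proposition, hence stable'' should be adjusted accordingly.
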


\begin{proofqed}
	Assume completeness of $\Gamma\sq\phi$ and suppose $\neg\neg(\Gamma\vdash_c\phi)$.
	We prove $\Gamma\vdash_c\phi$, so it suffices to show $\Gamma,\dot\neg\phi\vdash_c\dot\bot$.
	Employing a standard double-negation translation $\phi^N$ on formulas $\phi$, it is equivalent to establish $(\Gamma,\dot\neg\phi)^N\sq\dot\bot$.
	Applying completeness, however, we may assume a standard model $\KK$ with $\KK\Vdash_\rho (\Gamma,\dot\neg\phi)^N$ and derive a contradiction.
	Hence we conclude $\Gamma\vdash_c\phi$ and so $\Gamma^N\Vdash \phi^N$ from $\neg\neg(\Gamma\vdash_c\phi)$ and soundness, in conflict to $\KK\Vdash_\rho (\Gamma,\dot\neg\phi)^N$.
\end{proofqed}

Thus, the completeness of intuitionistic ND is similar to the classical case.

\setCoqFilename{Analysis}
\begin{theorem}
	\label{thm:icharacterisations}
	\begin{enumerate}
		\coqitem[kcompleteness_iff_MPL]
		Completeness of $\Gamma\vdash\phi$ is equivalent to $\MPL$.
		\coqitem[kcompleteness_enum_implies_MP]
		Completeness of $\TT\vdash\phi$ for enumerable \TT implies \MP.
		\coqitem[kcompleteness_implies_XM]
		Completeness of $\TT\vdash\phi$ for arbitrary \TT implies \EM.
	\end{enumerate}
\end{theorem}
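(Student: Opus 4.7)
The proof follows the template of \Cref{thm:characterisations} adapted to Kripke semantics, using \Cref{coq:K_std_completeness} (stability gives completeness), \Cref{coq:cend_dn} (completeness of $\Gamma\sq\phi$ forces classical stability), and the forthcoming \Cref{lem:MPL-equivs} that links $\MPL$ with stability of both $\Gamma\vdash\phi$ and $\Gamma\vdash_c\phi$.

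For (1), the backward direction is immediate: $\MPL$ gives stability of $\Gamma\vdash\phi$ via \Cref{lem:MPL-equivs}, and \Cref{coq:K_std_completeness}(2) then yields completeness. For the forward direction I first observe that $\Gamma\sq\phi$ and $\Gamma\vdash\phi$ are interderivable: \Cref{fact:cutfree} provides $\Gamma\sq\phi\to\Gamma\vdash\phi$, while \Cref{fact:ksoundness} followed by \Cref{fact:kexploding}(1) yields the converse via $\Gamma\Vdash_e\phi$. Hence completeness of $\Gamma\vdash\phi$ transfers to completeness of $\Gamma\sq\phi$, and \Cref{coq:cend_dn} delivers stability of $\Gamma\vdash_c\phi$, which is $\MPL$ by \Cref{lem:MPL-equivs}.

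For (2) and (3) I reuse the same test theories as in the classical analysis. Given $f:\Nat\to\Bool$, set $\TT_f:=\{\phi\mid\phi=\dot{\bot}\land\exists n.\,f\,n=\btrue\}$, which is enumerable; given $P:\Prop$, set $\TT_P:=\{\phi\mid\phi=\dot{\bot}\land P\}$. In both cases $\TT\vdash\dot{\bot}$ directly reduces to the side condition, and soundness together with the assumed completeness couples this with $\TT\Vdash\dot{\bot}$. To deduce stability of the side condition it suffices to show that $\TT\Vdash\dot{\bot}$ is itself stable. Unfolding, $\TT\Vdash\dot{\bot}$ reads $\forall\KK\,\rho.\,\KK\Vdash_\rho\TT\to\forall w.\,\bot^\KK_w$, and since $\KK$ is required to be standard we have $\bot^\KK_w\to\bot$, so the inner tail collapses to a negation (or is vacuous if there are no worlds). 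Universally quantified negations are stable, so $\TT\Vdash\dot{\bot}$ is stable as well. This transfers to stability of $\exists n.\,f\,n=\btrue$ for (2), giving $\MP$ by \Cref{fact:MP}, and to stability of an arbitrary $P$ for (3), which is equivalent to $\EM$.

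The main obstacle is the verification that $\TT\Vdash\dot{\bot}$ is stable without any extra assumption; everything else is a direct rerun of the classical argument. The key is precisely that the standardness condition $\bot^\KK_w\to\bot$ is strong enough to turn satisfaction of $\dot{\bot}$ into a negation, so that the whole statement inherits stability without appealing to $\MP$, $\MPL$, or $\EM$ themselves — otherwise the implications (2) and (3) would be trivially circular.
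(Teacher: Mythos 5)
Your proposal is correct and follows essentially the route the paper intends (the paper leaves this proof implicit, deferring to the classical analogue): part (1) via \Cref{coq:K_std_completeness}, \Cref{coq:cend_dn} and \Cref{lem:MPL-equivs}, and parts (2)--(3) via the same test theories as in \Cref{thm:characterisations}. You also correctly isolate and justify the one point where the Kripke case needs extra care, namely that standardness ($\bot^\KK_w\to\bot$) makes $\TT\Vdash\dot\bot$ stable without circularity.
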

\setCoqFilename{KripkeCompleteness}

\subsection{On Markov's Principle}
\label{sec:L}
\setCoqFilename{Markov}

We show that the stability of $\Gamma \vdash_c \phi$ and $\Gamma \vdash \phi$ is equivalent to an object-level version of Markov's Principle referencing procedures in a concrete model of computation.
For mechanisation purposes, we will use the call-by-value $\lambda$-calculus $\L$~\cite{Plotkin75,ForsterL} as model of computation.
Since on paper the same proofs can be carried out for any model of computation we will not go into details of $\L$.
We only need two notions:
first, $\L$-enumerability~\cite[Definition 6]{ForsterLambda}, which is defined like synthetic enumerability, but where the enumerator is an $\L$-computable function.
Secondly, the halting problem for $\L$, defined as  $\eva s := \textit{``the term $s$ terminates''}$.

We define the object-level Markov's Principle $\MPL$ as stability of $\eva$:
$$\MPL := \forall s.~\neg\neg\eva{s} \to \eva{s}$$

$\MPL$ can also be phrased similarly to $\MP$ with a condition on the sequence:

\begin{lemma}[{\cite[Theorem 45]{ForsterL}}]
	$\MPL$ is equivalent to \[\forall f:\Nat\to\Bool.\,\text{$\L$-computable}\,f \to \neg\neg(\exists n.\,f\,n=\btrue)\to \exists n.\,f\,n=\btrue.\]
\end{lemma}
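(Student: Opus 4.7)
The plan is to prove the two directions separately, using the internal computability of \L in both cases.

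For the forward direction ($\MPL$ implies the sequence version), fix an \L-computable $f:\Nat\to\Bool$. I would build an \L-term $s_f$ that performs an unbounded linear search: it tries $f\,0,f\,1,f\,2,\dots$ in turn and halts as soon as it finds some $n$ with $f\,n=\btrue$. Since $f$ is \L-computable, such a search term is \L-definable, and by construction $\eva{s_f}\leftrightarrow \exists n.\,f\,n=\btrue$. Thus the double-negation assumption on the sequence transports to $\neg\neg\eva{s_f}$, $\MPL$ gives $\eva{s_f}$, and from a terminating \L-computation we can constructively read off a witness $n$ (for instance by running the minimisation and using that \L-reduction is deterministic and decidable step-by-step, so a terminating search actually yields its witness).

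For the reverse direction (sequence version implies $\MPL$), fix a term $s$ with $\neg\neg\eva{s}$. I would let $f_s:\Nat\to\Bool$ be the function that returns $\btrue$ on input $n$ if $s$ reduces to a value within $n$ steps and $\bfalse$ otherwise. This $f_s$ is \L-computable because the single-step reduction relation of \L is decidable and \L-definable, so stepping $n$ times and checking for a value is an \L-definable procedure. Furthermore $\exists n.\,f_s\,n=\btrue$ is logically equivalent to $\eva{s}$, so $\neg\neg\eva{s}$ transports to the precondition of the sequence version. Applying the sequence version yields some $n$ with $f_s\,n=\btrue$, hence $\eva{s}$.

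The main obstacle is not any of the logical manipulations — they are just shuffling of the double-negation modality — but rather the two \emph{synthetic} computability obligations: exhibiting the unbounded search term $s_f$ in the first direction, and exhibiting the bounded-step simulator $f_s$ in the second. Both require one to know that \L has an internal step function and that \L-computable functions are closed under primitive recursion and bounded iteration. In the paper these facts live in the supporting library on \L, so the proof sketch can cite them rather than reconstructing them, and the argument collapses to the two short transport arguments above.
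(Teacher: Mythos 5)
Your argument is correct and is essentially the proof of the result the paper cites here ([ForsterL, Theorem 45]) rather than reproves: one direction packages an $\L$-computable $f$ into an unbounded-search term $s_f$ with $\eva{s_f}\leftrightarrow\exists n.\,f\,n=\btrue$, the other packages a term $s$ into the step-indexed termination test $f_s$, and both constructions are available because $\L$ has an $\L$-definable step function and is closed under the required recursions. The only point worth making explicit is that extracting the witness $n$ from $\eva{s_f}$ is just the already-stated constructive equivalence $\eva{s_f}\leftrightarrow\exists n.\,f\,n=\btrue$ (witness extraction for decidable predicates on $\Nat$ from a propositional existence proof), so no further work is needed there.
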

\begin{corollary}[][MP_MPL]
	$\MP$ implies $\MPL$.
\end{corollary}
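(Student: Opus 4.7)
The plan is to exploit the equivalent reformulation of $\MPL$ stated in the preceding lemma, which characterises $\MPL$ as double-negation stability of satisfiability restricted to $\L$-computable Boolean sequences, and observe that $\MP$ supplies such stability for \emph{all} Boolean sequences, of which $\L$-computable ones are a special case.

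Concretely, I would assume $\MP$ and aim to prove the characterisation of $\MPL$ given just above the corollary. Given an $\L$-computable $f : \Nat \to \Bool$ with $\neg\neg(\exists n.\, f\,n = \btrue)$, I would simply forget the computability witness and apply $\MP$ to $f$ qua function $\Nat \to \Bool$, concluding $\exists n.\, f\,n = \btrue$. The preceding lemma then yields $\MPL$.

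If one prefers a direct argument that avoids appealing to the characterisation, the alternative is to invoke the standard fact that $\L$ has a decidable step-indexed evaluation: there is a Boolean function $f_s : \Nat \to \Bool$ with $f_s\,n = \btrue$ iff $s$ terminates within $n$ steps, so that $\eva{s} \leftrightarrow \exists n.\, f_s\,n = \btrue$. Then $\neg\neg \eva{s}$ becomes $\neg\neg(\exists n.\, f_s\,n = \btrue)$, and $\MP$ applied to $f_s$ delivers $\eva{s}$.

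There is no real obstacle here: the only subtlety is that $\MP$ is phrased for arbitrary (type-theoretic) Boolean sequences while $\MPL$ speaks about a concrete model of computation, and both routes above handle this cleanly by either weakening $\L$-computability to plain functionhood or packaging termination as satisfiability of the step-indexed decider.
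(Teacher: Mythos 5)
Your first route is exactly the paper's intended argument: the corollary is placed right after the reformulation of $\MPL$ precisely so that one can forget the $\L$-computability witness and apply $\MP$ to the underlying Boolean sequence. The proposal is correct and matches the paper's proof.
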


We show \Cref{lem:MPL-equivs}, i.e.~that $\MPL$ is equivalent to both the stability of $\vdash_c$ and $\vdash$ for contexts, thereby establishing that completeness of provability for standard Tarski and Kripke semantics for finite theories is equivalent to $\MPL$.

\begin{lemma}[{\cite[Fact 2.16]{ForsterCPP}}]
	Let $p$ and $q$ be predicates.
	If~$p$~many-one reduces to $q$ (i.e.~$\exists f.\forall x.\,p x \leftrightarrow q (f x)$, written $p \preceq q$) and $q$ is stable, then $p$ is stable.
\end{lemma}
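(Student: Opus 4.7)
The plan is to show that stability transfers along many-one reductions by a straightforward double-negation pushing argument. Suppose we are given the reducing function $f$ with $\forall x.\,p\,x \leftrightarrow q\,(f\,x)$ and the assumption that $q$ is stable. Fix $x$ and assume $\neg\neg p\,x$; the goal is to derive $p\,x$.

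First I would reduce the goal via the reduction: because $q\,(f\,x) \to p\,x$ (the backward direction of the reduction), it suffices to establish $q\,(f\,x)$. Using stability of $q$ at the point $f\,x$, this further reduces to showing $\neg\neg q\,(f\,x)$. So the core step is to transport the double-negation assumption on $p$ across $f$.

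For this transport, I would argue intuitionistically: assume $\neg q\,(f\,x)$; then by the forward direction $p\,x \to q\,(f\,x)$ of the reduction, we obtain $\neg p\,x$, which contradicts the hypothesis $\neg\neg p\,x$. Hence $\neg\neg q\,(f\,x)$ holds, stability of $q$ gives $q\,(f\,x)$, and the reduction finally yields $p\,x$.

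There is no real obstacle here: the argument only uses that $\neg\neg$ is a functor on implications and that stability is a pointwise property. In particular, no side condition on $f$ (such as computability) enters the proof — the reducing function is applied purely logically. The same argument pattern is what will later be used to transfer stability of $\eva$ to stability of $\vdash$ and $\vdash_c$ once those relations are exhibited as many-one reducible to $\L$-halting.
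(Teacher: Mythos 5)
Your proof is correct and is exactly the standard argument: transport $\neg\neg p\,x$ to $\neg\neg q\,(f\,x)$ by double contraposition along the forward direction of the reduction, apply stability of $q$, and come back via the backward direction. The paper itself only cites this as Fact 2.16 of an external reference without reproving it, and your fully intuitionistic derivation is precisely what that fact's proof amounts to.
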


Thus, in order to prove the equivalence of the stability of $\eva$, $\Gamma \vdash \phi$, and $\Gamma \vdash_c \phi$, it suffices to give many-one reductions between them.
We start with the two simpler reductions:

\begin{lemma}[][cprv_iprv]
	$\vdash_c\, \preceq~\vdash$, and thus stability of $\Gamma\vdash\phi$ implies the stability of $\Gamma\vdash_c\phi$.
\end{lemma}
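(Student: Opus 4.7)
The plan is to define a Gödel-Gentzen-style double-negation translation $\phi^N$ on the $\to,\forall,\bot$-fragment by setting $(\dot\bot)^N := \dot\bot$, $(P\,\vec t\,)^N := \dot{\neg}\dot{\neg}\,P\,\vec t$, $(\phi\dot\to\psi)^N := \phi^N\dot\to\psi^N$, and $(\dot\forall\phi)^N := \dot\forall\,\phi^N$. The reduction $\vdash_c\,\preceq\,\vdash$ will then be witnessed by the computable map $(\Gamma,\phi)\mapsto(\Gamma^N,\phi^N)$, where $\Gamma^N$ denotes the pointwise translation of the context.

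Before tackling the reduction itself, I would establish two structural facts: that translation commutes with substitution, i.e.\ $(\phi[\sigma])^N = \phi^N[\sigma]$ modulo the expected shifts, and that every translated formula is intuitionistically stable, i.e.\ $\dot{\neg}\dot{\neg}\phi^N\vdash\phi^N$. The latter proceeds by induction on $\phi$, with the atomic case being immediate since the translation already prefixes two negations, and the remaining cases propagating stability through the connectives. I would then prove the forward direction $\Gamma\vdash_c\phi \to \Gamma^N\vdash\phi^N$ by induction on the classical derivation. All intuitionistic rules translate transparently using the substitution lemma. The only genuine obligation is the Peirce axiom, requiring $\Gamma^N\vdash((\phi^N\dot\to\psi^N)\dot\to\phi^N)\dot\to\phi^N$; this follows by a standard intuitionistic argument that turns the hypothesis into $\dot{\neg}\dot{\neg}\phi^N$ via ex falso on the inner implication, and then invokes stability of $\phi^N$.

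For the converse $\Gamma^N\vdash\phi^N \to \Gamma\vdash_c\phi$, I would show by induction on $\phi$ that the translation preserves classical meaning, i.e.\ that $\vdash_c \phi\dot\to\phi^N$ and $\vdash_c \phi^N\dot\to\phi$ hold, using classical logic to strip the spurious double negations introduced at atoms. Composing these with the inclusion $\Gamma\vdash\psi \to \Gamma\vdash_c\psi$ closes the reduction. The stability claim then follows immediately from the many-one reducibility lemma quoted just above. The main bookkeeping challenge I expect will be the interaction of the translation with de Bruijn shifts under binders, in particular when lifting the substitution-commutation lemma through the universal case; beyond that, the argument is the textbook Gödel-Gentzen translation adapted to the restricted fragment.
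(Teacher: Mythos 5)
Your proposal is correct and is precisely the argument the paper intends: its proof of this lemma is the one-line remark ``using a standard double-negation translation proof,'' and your Gödel--Gentzen translation with the stability lemma $\dot{\neg}\dot{\neg}\phi^N\vdash\phi^N$ (used to discharge the Peirce axiom) together with the classical equivalence $\vdash_c\phi\,\dot{\leftrightarrow}\,\phi^N$ for the converse is exactly that standard proof, correctly adapted to the $\to,\forall,\bot$-fragment and to the de Bruijn substitution discipline.
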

\begin{proofqed}
	Using a standard double-negation translation proof.
\end{proofqed}

\begin{lemma}[][halt_cprv]
	$\eva \preceq ~\vdash_c$, and thus stability of $\Gamma\vdash_c\phi$ implies $\MPL$.
\end{lemma}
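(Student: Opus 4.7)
The plan is to build a many-one reduction $s \mapsto (\Gamma_s, \phi_s)$ such that $\eva\,s$ holds iff $\Gamma_s \vdash_c \phi_s$, after which stability of $\vdash_c$ pointwise transports to stability of $\eva$, i.e.\ $\MPL$.

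The most direct route is to axiomatise enough of $\L$-reduction inside first-order logic. First I would fix a signature $\Sigma$ with function symbols sufficient to encode $\L$-terms as first-order terms (for application, abstraction under de~Bruijn indices, and a zero-ary symbol per variable level) together with a binary predicate $R$ intended to denote multi-step reduction to a value. Then I would write a \emph{finite} set of axioms $\Gamma_{\L}$ capturing the $\beta$-reduction relation faithfully: one clause making $R$ reflexive on values, one clause simulating a $\beta$-step composed with $R$, and congruence clauses for application. For each closed $\L$-term $s$ I set $\phi_s := \dot{\exists}\, R(\ulcorner s\urcorner, 0)$ and take $\Gamma_s := \Gamma_{\L}$ uniformly. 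The function $s \mapsto (\Gamma_{\L}, \phi_s)$ is computable, so this is a many-one reduction provided the equivalence holds.

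For the forward direction, if $\eva\,s$ then there is a concrete finite reduction trace $s \twoheadrightarrow v$ to a value $v$; by induction on the length of this trace one replays each $\beta$-step inside $\Gamma_{\L}$, obtaining $\Gamma_{\L} \vdash_c R(\ulcorner s\urcorner, \ulcorner v\urcorner)$ and then $\Gamma_{\L} \vdash_c \phi_s$ by (EI). For the backward direction, I would rely on soundness (Fact~\ref{tarski_soundness}) applied to the standard model $\MM_{\L}$ whose domain is the set of $\L$-terms, interpreting each function symbol by the corresponding syntactic constructor and $R$ by the actual reduction-to-value relation: this model is classical and validates $\Gamma_{\L}$, so $\Gamma_{\L} \vdash_c \phi_s$ forces $\MM_{\L} \vDash \phi_s$, which is precisely $\eva\,s$.

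The main obstacle is calibrating the axiomatisation so that it is \emph{simultaneously} finite, first-order, and faithful: one must avoid needing induction to prove the forward simulation (handled by proceeding by induction in the meta-theory on the length of the concrete trace, not inside the deduction) and ensure the backward soundness model indeed satisfies every axiom (which constrains how $\beta$-reduction under binders and substitution is phrased). Once the reduction is in place, combining it with the many-one lemma from the excerpt gives: stability of $\Gamma \vdash_c \phi$ implies stability of $\eva$, i.e.\ $\MPL$.
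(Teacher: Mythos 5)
Your overall strategy differs from the paper's: the paper proves $\eva \preceq\, \vdash_c$ by composing a chain of already-mechanised reductions ($\L$-halting to multi-tape Turing machine halting, to single-tape halting, to the Post correspondence problem, and finally PCP to $\vdash_c$ by adapting \cite[Cor.~3.49]{ForsterCPP}), whereas you axiomatise $\L$-reduction in first-order logic directly. That is a legitimate alternative in principle --- it is morally what the final PCP-to-$\vdash_c$ link does, only for a far simpler rewriting system --- but your backward direction has a genuine gap.

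The gap is the unjustified claim that the standard model $\MM_{\L}$ (domain the $\L$-terms, $R$ interpreted as actual reduction-to-value) ``is classical''. In this paper a model is classical only if it validates every instance of Peirce's law, and constructively that holds essentially only when satisfaction of formulas in the model is $\neg\neg$-stable. Your $R$ is interpreted by a $\Sigma^0_1$ relation (``$s$ reduces to the value $v$''), which is not stable without the very principle $\MPL$ you are trying to derive, and your goal formula carries a $\dot\exists$, whose satisfaction is not stable even over stable atoms. Hence $\MM_{\L}$ cannot be shown classical, Fact~\ref{tarski_soundness} does not apply (it is moreover only stated for the $\to,\forall,\bot$-fragment, which excludes your $\dot\exists$), and the implication $\Gamma_{\L}\vdash_c\phi_s\to\eva\,s$ is unproved. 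Patching this by interpreting $R\,s\,v$ as $\neg\neg(s \twoheadrightarrow v)$ restores stability of the atoms but then only yields $\neg\neg\,\eva\,s$, which does not close the many-one reduction. The known repair is a Friedman-style device: make the conclusion an atom $Q$, and interpret falsity (or $Q$) by the target proposition itself, i.e.\ the exploding-model / $\phi_\bot$-substitution machinery of \Cref{lem:extension}; this is exactly the care taken inside the cited PCP reduction that your plan would have to replicate. The forward simulation (replaying the trace with a finitely axiomatised substitution predicate) is plausible but is also where the bulk of the mechanisation effort lies, which is precisely what the paper avoids by reusing the existing library reductions.
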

\begin{proofqed}
	$\eva$ reduces to the halting problem of multi-tape Turing machines~\cite{Wuttke2018}, which reduces to the halting problem of single-tape Turing machines~\cite{forster2019TMs}, which reduces to the Post correspondence problem~\cite{forster2018verification}, which in turn reduces to $\vdash_c$ by adapting~\cite[Corollary 3.49]{ForsterCPP}.
\end{proofqed}

Since $p \preceq \eva$ for all $\L$-enumerable predicates $p$~\cite[Theorem 7]{ForsterLambda}, it suffices to give an $\L$-computable enumeration of type $\Nat \to \List (\Form)$ of provable formulas $\vdash \phi$.
Note that we continue to assume signatures to be (synthetically) enumerable and do \emph{not} have to restrict to $\L$-enumerability, which is enabled by the following signature extension lemma:

\setCoqFilename{Extend}
\begin{lemma}[][prv_embed]
	\label{coq:extension}
	Let $\iota$ be an invertible embedding from $\Sigma$ to $\Sigma'$.
	Then $\vdash \phi$ over $\Sigma$ if and only if $\vdash \iota \phi$ over $\Sigma'$, where $\iota \phi$ is the recursive application of $\iota$ to formulas.
\end{lemma}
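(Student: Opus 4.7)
The plan is to handle each direction by induction on the natural deduction derivation, relying on the fact that the inference rules of first-order ND are purely structural in the function and predicate symbols and hence survive any symbol renaming that commutes with substitution. Since $\iota$ is an invertible embedding, we have some $\iota^{-1}:\Sigma'\to\Sigma$ with $\iota^{-1}\circ\iota=\id$ on symbols, extended recursively to terms and formulas (on symbols not in the image of $\iota$, we can pick an arbitrary default of the right arity; this choice is irrelevant because it will only be applied to formulas that already lie in the image in the proof below).

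For the forward direction, I would first prove the technical commutation lemma $\iota(\phi[\sigma]) = (\iota\phi)[\iota\circ\sigma]$, together with $\iota(\up\phi) = \up(\iota\phi)$ and $\iota(\up\Gamma) = \up(\iota\Gamma)$, by straightforward induction on $\phi$ using the de~Bruijn definition from \Cref{def:subst_form}. Then induction on $\Gamma\vdash\phi$ over $\Sigma$ produces $\iota\Gamma\vdash\iota\phi$ over $\Sigma'$: each rule of \Cref{def:ND} translates literally, since (AI) and (EE) are covered by $\iota(\up\Gamma)=\up(\iota\Gamma)$, and (AE) and (EI) are covered by the substitution lemma applied to the witness $t$.

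For the backward direction, the key observation is that $\iota^{-1}$ satisfies exactly the same commutation lemmas as $\iota$, again by induction on the formula. A second induction, this time on a derivation $\iota\Gamma\vdash\iota\phi$ over $\Sigma'$, then yields $\iota^{-1}(\iota\Gamma)\vdash\iota^{-1}(\iota\phi)$ over $\Sigma$; because $\iota^{-1}\circ\iota$ is the identity on $\Sigma$-formulas, this is precisely $\Gamma\vdash\phi$. Note that during this induction we may encounter intermediate $\Sigma'$-formulas outside the image of $\iota$ (for instance through uses of (AE) instantiating with a $\Sigma'$-term), but these are merely transported through $\iota^{-1}$ arbitrarily, which is fine because only the structural shape of the derivation matters.

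The only genuinely delicate point is the interaction of $\iota$ with the de~Bruijn shift $\up$, and in particular making sure that the bookkeeping in the (AI)/(EE)/(AR) cases lines up: it must be the case that shifting a substitution before translating agrees with translating and then shifting, which is a one-line unfolding once one recalls that $\iota$ acts only on function and predicate symbols and leaves variable indices untouched. Everything else is routine.
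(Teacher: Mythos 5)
Your forward direction matches the paper: a routine induction on the derivation using commutation of $\iota$ with substitution and shifting. The backward direction, however, has a genuine gap. You propose to extend $\iota^{-1}$ to all of $\Sigma'$ by sending each symbol outside the image of $\iota$ to ``an arbitrary default of the right arity'' in $\Sigma$ --- but such a default need not exist. In the very application this lemma is used for, an arbitrary enumerable signature $\Sigma$ is embedded into $\Sigma_{\text{max}}=(\Nat^2,\Nat^2)$, which has function and predicate symbols of every arity; $\Sigma$ may be purely relational, or may simply lack symbols of the arities occurring in $\Sigma'$, so no arity-preserving retraction on symbols exists. Nor can you patch this by collapsing alien-headed terms to a fixed variable: then $\iota^{-1}(f'\,\vec t\,[\sigma])=x_0$ while $(\iota^{-1}(f'\,\vec t\,))[\iota^{-1}\circ\sigma]=\iota^{-1}(\sigma\,0)$, so the substitution commutation lemma fails, and it is exactly this lemma that the (AE)/(EI) cases of your second induction require. (Mapping alien terms to a fixed \emph{closed} term would commute, but a closed term also need not exist.) Your parenthetical claim that the choice of defaults ``is irrelevant because it will only be applied to formulas that already lie in the image'' is contradicted two sentences later, where you correctly observe that intermediate formulas of the derivation escape the image --- and it is precisely on those that the construction breaks.

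The paper sidesteps this by proving the backward direction \emph{semantically}: every Kripke model over $\Sigma$ is extended to one over $\Sigma'$ (interpreting the alien symbols arbitrarily over the nonempty domain) in a way that preserves satisfaction of formulas in the image of $\iota$; then $\iota\Gamma\vdash\iota\phi$ gives $\iota\Gamma\Vdash_e\iota\phi$ by soundness, hence $\Gamma\Vdash_e\phi$ over $\Sigma$, hence $\Gamma\vdash\phi$ by the constructive completeness theorem for exploding Kripke models. This detour is the missing idea. A purely syntactic proof is possible in principle (replacing maximal alien subterms by fresh variables throughout a derivation), but it is substantially more delicate than the symbol-renaming argument you describe and is not what the paper does.
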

\begin{proofqed}
	$\Gamma \vdash \phi \to \iota \Gamma \vdash \iota \phi$ follows trivially by induction.
	For the inverse direction, we show that Kripke models $M$ over $\Sigma$ can be extended to Kripke models $\iota M$ over $\Sigma$ s.t. $\rho, u \Vdash_M \phi \leftrightarrow \rho, u \Vdash_{\iota M} \iota \phi$.
	Then $\iota \Gamma \vdash \iota \phi \to \Gamma \vdash \phi$ follows from soundness and completness w.r.t. exploding models.
\end{proofqed}

\setCoqFilename{LEnum}
\begin{lemma}[][enum_sprvie]
	$\Gamma\vdash\phi$ is $\L$-enumerable for any enumerable signature $\Sigma$.
\end{lemma}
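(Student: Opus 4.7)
The plan is to reduce the general case to one where all syntactic ingredients are directly $\L$-computable, and then construct the enumerator by saturating the proof rules in a step-indexed fashion. First I would exploit \Cref{coq:extension}: since $\Sigma = (\MCL{F}_\Sigma, \MCL{P}_\Sigma)$ has discrete and enumerable function- and predicate-symbol types, one can construct an invertible embedding into a canonical signature $\Sigma^*$ whose symbols are pairs of natural numbers (index and arity). Provability over $\Sigma$ then coincides, along the embedding, with provability over $\Sigma^*$, and the task reduces to showing $\L$-enumerability for this concrete signature.

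Next I would observe that for $\Sigma^*$ the inductive types $\Term$ and $\Form$ are $\L$-enumerable data types, and hence so are $\List(\Form)$ and products built from them. The enumerator of $\Gamma \vdash \phi$ would then be obtained as a fixed-point style construction: define a sequence $E_n : \List(\List(\Form) \times \Form)$ with $E_0 := []$ and $E_{n+1}$ obtained by listing, for each currently enumerated pair (and, where required, each candidate formula, term, or context drawn from the first $n$ entries of the enumerators of $\Form$, $\Term$, and $\List(\Form)$), the new pairs produced by any one application of the proof rules of ND, including the axiom rule $\phi \in \Gamma$ and the shift- and substitution-based rules (AI), (AE), (EI), (EE). The collected union $\bigcup_n E_n$ then enumerates exactly $\{(\Gamma,\phi) \mid \Gamma \vdash \phi\}$ by a standard induction on derivations.

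The main obstacle will be the $\L$-computability of all the auxiliary ingredients: the enumerators of $\Term$, $\Form$, and $\List(\Form)$; the substitution operation $\phi[\sigma]$ and its specialisation $\phi[t]$; the context shift $\up \Gamma$; list operations such as membership, concatenation, and map; and the step function $E_n \mapsto E_{n+1}$ itself. Each of these is structurally recursive and should fall under the general toolkit for representing inductive types in $\L$ developed in the Coq mechanisation, but verifying $\L$-computability for every piece is the labour-intensive part of the proof. Once it is in place, the final statement follows because a union of $\L$-enumerated finite lists, indexed by an $\L$-computable step function, is again $\L$-enumerable.
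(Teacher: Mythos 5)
Your proposal follows essentially the same route as the paper: embed the enumerable signature invertibly into the canonical signature of pairs of naturals (the paper's $\Sigma_{\text{max}} := (\Nat^2, \Nat^2)$ with arities as second projections), establish $\L$-enumerability of terms, formulas, and hence provability over that concrete signature, and transfer back via the signature extension lemma. Your step-indexed saturation of the ND rules and the remark that the real labour lies in verifying $\L$-computability of the auxiliary operations simply spell out what the paper compresses into ``thus provability over $\Sigma_{\text{max}}$ is $\L$-enumerable'' and delegates to its extraction framework.
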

\begin{proofqed}
	Since $\Sigma$ is enumerable, it can be injectively embedded via $\iota$ into the maximal signature $\Sigma_{\text{max}} := (\Nat^2, \Nat^2)$ where the arity functions are just the second projections.
	Since $\Nat^2$ is $\L$-enumerable, terms and formulas over $\Sigma_{\text{max}}$ are also $\L$-enumerable, and thus provability over $\Sigma_{\text{max}}$ is $\L$-enumerable.
	By~\Cref{coq:extension} we obtain that provability over $\Sigma$ is $\L$-enumerable.
\end{proofqed}

\setCoqFilename{Markov}
\begin{corollary}[][iprv_halt]
	$\vdash\, \preceq \eva$, and thus $\MPL$ implies the stability of $\Gamma\vdash\phi$.
\end{corollary}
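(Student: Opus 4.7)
The plan is to assemble the pieces already established in this subsection, since the corollary is essentially a routine composition of the previous lemmas. The key observation is that the hard work has already been carried out: the signature extension lemma (Lemma~\ref{coq:extension}) and the $\L$-enumerability lemma (Lemma~\ref{coq:enum_sprvie}) together ensure that $\Gamma \vdash \phi$ is an $\L$-enumerable predicate, and the cited result from Forster's work on synthetic computability tells us that every $\L$-enumerable predicate many-one reduces to the halting problem $\eva$ of $\L$.

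Concretely, I would proceed as follows. First, invoke Lemma~\ref{coq:enum_sprvie} to obtain that $\Gamma \vdash \phi$ is $\L$-enumerable (note that $\Gamma$ and $\phi$ have to be packaged into a single argument, but this is unproblematic since $\List(\Form) \times \Form$ is a data type and thus $\L$-enumerable). Second, apply the cited result ``$p \preceq \eva$ for all $\L$-enumerable predicates $p$'' from~\cite[Theorem 7]{ForsterLambda} to conclude $\vdash\, \preceq\, \eva$. Third, unfold $\MPL$ as the stability of $\eva$ and apply the many-one reduction lemma stated just before the two preceding reductions: if $p \preceq q$ and $q$ is stable, then $p$ is stable. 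This immediately gives stability of $\Gamma \vdash \phi$ from $\MPL$.

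There is no real obstacle here; the only thing to be careful about is the bookkeeping for the reduction, namely that the reduction function produced by the generic enumerable-implies-reducible-to-$\eva$ argument takes pairs $(\Gamma, \phi)$ to $\L$-terms, and that the stability transfer requires only the weak direction $q\,(f\,x) \to p\,x$ once we know $p\,x \leftrightarrow q\,(f\,x)$. Together with the earlier Lemma~\ref{coq:cprv_iprv} and Lemma~\ref{coq:halt_cprv}, this corollary closes the circle and establishes Lemma~\ref{lem:MPL-equivs}: the three predicates $\eva$, $\Gamma \vdash_c \phi$, and $\Gamma \vdash \phi$ are all mutually many-one reducible, hence their stabilities are equivalent and all equivalent to $\MPL$.
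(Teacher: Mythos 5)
Your proposal is correct and takes essentially the same route as the paper: compose the $\L$-enumerability of $\Gamma\vdash\phi$ with the cited fact that every $\L$-enumerable predicate many-one reduces to $\eva$, then transfer stability backwards along the reduction to get stability of $\Gamma\vdash\phi$ from $\MPL$. The bookkeeping points you raise (packaging $(\Gamma,\phi)$ into a single argument of a data type, and needing only the backward direction of the reduction equivalence for stability transfer) are exactly the ones the mechanisation handles.
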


We conclude the section with observations on independence and admissibility of several statements in Coq's type theory pCuIC.
By \emph{independence} of a statement $P$, we mean that neither $P$ nor $\neg P$ is provable in pCuIC without assumptions.
By \emph{admissibility} of a statement $\forall x.~P(x) \to Q(x)$ we mean that whenever $P(t)$ is provable in pCuIC for a concrete term $t$ without assumptions, $Q(t)$ is as well.
Pédrot and Tabareau~\cite{PedrotMP} show $\MP$ independent (Corollary 41) and admissible (Theorem 33).
This transports to $\MPL$ as well as stability of deduction systems and completeness with respect to model-theoretic semantics.

\newtheorem{observation}[theorem]{Observation}

\begin{theorem}[][MPL_independent]
  The following are all independent and admissible in pCuIC:
  \begin{enumerate}
  \item $\MPL$ %
  \item Stability of both $\Gamma\vdash_c\phi$ and $\Gamma\vdash\phi$.
  \item Completeness of $\mathcal{T}\vdash_c\phi$ for enumerable $\mathcal{T}$ w.r.t.~standard Tarski semantics.
  \item Completeness of $\Gamma\vdash_c\phi$ w.r.t.~standard Tarski semantics.
  \item Completeness of $\Gamma\vdash \phi$ w.r.t.~standard Kripke semantics.
  \end{enumerate}
\end{theorem}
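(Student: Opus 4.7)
The plan is to bootstrap independence and admissibility of each of the five items from the Pédrot--Tabareau results on $\MP$ cited just above, using the implications and equivalences already established earlier in the excerpt.

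\textbf{Admissibility.} First I would observe that each of (1)--(5) is a consequence of $\MP$ in pCuIC: (1) via \Cref{coq:MP_MPL}; (2) via \Cref{lem:MPL-equivs} combined with \Cref{coq:MP_MPL}; (3) via \Cref{thm:characterisations}(2); (4) by composing \Cref{coq:MP_MPL} with \Cref{thm:characterisations}(1); and (5) by composing \Cref{coq:MP_MPL} with \Cref{thm:icharacterisations}(1). Hence, for each such statement $S$, the theory pCuIC $+\, S$ is contained in pCuIC $+\, \MP$. Admissibility of $\MP$ amounts to conservativity of pCuIC $+\, \MP$ over pCuIC for a suitable class of closed statements, and this conservativity passes automatically to the smaller theory pCuIC $+\, S$, yielding admissibility of each $S$.

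\textbf{Independence.} Independence of $\MP$ says that pCuIC proves neither $\MP$ nor $\neg \MP$. Non-refutability of each $S$ follows because pCuIC $+\, \MP$ is consistent and proves $S$, so a pCuIC-proof of $\neg S$ would refute $\MP$, contradiction. For non-provability, item (3) is equivalent to $\MP$ by \Cref{thm:characterisations}(2) and therefore inherits non-provability directly. Items (1), (2), (4), (5) are all equivalent to $\MPL$ via \Cref{lem:MPL-equivs}, \Cref{thm:characterisations}(1), and \Cref{thm:icharacterisations}(1), so it suffices to show that $\MPL$ itself is not provable in pCuIC.

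\textbf{Main obstacle.} The crucial remaining step is non-provability of $\MPL$, which is not a formal consequence of non-provability of $\MP$, since $\MPL$ is strictly weaker. Here I would adapt the Pédrot--Tabareau forcing translation that refutes $\MP$: their construction produces a Boolean sequence whose termination is doubly negated but not forced. Since $\L$ can be internalised in pCuIC, I expect that the witness sequence appearing in their model is already representable by an $\L$-computable function, so that the same forcing model refutes the $\L$-restricted version $\MPL$ as well. Verifying this representation of the counterexample sequence inside $\L$ is where the real work lies; everything else is book-keeping on top of the previously established equivalences.
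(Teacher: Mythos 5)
Your overall strategy coincides with the paper's: admissibility and consistency are inherited from Pédrot--Tabareau's results on $\MP$, non-provability of item (3) comes directly from the equivalence with $\MP$, and the genuine content is the non-provability of $\MPL$, for which the paper does exactly what you propose -- it adapts the Pédrot--Tabareau argument (their Lemma~40, showing that no theory conservative over CIC proves both $\mathsf{IP}$ and $\MP$) by replacing the halting problem of CIC's term language with the $\L$-halting problem, which is precisely your observation that the counterexample computations are $\L$-representable. Two points in your write-up deserve tightening. First, for admissibility you appeal to ``conservativity of pCuIC $+\,\MP$ over pCuIC for a suitable class of closed statements,'' but admissibility as defined here (and as established in \cite{PedrotMP}, Theorem~33) is only a \emph{rule}: if the doubly negated instance is provable \emph{in pCuIC itself}, so is the instance. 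Blanket conservativity of the extended theory is stronger than what is cited, and your containment $\text{pCuIC}+S \subseteq \text{pCuIC}+\MP$ does not by itself license passing back down. The paper instead routes each admissibility claim through a constructively provable double negation -- e.g.\ for item~(4) it uses \Cref{thm:quasi} to get $\neg\neg(\Gamma\vdash_c\phi)$ provable in plain pCuIC and only then applies the admissible rule (via the many-one reduction of $\vdash_c$ to a concrete Boolean sequence); your argument should be reshaped along these lines. Second, Pédrot--Tabareau do not refute $\MP$ in a forcing model; they show $\MP$ incompatible with $\mathsf{IP}$ by extracting a decider for an undecidable halting problem, and $\mathsf{IP}$ is validated by a conservative syntactic translation. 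Your plan to ``verify the representation of the counterexample sequence inside $\L$'' is the right work item, but it lands in that incompatibility argument rather than in a countermodel. Finally, the paper derives non-refutability of $\MPL$ from its being a consequence of $\EM$ rather than of $\MP$; this difference is immaterial.
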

\begin{proofqed}
  We exemplarily show (1) and (4), the other proofs are similar.

  For (1), $\MPL$ is consistent since it is a consequence of $\mathsf{EM}$.
  Lemma 40 in~\cite{PedrotMP} shows that no theory conservative over the calculus of inductive constructions (CIC) can prove both the independence of premise rule $\mathsf{IP}$ and $\MP$, by turning these assumptions into a decider for the halting problem of the untyped term language of CIC.
  One can adapt the proof to show that pCuIC cannot prove both $\mathsf{IP}$ and $\MPL$, by constructing a decider for the $\L$-halting problem instead, which yields a contradiction as well.
  The admissibility of $\MPL$ follows from the admissibility of $\MP$ since a single application of $\MP$ suffices to derive $\MPL$.

  For (4), independence follows directly from (1) and \Cref{thm:characterisations}.
  For admissibility, assume that $\Gamma \vDash \phi$ is provable in pCuIC.
  By \Cref{thm:quasi}, $\neg\neg(\Gamma \vdash_c \phi)$ is provable in pCuIC.
  Thus by (2), $\Gamma \vdash_c \phi$ is provable in pCuIC.
\end{proofqed}

\section{Algebraic Semantics}
\label{sec:algebras}
\setCoqFilename{Heyting}

In contrast to the model-theoretic semantics discussed in \Cref{sec:models}, algebraic semantics are not based on models interpreting the non-logical symbols but on algebras suitable for interpreting the logical connectives of the syntax.
A formula is valid if it is satisfied by all algebras and completeness follows from the observation that deduction systems have the corresponding algebraic structure.
Following \cite{scott_algebraic_2008}, we discuss complete Heyting and Boolean algebras coinciding with intuitionistic and classical ND, respectively.
We consider all formulas $\phi:\Form$.

\begin{definition}[][HeytingAlgebra]
	\label{def:heyting}
	A \emph{Heyting algebra}  consists of a preorder $(\H,\le)$ and operations
	$$0:\H,\hspace{1cm}\sqcap: \H\to \H\to \H,\hspace{1cm}\sqcup: \H\to \H\to \H,\hspace{1cm}\Rightarrow:	 \H\to \H\to \H$$
	for bottom, meet, join, and implication satisfying the following properties:
	\vspace{-0.3cm}
	\begin{multicols}{2}
	\begin{enumerate}
		\item
		$0\le x$
		\item
		$z\sqcap x\le y\leftrightarrow z \le x\Rightarrow y$
		\item
		$z\le x\land z\le y\leftrightarrow z \le x\sqcap y$
		\item
		$x\le z\land y\le z\leftrightarrow x\sqcup y \le z$
	\end{enumerate}	
	\end{multicols}
	Moreover, $\mathcal H$ is \emph{complete} if there is an operation $\bigsqcap:(\H\to \Prop)\to \H$ for arbitrary meets satisfying $(\forall y\in P.\, x\le y)\leftrightarrow x\le \bigsqcap P$.
	Then $\mathcal \H$ also has arbitrary joins $\bigsqcup P:= \bigsqcap (\lambda x.\,\forall y\in P.\, y\le x)$ satisfying $(\forall y\in P.\, y\le x)\leftrightarrow  \bigsqcup P\le x$.
\end{definition}

Arbitrary meets and joins indexed by a function $F:I\to \H$ on a type $I$ are defined by $\bigsqcap_i F\,i:=\bigsqcap (\lambda x.\,\exists i.\,x=F\,i)$ and $\bigsqcup_i F\,i:=\bigsqcup (\lambda x.\,\exists i.\,x=F\,i)$, respectively.
As we do not require $\le$ to be antisymmetric in order to avoid quotient constructions, we establish equational facts about Heyting algebras only up to equivalence $x\equiv y:=x\le y\land y\le x$ rather than actual equality.

\begin{lemma}
	\label{lem:distr}
	Let \H be a Heyting algebra.
	\begin{enumerate}
		\coqitem[meet_join_distr]
		\H is $\sqcap$-$\sqcup$-\emph{distributive}, i.e. $x\sqcap (y\sqcup z)\equiv (x\sqcap y)\sqcup (x\sqcap z)$.
		As a consequence, $x\le y \sqcup z$ implies $x\le (x\sqcap y)\sqcup (x\sqcap z)$.
		\coqitem[meet_sup_distr]
		If \H is complete then it is $\sqcap$-$\bigsqcup$-\emph{distributive}, i.e. $x\sqcap (\bigsqcup_i F)\equiv\bigsqcup_i (\lambda i.\,x\sqcap F\,i)$.
		As a consequence, $x\le \bigsqcup_i F$ implies $x\le \bigsqcup_i (\lambda i.\,x\sqcap F\,i)$.
	\end{enumerate}
\end{lemma}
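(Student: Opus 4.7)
The plan is to exploit the adjunction $z \sqcap x \le y \leftrightarrow z \le x \Rightarrow y$, which makes $x \sqcap \_\,$ a left adjoint and hence (automatically) preserves joins. Concretely, for each equivalence I would prove the two inequalities separately.

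For (1), the easy direction $(x \sqcap y) \sqcup (x \sqcap z) \le x \sqcap (y \sqcup z)$ follows purely from the universal properties of $\sqcap$ and $\sqcup$: each summand $x \sqcap y$ and $x \sqcap z$ lies below $x$ (by (3)) and below $y \sqcup z$ (by (4) combined with (3)), so their join lies below $x \sqcap (y \sqcup z)$. For the harder direction $x \sqcap (y \sqcup z) \le (x \sqcap y) \sqcup (x \sqcap z)$, I would transpose along (2), reducing the goal to $y \sqcup z \le x \Rightarrow ((x \sqcap y) \sqcup (x \sqcap z))$. By (4) this in turn reduces to two goals $y \le x \Rightarrow (\dots)$ and $z \le x \Rightarrow (\dots)$, which transpose back via (2) to the trivial inclusions $x \sqcap y \le (x \sqcap y) \sqcup (x \sqcap z)$ and $x \sqcap z \le (x \sqcap y) \sqcup (x \sqcap z)$ supplied by (4). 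The stated consequence then follows because $x \le y \sqcup z$ together with $x \le x$ yields $x \le x \sqcap (y \sqcup z)$ by (3), and distributivity finishes the job.

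For (2), the argument is the same pattern but with the arbitrary-join universal property replacing (4). The direction $\bigsqcup_i (x \sqcap F\,i) \le x \sqcap \bigsqcup_i F$ is again bookkeeping: for each $i$, $x \sqcap F\,i \le x$ and $x \sqcap F\,i \le F\,i \le \bigsqcup_i F$, so $x \sqcap F\,i \le x \sqcap \bigsqcup_i F$ by (3), and taking the join in $i$ concludes. For the other direction I would transpose via (2) to reduce the goal to $\bigsqcup_i F \le x \Rightarrow \bigsqcup_i (x \sqcap F\,i)$, then apply the universal property of $\bigsqcup$ to reduce to $F\,i \le x \Rightarrow \bigsqcup_i (x \sqcap F\,i)$ for each $i$, and finally transpose back to the trivial $x \sqcap F\,i \le \bigsqcup_i (x \sqcap F\,i)$. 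The consequence is obtained exactly as in (1).

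There is no real obstacle here; the proof is a direct exercise in the adjunction. The only mild care point is keeping track of which direction of the biconditional in (2) is being used at each step, since distributivity is really a formal consequence of the fact that a left adjoint preserves colimits, and this manifests concretely as four uses of (2) (two forward, two backward) sandwiched between the universal properties of joins.
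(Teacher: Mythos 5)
Your proof is correct and matches the paper's approach: the paper dismisses this lemma with ``by simple algebraic calculations,'' and your argument via the adjunction $z\sqcap x\le y\leftrightarrow z\le x\Rightarrow y$ (i.e.\ $x\sqcap\_$ preserves joins as a left adjoint) is exactly the standard calculation intended, with only the trivial implicit use of commutativity of $\sqcap$ (from property (3)) needed to match the orientation of the adjunction.
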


\begin{proofqed}
	By simple algebraic calculations.
\end{proofqed}

Note that every Heyting algebra embeds into its down set algebra consisting of the sets $x\down:=\lambda y.\, y\le x$.
The \emph{MacNeille completion}~\cite{macneille_partially_1937}
adding arbitrary meets and joins, while preserving existing ones, is a refinement of this embedding.

\begin{fact}[][completion_calgebra]
	\label{fact:macneille}
	Every Heyting algebra \H embeds into a complete Heyting algebra \HO, i.e. there is a function $f:\H\to \HO$ with $x\le y\leftrightarrow f\,x \le_c f\, y$ and:
	\vspace{-0.3cm}
	\begin{multicols}{2}
		\begin{enumerate}
			\item
			$f\,0\equiv 0_c$
			\item
			$f\,(x\Rightarrow y)\equiv f\,x \Rightarrow_c f\,y$
			\item
			$f\,(x\sqcap y)\equiv f\,x \sqcap_c f\,y$
			\item
			$f\,(x\sqcup y)\equiv f\,x \sqcup_c f\,y$
		\end{enumerate}	
	\end{multicols}
\end{fact}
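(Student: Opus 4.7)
The plan is to realize $\HO$ as the \emph{MacNeille completion} of $\H$, built from the Galois connection between upper and lower bounds. For a predicate $S : \H \to \Prop$, let $\UB(S) := \{x : \forall s \in S.\, s \le x\}$ and $\LB(S) := \{x : \forall s \in S.\, x \le s\}$; the composition $\overline{S} := \LB(\UB(S))$ is a closure operator, and I call $S$ \emph{normal} if $S = \overline{S}$. The carrier of $\HO$ is the collection of normal subsets of $\H$, ordered by inclusion, and the embedding is $f(x) := x\down$, which is always normal because $\UB(x\down)$ is the upward closure of $x$ and applying $\LB$ to this returns $x\down$.

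Next, equip $\HO$ with operations: set $0_c := 0\down$; take $\sqcap_c$ to be set-theoretic intersection (normality is preserved); define $\sqcup_c$ and arbitrary joins by closing the union, e.g.\ $\bigsqcup P := \overline{\bigcup P}$; and define Heyting implication by $S \Rightarrow_c T := \{w : \forall u \le w.\, u \in S \to u \in T\}$, which one can verify is normal and satisfies the exponential adjunction against intersection. Arbitrary meets $\bigsqcap P := \bigcap_{S \in P} S$ are immediately normal because any intersection of fixed points of a closure operator is again such a fixed point. This data makes $\HO$ into a complete Heyting algebra, with the universal properties of $\bigsqcap$ and $\bigsqcup$ following from the $(\LB,\UB)$ Galois connection.

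The remaining task is to verify that $f$ is an order-embedding and preserves the finite structure listed in the statement. Order-reflection $x \le y \leftarrow f\,x \subseteq f\,y$ is immediate from $x \in x\down$, while the forward direction is the monotonicity of $\down$. Preservation of $0$ is definitional and preservation of $\sqcap$ reduces to $(x \sqcap y)\down = x\down \cap y\down$. For join, the key computation is $\UB(x\down \cup y\down) = \{z : x \le z \land y \le z\} = (x \sqcup y)\up$, and closing $\LB$-wise yields $(x \sqcup y)\down$. For implication, the identity $(x \Rightarrow y)\down \equiv \{w : \forall u \le w.\, u \le x \to u \le y\}$ follows from the adjunction $w \sqcap x \le y \leftrightarrow w \le x \Rightarrow y$ of $\H$, by taking $u := w \sqcap x$ in one direction and specialising $u := w$ in the other, using monotonicity in $u$.

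The main obstacle is the implication case: both setting up $\Rightarrow_c$ inside $\HO$ so that it is normal and satisfies the Heyting adjunction against intersection, and matching it up with $f(x \Rightarrow y)$, require carefully threading the downward-closure quantifier $\forall u \le w$ through the original algebra's exponential. The joins, meets, and bottom, by contrast, follow routinely once the closure operator and its Galois connection are in place.
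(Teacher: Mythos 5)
Your proposal is correct and follows essentially the same route as the paper: the MacNeille completion via the $\LB\circ\UB$ closure on subsets, with $x\down$ as the embedding, intersection for meets, closure of unions for joins, and the implication case singled out as the only delicate verification. The only cosmetic difference is your definition of $\Rightarrow_c$ as $\{w : \forall u\le w.\,u\in S\to u\in T\}$ versus the paper's $\{x : \forall y\in S.\,x\sqcap y\in T\}$, which coincide on downward-closed sets.
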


\begin{proofqed}
	Given a set $X:\H\to \Prop$, we define the sets $\LB \,X:=\lambda x.\,\forall y\in X.\,x\le y$ of lower bounds and $\UB \,X:=\lambda x.\,\forall y\in X.\,y\le x$ of upper bounds of $X$.
	We say that a set $X$ is down-complete if $\LB\,(\UB\, X)\subseteq X$.
	Note that in particular down sets $x\down$ are down-complete and that down-complete sets are downwards closed, i.e. satisfy $x\in X$ whenever $x\le y$ for some $y\in X$.
	
	Now consider the type $\H_c:=\Sigma X.\,\LB\,(\UB\, X)\subseteq X$ of down-complete sets preordered by set inclusion $X\subseteq Y$.
	It is immediate by construction that the operation $\bigsqcap_c P:=\bigcap P$ defines arbitrary meets in \HO.
	Moreover, it is easily verified that further setting
	\begin{small}
		$$0_c:=0\down \hspace{0.3cm} X\sqcap_c Y:= X\cap Y\hspace{0.3cm}X\sqcup_c Y:= \LB\,(\UB\,(X\cup Y))\hspace{0.3cm}X\Rightarrow_c Y:=\lambda x.\,\forall y\in X.\, x\sqcap y \in Y$$
	\end{small}
	turns \HO into a (hence complete) Heyting algebra.
	The only non-trivial case is implication, where $X\Rightarrow_c Y\,\equiv\, \bigsqcap_c(\lambda Z.\,\exists x\in X.\,Z\equiv (\lambda y.\,y\sqcap x\in Y))$ is a helpful characterisation to show that $X\Rightarrow_c Y$ is down-complete whenever $Y$ is.
	
	Finally, $x\down$ clearly is a structure preserving embedding as specified.
\end{proofqed}

We now define how formulas can be evaluated in a complete Heyting algebra.

\setCoqFilename{Lindenbaum}
\begin{definition}[][hsat]
	\label{def:heval}
	Given a complete Heyting algebra \H we extend interpretations $\heval \_ : \forall P:\Preds.\,\Term^{|P|}\to \H$
	of atoms to formulas using size recursion by 
	\begin{align*}
		\heval{\dot\bot} &:=0&
		\heval{\phi\dot\land\psi}&:= \heval\phi \sqcap \heval\psi&
		\heval{\dot\forall \phi} &:=\textstyle\bigsqcap_t\heval{\phi[t]} \\
		\heval{\phi\dot\to\psi}&:= \heval\phi \Rightarrow \heval\psi&
		\heval{\phi\dot\lor\psi}&:= \heval\phi \sqcup \heval\psi&
		\heval{\dot\exists\phi} &:=\textstyle\bigsqcup_t\heval{\phi[t]}
	\end{align*}
	and to contexts by $\heval\Gamma := \bigsqcap \lambda x.\,\exists \phi\in \Gamma.\,x=\heval\phi $.
	A formula $\phi$ is \emph{valid in} \H whenever $x\le\heval \phi$ for all $x:\H$.
\end{definition}

Note that $\heval \phi$ is defined by size recursion to account for the substitution $\phi[t]$ needed in the quantifier cases.

We first show that intuitionistic ND is sound for this semantics.

\begin{fact}[][Soundness']
	$\Gamma\vdash \phi$ implies $\forall \sigma.\,\heval{\Gamma[\sigma]}\le \heval{\phi[\sigma]}$ in every complete Heyting algebra.
\end{fact}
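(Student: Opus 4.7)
The plan is to proceed by induction on the derivation $\Gamma \vdash \phi$, with the substitution $\sigma$ generalised so that each inductive hypothesis applies to arbitrary substitutions. This generalisation is essential because the rules (AI), (EI), (AE), (EE) reshuffle substitutions when crossing binders, and the inductive hypotheses must be reusable under any instantiation produced by the quantifier cases.

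For the propositional rules, the work is local and algebraic: each case reduces to an inequation in a Heyting algebra and follows from the characterising properties in \Cref{def:heyting}. Concretely, assumption and weakening follow from $\heval{\Gamma[\sigma]} \le \heval{\phi[\sigma]}$ whenever $\phi \in \Gamma$, which holds by the definition of $\heval{\Gamma[\sigma]}$ as a meet. Implication introduction and elimination are handled by the adjunction $z \sqcap x \le y \leftrightarrow z \le x \Rightarrow y$. Conjunction introduction and elimination use the universal property of $\sqcap$, while disjunction introduction uses the universal property of $\sqcup$ and disjunction elimination additionally invokes the $\sqcap$-$\sqcup$-distributivity from \Cref{lem:distr}(1). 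Falsity elimination uses that $0 \le x$ for every $x$.

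The main obstacle will be the quantifier cases, since they require bridging the syntactic substitution operation with the semantic meets and joins over $\Term$. I would first establish a substitution compatibility lemma stating that $\heval{\phi[\sigma]}$ and the expected composition behave as intended, in particular that $(\dot\forall\phi)[\sigma]$ evaluates to $\bigsqcap_t \heval{\phi[t;\sigma]}$ and analogously for $\dot\exists$, and that $\heval{(\up\Gamma)[t;\sigma]} \equiv \heval{\Gamma[\sigma]}$. Given this, (AI) reduces to showing $\heval{\Gamma[\sigma]} \le \heval{\phi[t;\sigma]}$ for every $t$, which is a direct instantiation of the inductive hypothesis $\heval{(\up\Gamma)[\sigma']} \le \heval{\phi[\sigma']}$ with $\sigma' := t;\sigma$, followed by the universal property of $\bigsqcap$. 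The rule (AE) is obtained by instantiating $\bigsqcap_t \heval{\phi[t;\sigma]}$ at the chosen term. Dually, (EI) uses the universal property of $\bigsqcup$, and (EE) combines the two: by the $\sqcap$-$\bigsqcup$-distributivity from \Cref{lem:distr}(2), the premise $\heval{\Gamma[\sigma]} \le \bigsqcup_t \heval{\phi[t;\sigma]}$ strengthens to $\heval{\Gamma[\sigma]} \le \bigsqcup_t (\heval{\Gamma[\sigma]} \sqcap \heval{\phi[t;\sigma]})$, after which the second premise combined with the compatibility lemma applied to $\up\psi$ allows bounding each summand by $\heval{\psi[\sigma]}$.

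Throughout, the fact that the recursion in \Cref{def:heval} is by size and not by structure is what makes the substitution compatibility lemma provable: the cases for $\dot\forall\phi[\sigma]$ and $\dot\exists\phi[\sigma]$ unfold to a meet or join of $\heval{\phi[\sigma']}$ for suitable $\sigma'$, which has the same size as $\phi$. The remaining bookkeeping amounts to tracking de Bruijn shifts through the definition of $\up$ and $t;\sigma$, an entirely syntactic verification.
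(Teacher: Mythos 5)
Your proposal is correct and takes essentially the same route as the paper: induction on the derivation with $\sigma$ generalised, the adjunction and universal properties of \Cref{def:heyting} for the propositional rules, \Cref{lem:distr}(1) for (DE), \Cref{lem:distr}(2) for (EE), and the substitution identities such as $\up\phi[t;\sigma]=\phi[\sigma]$ for the quantifier cases. The paper simply dismisses all cases except (DE) and (EE) as trivial, whereas you spell out the algebraic and de Bruijn bookkeeping explicitly.
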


\begin{proof}
	By induction on $\Gamma\vdash \phi$, all cases but (DE) and (EE) are trivial.
	\begin{itemize}
		\item
		(DE) In this case $\sigma$ is not instantiated, so we leave out the annotations $[\sigma]$ for better readability.
		Suppose that $\heval\Gamma\le \heval\phi \sqcup \heval\psi$, $\heval{\Gamma,\phi}\le \heval\theta$, and $\heval{\Gamma,\psi}\le \heval\theta$, we show that $\heval\Gamma\le\heval\theta$.
		Applying the first consequence mentioned in \Cref{lem:distr}, it suffices to show $(\heval\Gamma\sqcap\heval\phi) \sqcup (\heval\Gamma\sqcap\heval\psi)\le\heval\theta$. This means to show both $\heval\Gamma\sqcap\heval\phi\le\heval\theta$ and $\heval\Gamma\sqcap\heval\psi\le\heval\theta$ which both follow from the assumptions.
		\item
		(EE) Suppose that $\forall \sigma.\,\heval {\Gamma[\sigma]} \le \bigsqcup_t\heval {\phi[t;\sigma]}$ and $\forall \sigma.\,\heval{\up \Gamma[\sigma],\phi[\sigma]}\le\heval{\up \psi[\sigma]}$, we show that $\heval {\Gamma[\sigma]}\le \heval{\psi[\sigma]}$ for a fixed $\sigma$.
		Now applying the second consequence mentioned in \Cref{lem:distr}, it suffices to show $\bigsqcup_t(\heval{\Gamma[\sigma]}\sqcap \heval {\phi[t;\sigma]})\le \heval{\psi[\sigma]}$.
		This means to show $\heval{\Gamma[\sigma], \phi[t;\sigma]}\le \heval{\psi[\sigma]}$ for all terms $t$, which follows from the second assumption instantiated with $t;\sigma$ and the observation that $\up \Gamma[t;\sigma]=\Gamma$ and $\up \phi[t;\sigma]=\phi$.\qed
	\end{itemize}
\end{proof}

\begin{corollary}[][Soundness]
	$\Gamma\vdash \phi$ implies $\heval{\Gamma}\le \heval{\phi}$ in every complete Heyting algebra.
\end{corollary}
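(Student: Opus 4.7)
The corollary is an immediate specialisation of the preceding fact. The plan is to apply the fact with $\sigma := \id$, where $\id\,x := x$ is the identity substitution introduced in \Cref{sec:prelims} right after the definition of substitution. This yields $\heval{\Gamma[\id]} \le \heval{\phi[\id]}$ from the hypothesis $\Gamma\vdash\phi$.

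It then remains to observe that the identity substitution acts trivially on both formulas and contexts, i.e.\ $\phi[\id] = \phi$ for any $\phi : \Form$ and consequently $\Gamma[\id] = \Gamma$ for any context $\Gamma$. This is a standard property of de Bruijn substitution: a straightforward induction on $\phi$ shows that substituting every variable $x$ by $x$ itself is the identity, with the quantifier cases handled by the fact that the lifted substitution $0;\lambda x.\,\up(\id\,x)$ coincides pointwise with $\id$ (it maps $0$ to $0$ and $\natS\,x$ to $\natS\,x$). Substituting these identities into the output of the previous fact gives the desired inequality $\heval{\Gamma}\le \heval{\phi}$.

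I do not expect any genuine obstacle here; the only point that warrants a line is the verification that $\id$ is indeed neutral for substitution below a binder, which is already part of the standard substitution infrastructure (e.g.\ from the AutoSubst development cited in \Cref{sec:prelims}). In a Coq mechanisation this is a one-line rewrite using the corresponding lemma of the substitution library, and on paper a single sentence suffices.
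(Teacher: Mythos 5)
Your proposal is correct and is exactly the intended argument: the paper states the corollary without proof as an immediate instance of the preceding fact, obtained by instantiating $\sigma$ with the identity substitution and using that $\phi[\id]=\phi$ and $\Gamma[\id]=\Gamma$. Nothing is missing.
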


Secondly turning to completeness, a strategy reminiscent to the case of Kripke semantics can be employed by exhibiting a universal structure, the so-called \emph{Lindenbaum algebra}, that exactly coincides with provability.

\begin{fact}[][lb_alg]
	\label{fact:lindenbaum}
	The type \Form of formulas together with the preorder $\phi\vdash \psi$ and the logical connectives as corresponding algebraic operations form a Heyting algebra.
\end{fact}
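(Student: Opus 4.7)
The plan is to verify that $(\Form, \vdash)$ with $0 := \dot\bot$, $\sqcap := \dot\land$, $\sqcup := \dot\lor$, and ${\Rightarrow} := \dot\to$ satisfies all conditions of \Cref{def:heyting}. Each clause will reduce to a standard application of the introduction and elimination rules of intuitionistic natural deduction, so essentially no new ideas are needed: the Heyting structure on $\Form$ is just a rebranding of the proof rules.

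First I would check that $\phi \vdash \psi$ is a preorder on $\Form$. Reflexivity $\phi \vdash \phi$ is just the axiom rule. Transitivity, i.e. the cut-like principle that $\phi \vdash \psi$ and $\psi \vdash \theta$ yield $\phi \vdash \theta$, is obtained by applying weakening (\Cref{lem:weak}) to $\psi \vdash \theta$ to get $\phi,\psi \vdash \theta$, and then combining with $\phi \vdash \psi$ via (IE) on the derived $\phi \vdash \psi \dot\to \theta$ obtained from (II). No issues here.

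Next I would dispatch the four axioms of \Cref{def:heyting} in order. For (1), $\dot\bot \vdash \phi$ follows from the axiom rule and the (E) rule. For (3), $\theta \vdash \phi \dot\land \psi$ iff $\theta \vdash \phi$ and $\theta \vdash \psi$ is the pair of conjunction introduction/elimination rules, again combined with cut. Similarly (4), $\phi \dot\lor \psi \vdash \theta$ iff $\phi \vdash \theta$ and $\psi \vdash \theta$, is (DI$_{1,2}$) for the backward direction and (DE) together with weakening for the forward direction. The characteristic adjunction (2), $\theta \dot\land \phi \vdash \psi$ iff $\theta \vdash \phi \dot\to \psi$, follows by (II) composed with (CE$_{1,2}$) in one direction and by (IE) composed with (CI) in the other, once more using weakening to merge contexts.

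The main (minor) obstacle is purely bureaucratic: each of these biconditionals requires careful bookkeeping of contexts, because the ND judgement has lists on the left while our preorder has single formulas, so every case must route through the singleton context $[\phi]$ and use weakening to move between $[\phi,\psi]$, $[\phi \dot\land \psi]$, and the like. Once this boilerplate is in place, the verification is completely mechanical, and no property involves quantifier rules, substitution, or fresh variables, so the de Bruijn bookkeeping from \Cref{sec:prelims} plays no role here.
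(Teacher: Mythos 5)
Your proposal is correct and takes essentially the same route as the paper, whose proof is simply ``straightforward using weakening'': each Heyting axiom is precisely the corresponding introduction/elimination rule pair, with weakening and a derived cut handling the passage between singleton contexts. (One cosmetic slip: in the adjunction (2), the direction from $\theta\dot\land\phi\vdash\psi$ to $\theta\vdash\phi\dot\to\psi$ uses (II) together with (CI), while the converse uses (IE) together with (CE$_{1,2}$) --- you have the pairings swapped, but this does not affect correctness.)
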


\begin{proofqed}
	Straightforward using weakening.
\end{proofqed}

We write \MCL{L} for the Lindenbaum algebra (\Cref{fact:lindenbaum}) and $\overline{\MCL{L}}$ for its MacNeille completion (\Cref{fact:macneille}).
Formulas are evaluated in $\overline{\MCL{L}}$ according to \Cref{def:heval} using the syntactic atom interpretation $\heval{P\,\vec t}:=(P\,\vec t\,)\down$.
Since $\overline{\MCL{L}}$ preserves the meets and joins of \MCL{L}, evaluation in $\overline{\MCL{L}}$ yields the set of sufficient preconditions.

\begin{lemma}[][lindenbaum_hsat]
	\label{lem:down}
	Evaluating $\phi$ in $\overline{\MCL{L}}$ yields the set of all $\psi$ with $\psi\vdash \phi$, i.e. $ \heval{\phi}\equiv \phi\down$.
\end{lemma}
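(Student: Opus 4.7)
The plan is to proceed by induction on the size of $\phi$ (matching the size recursion used in \Cref{def:heval}), showing both inclusions $\heval\phi \subseteq \phi\down$ and $\phi\down \subseteq \heval\phi$ at each step. For the atomic case $P\,\vec t$ this is immediate from the chosen atom interpretation $\heval{P\,\vec t} := (P\,\vec t)\down$. For $\dot\bot$, note that $\dot\bot$ is the least element of $\MCL{L}$ (by exfalso), hence equivalent to $0$, so $\heval{\dot\bot} = 0_c \equiv 0\down = \dot\bot\down$ by \Cref{fact:macneille}(1).

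For the binary cases $\phi \binop \psi$ with $\binop \in \{\dot\land, \dot\lor, \dot\to\}$, I would chain the inductive hypotheses with \Cref{fact:macneille}: by the IH, $\heval{\phi \binop \psi} \equiv \phi\down \ebinop_c \psi\down$, and Fact~\ref{fact:macneille}(2)--(4) then identifies this with $(\phi \ebinop \psi)\down$, where $\ebinop$ is the corresponding operation in $\MCL{L}$. A quick check shows that in $\MCL{L}$ these operations are simply $\dot\land$, $\dot\lor$, and $\dot\to$ on syntax, since e.g.\ $\theta \vdash \phi$ and $\theta \vdash \psi$ iff $\theta \vdash \phi\dot\land\psi$ by the usual conjunction rules, so $(\phi \ebinop \psi)\down = (\phi \binop \psi)\down$.

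The quantifier cases are the main obstacle because $\overline{\MCL{L}}$ only preserves those meets and joins that already exist in $\MCL{L}$, and the Lindenbaum algebra does not carry arbitrary meets and joins natively. I would handle them directly rather than via \Cref{fact:macneille}. For $\dot\forall\phi$, by IH it suffices to show $\bigsqcap_t \phi[t]\down \equiv (\dot\forall\phi)\down$ in $\overline{\MCL{L}}$; unfolding the intersection-based definition of $\bigsqcap_c$, this reduces to showing $\psi \vdash \phi[t]$ for all $t$ iff $\psi \vdash \dot\forall\phi$. The right-to-left direction uses (AE), while the left-to-right direction picks a fresh variable via \Cref{lem:named_equiv} and applies (AI). For $\dot\exists\phi$, by IH it suffices to show $\bigsqcup_t \phi[t]\down \equiv (\dot\exists\phi)\down$. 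Unfolding $\bigsqcup_c = \LB \circ \UB \circ \bigcup$, the characterisation becomes: $\psi \vdash \dot\exists\phi$ iff for every upper bound $\theta$ satisfying $\forall t.\,\phi[t] \vdash \theta$ one has $\psi \vdash \theta$. The left-to-right direction follows by transitivity after deriving $\dot\exists\phi \vdash \theta$ from $\forall t.\,\phi[t] \vdash \theta$ using (EE) together with \Cref{lem:named_equiv} to strip the shift. The right-to-left direction is easy: instantiate the hypothesis with $\theta := \dot\exists\phi$, whose premise $\forall t.\,\phi[t] \vdash \dot\exists\phi$ is immediate by (EI).

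The subtle point throughout will be the careful bookkeeping around de Bruijn shifts when applying (AI) and (EE), which is precisely where \Cref{lem:named_equiv} intervenes; beyond that, the proof is a transparent combination of inductive hypotheses, the preservation properties of the MacNeille completion, and the universal properties of quantifiers in natural deduction.
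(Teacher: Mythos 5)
Your proposal is correct and follows essentially the same route as the paper's proof: size induction, atoms by construction, the propositional connectives via the preservation properties of the MacNeille completion (Fact~\ref{fact:macneille}), and the quantifier cases handled directly through the characterisations of $\bigsqcap_c$ and $\bigsqcup_c$, using (AE)/(AI)/(EI)/(EE) together with \Cref{lem:named_equiv} to manage the de Bruijn shifts. The only cosmetic deviation is in the backward $\dot\exists$ direction, where you first derive $\dot\exists\phi\vdash\theta$ and compose by transitivity rather than applying (EE) to $\psi\vdash\dot\exists\phi$ directly; both are fine.
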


\begin{proof}
	By size induction on $\phi$.
	The case for atoms is by construction and the cases for all connectives but the quantifiers are immediate since $\Downarrow$ preserves the structure of \MCL{L} as specified in \Cref{fact:macneille}.
	The quantifiers are handled as follows:
	\begin{itemize}
		\item
		($\forall$) Let $\psi\in \bigsqcap_t \heval{\phi[t]}$, we show $\up\psi\vdash\phi$ in order to establish $\psi\vdash \dot\forall \phi$.
		By \Cref{lem:named_equiv} we know that there is a fresh variable $x$ such that $\up\psi\vdash\phi$ if $\psi\vdash \phi[x]$.
		The latter follows by induction for $\phi[x]$ since $\psi\in \heval{\phi[x]}$ by assumption.
		
		Conversely, let $\psi\vdash \dot\forall \phi$, we show $\psi \in\heval{\phi[t]}$ for every term $t$ in order to establish $\psi\in \bigsqcap_t \heval{\phi[t]}$.
		By (AE) we have $\psi\vdash \phi[t]$ and conclude $\psi \in\heval{\phi[t]}$ using the inductive hypothesis for $\phi[t]$.
		
		\item
		($\exists$) Let $\psi\in \bigsqcup_t \heval{\phi[t]}$, we want $\psi\in(\exists\phi)\down$.
		Hence it suffices to show $\bigsqcup_t \heval{\phi[t]}\subseteq (\exists\phi)\down$ which reduces to $\heval{\phi[t]}\subseteq(\exists\phi)\down$ for every $t$.
		By induction we know that $\heval{\phi[t]}\equiv \phi[t]\down$ and conclude $\phi[t]\down\subseteq(\exists\phi)\down$ since $\phi[t]\vdash \exists \phi$.
		
		Conversely, let $\psi\vdash\exists \phi$, we show that $\psi\in \bigsqcup_t \heval{\phi[t]}$.
		By construction of $\bigsqcup$ we have to show that $\psi\in X$ for all down-closed $X$ with $\forall t.\, \heval{\phi[t]}\subseteq X$.
		By down-closedness it suffices to show $\psi\in\LB\,(\UB\,X)$ and hence $\psi\vdash \theta$ for $\theta \in \UB\,X$.
		Applying (EE), this reduces to $\up\psi,\phi\vdash \up\theta$ and, employing \Cref{lem:named_equiv}, to $\psi,\phi[x]\vdash\theta$ for some fresh $x$.
		This follows since already $\phi[x]\vdash\theta$ given that $\phi[x]\in \phi[x]\down \equiv \heval{\phi[x]}\subseteq X$ and $\theta \in \UB\,X$.
		\qed
	\end{itemize}
\end{proof}

\begin{theorem}[][hcompleteness]
	If ${\phi}$ is valid in every complete Heyting algebra, then $\vdash \phi$.
\end{theorem}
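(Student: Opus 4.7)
The approach is the standard Lindenbaum-style argument and relies entirely on the pieces already assembled: the Lindenbaum algebra $\MCL{L}$ on $\Form$ (Fact~\ref{fact:lindenbaum}), its MacNeille completion $\overline{\MCL{L}}$ (Fact~\ref{fact:macneille}), and the characterisation $\heval{\phi} \equiv \phi\!\down$ of evaluation in $\overline{\MCL{L}}$ (Lemma~\ref{lem:down}).

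First I would instantiate the validity hypothesis with the complete Heyting algebra $\overline{\MCL{L}}$, obtaining $x \le \heval{\phi}$ for every $x : \overline{\MCL{L}}$. In particular, this applies to the top element of $\overline{\MCL{L}}$, which I identify with $\heval{[]}$: since $\heval{\Gamma}$ is the meet over $\Gamma$ (Definition~\ref{def:heval}) and an empty meet is the top, $\heval{[]}$ is indeed the top of $\overline{\MCL{L}}$. Rewriting $\heval{\phi}$ via Lemma~\ref{lem:down} then yields $\top \le \phi\!\down$ in $\overline{\MCL{L}}$.

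To read off a concrete derivation, I would unfold the construction of $\overline{\MCL{L}}$ from the proof of Fact~\ref{fact:macneille}: the top element of $\overline{\MCL{L}}$ dominates every image $\psi\!\down$ of an element of $\MCL{L}$, so the inequality $\top \le \phi\!\down$ forces $\psi \vdash \phi$ for every $\psi : \Form$. Choosing any syntactic tautology---for instance $\psi := \phi \dot\to \phi$, provable from the empty context by assumption followed by implication introduction---I combine $\vdash \phi \dot\to \phi$ with $(\phi \dot\to \phi) \vdash \phi$ through implication elimination to conclude $\vdash \phi$.

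I do not expect a substantial obstacle, since soundness is not even invoked and the only non-routine ingredient, Lemma~\ref{lem:down}, is already established. The only minor point of care is to spell out concretely that the top of $\overline{\MCL{L}}$ sits above every principal down-set, so that the abstract inequality $\top \le \phi\!\down$ really unpacks to provability of $\phi$ from an arbitrary formula; this is immediate from the description of $\overline{\MCL{L}}$ as down-complete subsets of $\MCL{L}$ ordered by inclusion.
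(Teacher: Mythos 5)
Your proposal is correct and follows essentially the same route as the paper: instantiate validity at $\overline{\MCL{L}}$, use Lemma~\ref{lem:down} to turn $x\le\heval{\phi}$ into $\psi\vdash\phi$ for all $\psi$, and discharge the hypothesis with a provable tautology (the paper uses $\dot\bot\dot\to\dot\bot$ where you use $\phi\dot\to\phi$). The detour through the top element $\heval{[]}$ is harmless but unnecessary, since validity already applies directly to each principal down-set $\psi\!\down$.
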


\begin{proofqed}
	If $\phi$ is valid, then \Cref{lem:down} implies that $\psi\vdash \phi$ forall $\psi$.
	By e.g. choosing the tautology $\psi:=\dot\bot\dot\to\dot\bot$ we can derive $\vdash \phi$ since obviously $\vdash \dot\bot\dot\to\dot\bot$.
\end{proofqed}

Switching to classical logic, we call a Heyting algebra \emph{Boolean} if it satisfies $(x\!\Rightarrow\! y)\!\Rightarrow\!x\le x$ for all $x$ and $y$, hence directly accommodating Peirce's law~(P).
Then first, classical deduction is sound for interpretation in Boolean algebras.

\begin{fact}[][BSoundness]
	$\Gamma\vdash_c \phi$ implies $\heval{\Gamma}\le \heval{\phi}$ in every complete Boolean algebra.
\end{fact}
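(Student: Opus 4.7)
The plan is to mirror the soundness proof for intuitionistic natural deduction (the preceding fact), strengthening the claim to the substitution-generalised form $\forall\sigma.\,\heval{\Gamma[\sigma]}\le \heval{\phi[\sigma]}$ and proceeding by induction on $\Gamma\vdash_c \phi$. Because the classical system is the intuitionistic one augmented only by Peirce's axiom, every other rule case is handled verbatim: the distributivity arguments used for (DE) and (EE) still apply since every complete Boolean algebra is in particular a complete Heyting algebra, and the adjunctions for $\Rightarrow$, $\sqcap$, $\sqcup$, $\bigsqcap$ and $\bigsqcup$ are retained.

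The only genuinely new case is the Peirce axiom $\Gamma\vdash_c ((\phi\dot\to\psi)\dot\to\phi)\dot\to\phi$. Writing $a:=\heval{\phi[\sigma]}$ and $b:=\heval{\psi[\sigma]}$, the goal unfolds to
$$\heval{\Gamma[\sigma]}\le ((a\Rightarrow b)\Rightarrow a)\Rightarrow a.$$
By the adjunction property (2) of \Cref{def:heyting}, this is equivalent to $\heval{\Gamma[\sigma]}\sqcap((a\Rightarrow b)\Rightarrow a)\le a$. Since the Boolean condition asserts $(a\Rightarrow b)\Rightarrow a\le a$, the left-hand side is bounded by $\heval{\Gamma[\sigma]}\sqcap a\le a$, closing the case. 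The stated fact then follows by instantiating the $\sigma$-generalised claim with the identity substitution.

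I expect no substantive obstacle. The Booleanity condition $(x\Rightarrow y)\Rightarrow x\le x$ was crafted precisely to validate Peirce's law, so the new case amounts to a direct unfolding plus one invocation of that condition; all the heavy lifting, namely the quantifier treatment via $\bigsqcap$ and $\bigsqcup$ and the distributivity-based handling of (DE) and (EE), has already been carried out for the intuitionistic case and transfers without change.
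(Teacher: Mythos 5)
Your proposal is correct and matches the paper's proof, which simply says the argument proceeds as in the intuitionistic soundness proof with the Peirce rule (P) being ``sound by definition'' of Boolean algebras. Your explicit unfolding of the Peirce case via the adjunction $z\sqcap x\le y\leftrightarrow z\le x\Rightarrow y$ and the Boolean condition $(a\Rightarrow b)\Rightarrow a\le a$ is exactly the intended one-line verification.
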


\begin{proofqed}
	As in \Cref{coq:Soundness}, the classical rule (P) is sound by definition.
\end{proofqed}

\newcommand{\prv}{\vdash_i}
\newcommand{\prva}[2]{#1\prv #2}

Secondly, we establish the completeness of classical deduction by generalising the previous proof to all deduction systems subsuming intuitionistic ND.
So we fix a predicate $\prv~: \List(\Form)\to \Form\to \Prop$ satisfying the rules of intuitionistic ND (\Cref{def:ND}), weakening (\Cref{lem:weak}), as well as the equivalences concerning fresh variables stated in \Cref{lem:named_equiv}, and replay the construction from before.

\begin{fact}[][glb_alg]
	\label{fact:glindenbaum}
	The type \Form of formulas together with the preorder $\prva{\phi}{\psi}$ and the logical connectives as corresponding algebraic operations form a Heyting algebra.
\end{fact}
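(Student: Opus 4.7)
The plan is to verify the four defining properties of Definition~\ref{def:heyting} by translating each axiom into an application of the assumed ND rules, mirroring the proof of Fact~\ref{fact:lindenbaum}. The only ingredients available are the rules of intuitionistic ND, weakening (\Cref{lem:weak}), and the fresh-variable equivalences (\Cref{lem:named_equiv}), so the structure of the argument is identical to the concrete case $\prv\,:=\,\vdash$; the abstraction just requires that we verify each step only uses these admissible operations.

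First I would establish that $(\Form, \prv)$ is a preorder: reflexivity $\prva{\phi}{\phi}$ follows from the axiom rule (A). Transitivity is cut admissibility, which is derivable in intuitionistic ND by weakening $\psi \prv \theta$ into the context $[\phi]$ and then chaining with $\phi \prv \psi$ via $\dot\to$-introduction and elimination, hence holds for any $\prv$ satisfying these rules.

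Interpreting the Heyting operations syntactically via $0 := \dot\bot$, $\sqcap := \dot\land$, $\sqcup := \dot\lor$, and $\Rightarrow\, := \dot\to$, each of the four axioms reduces to a pair of intro/elim rules for the corresponding connective. Specifically:
\begin{enumerate}
\item $\dot\bot \prv \phi$ is immediate from $\dot\bot$-elimination.
\item The exponential adjunction $\prva{\theta \dot\land \phi}{\psi} \leftrightarrow \prva{\theta}{\phi \dot\to \psi}$ is $\dot\to$-introduction and elimination, combined with $\dot\land$-introduction/elimination to package and unpackage the assumption pair.
\item The meet property $\prva{\theta}{\phi \dot\land \psi} \leftrightarrow \prva{\theta}{\phi} \land \prva{\theta}{\psi}$ follows directly from $\dot\land$-introduction and elimination on the shared context $\theta$.
\item The join property $\prva{\phi \dot\lor \psi}{\theta} \leftrightarrow \prva{\phi}{\theta} \land \prva{\psi}{\theta}$ follows from $\dot\lor$-introduction for the backward direction; the forward direction uses $\dot\lor$-elimination, where weakening is applied to lift the premises $\prva{\phi}{\theta}$ and $\prva{\psi}{\theta}$ into contexts extended with $\phi \dot\lor \psi$.
\end{enumerate}

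There is no substantial obstacle here; the real point is the methodological one. No classical rule such as Peirce's law and no quantifier rule is invoked anywhere, so the construction applies uniformly to any system $\prv$ that subsumes intuitionistic ND, whether that system be intuitionistic ND itself, classical ND, or any extension in between. This uniformity is exactly what is needed to subsequently evaluate \emph{classical} deduction in a complete Boolean algebra obtained by MacNeille-completing this Lindenbaum algebra.
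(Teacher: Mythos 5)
Your proposal is correct and matches the paper's (deliberately terse) argument: the paper proves the concrete Lindenbaum algebra ``straightforward using weakening'' and obtains the generalised version by replaying that construction for any $\prv$ satisfying the intuitionistic ND rules, weakening, and the fresh-variable equivalences, which is exactly the uniformity point you make. One phrasing nit: transitivity is not literally ``weakening $\psi\prv\theta$ into the context $[\phi]$'' (that is not a context inclusion); rather one applies $\dot\to$-introduction to get $\prv\psi\dot\to\theta$, weakens \emph{that} to the context $[\phi]$, and eliminates --- which is what your parenthetical about $\dot\to$-introduction/elimination already indicates.
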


We denote the Lindenbaum algebra of $\prv$ by $\MCL{L}_i$ and its completion by $\overline{\MCL{L}_i}$.

\begin{lemma}[][glindenbaum_hsat]
	\label{lem:gdown}
	Evaluating $\phi$ in $\overline{\MCL{L}_i}$ yields the set of all $\psi$ with $\prva{\phi}{\psi}$.
\end{lemma}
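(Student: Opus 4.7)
The plan is to replay the proof of \Cref{lem:down} verbatim, relying on the fact that every step used there is available for the generalised deduction system $\prv$ by assumption. Concretely, I would proceed by size induction on $\phi$, with the induction carried out over all $\psi$ simultaneously so that the substitution instances $\phi[t]$ in the quantifier cases remain within the induction.

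The atom case $\phi = P\,\vec t$ holds by construction, since $\heval{P\,\vec t}\,:=\,(P\,\vec t)\down$ and this is exactly the claim for $\prv$ replacing $\vdash$. For $\dot\bot$, $\dot\land$, $\dot\lor$, and $\dot\to$, the claim reduces immediately to the analogous claim for $\MCL{L}_i$ via \Cref{fact:macneille}, since the MacNeille completion preserves the Heyting operations of $\MCL{L}_i$ and the Lindenbaum algebra $\MCL{L}_i$ itself uses the corresponding logical connectives as its algebraic operations (\Cref{fact:glindenbaum}).

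The two interesting cases are the quantifiers, and both proceed exactly as in \Cref{lem:down}. For $\dot\forall \phi$, one direction unfolds $\psi \in \bigsqcap_t \heval{\phi[t]}$ and applies \Cref{lem:named_equiv} — which by assumption holds for $\prv$ — to pick a fresh $x$, reducing the goal $\up\psi \prv \phi$ to $\psi \prv \phi[x]$, which follows from the inductive hypothesis for $\phi[x]$; the converse direction applies the (AE) rule of $\prv$. For $\dot\exists \phi$, one direction uses the (EI) rule together with the universal property of $\bigsqcup$, while the converse unfolds the MacNeille join through lower and upper bounds, uses (EE) together with \Cref{lem:named_equiv} to move to $\psi, \phi[x] \prv \theta$ for a fresh $x$, and closes the argument via the inductive hypothesis for $\phi[x]$ combined with the upper-bound condition.

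Since the only proof-theoretic inputs to \Cref{lem:down} are the natural-deduction rules, weakening, and \Cref{lem:named_equiv}, all of which are assumed to hold for $\prv$, there is no genuine obstacle — the main task is just bookkeeping to confirm that no hidden use of the concrete system $\vdash$ slips in. In particular, no classical principle is invoked here, so the same proof will later yield completeness for any deduction system sandwiched between intuitionistic ND and classical ND (including $\vdash_c$), giving the desired completeness with respect to complete Boolean algebras once one additionally observes that Peirce's law forces the Lindenbaum algebra for $\vdash_c$ to be Boolean.
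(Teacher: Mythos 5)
Your proposal is correct and matches the paper's intent exactly: the paper gives no separate proof for this lemma precisely because it is obtained by replaying the proof of \Cref{lem:down} verbatim, using only the ND rules, weakening, and \Cref{lem:named_equiv}, all of which are assumed for $\prv$. Your accounting of which proof-theoretic inputs are needed, and the treatment of the quantifier cases via size induction and fresh variables, is the same argument the paper relies on.
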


If we instantiate $\prv$ with $\vdash_c$ we can conclude completeness as follows:

\begin{lemma}[][boolean_completion]
	The MacNeille completion of a Boolean algebra is Boolean.
\end{lemma}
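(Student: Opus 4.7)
The plan is to reduce the Peirce-law formulation of Booleanness to the equivalent characterisation by double-negation elimination, i.e.\ $\neg\neg x\le x$ for all $x$, where $\neg x := x\Rightarrow 0$. This equivalence holds in every Heyting algebra: Peirce's law with $y:=0$ combined with $0\le x$ gives $\neg\neg x = \neg x\Rightarrow 0 \le \neg x\Rightarrow x\le x$, and conversely the inequality $((x\Rightarrow y)\Rightarrow x)\sqcap\neg x\le((x\Rightarrow y)\Rightarrow x)\sqcap(x\Rightarrow y)\le x$ shows $(x\Rightarrow y)\Rightarrow x\le\neg\neg x\le x$. Hence it suffices to establish $\neg_c\neg_c X\subseteq X$ in the MacNeille completion $\HO$ for every $X$.

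First I would unfold the pseudo-complement using the construction from \Cref{fact:macneille}, obtaining $\neg_c X = X\Rightarrow_c 0_c = \{x\in\H \mid \forall y\in X.\, x\sqcap y\le 0\}$, which by the Heyting adjunction coincides with $\{x\mid\forall y\in X.\, x\le\neg y\}$. Then, fixing $z\in\neg_c\neg_c X$, I exploit the down-completeness identity $X=\LB(\UB(X))$ to reduce the goal $z\in X$ to showing $z\le u$ for every $u\in\UB(X)$.

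So fix such an upper bound $u$. Since $y\le u\le\neg\neg u$ for every $y\in X$ (the second inequality holds in any Heyting algebra), one obtains $\neg u\sqcap y\le \neg u\sqcap\neg\neg u\le 0$ for all $y\in X$, showing $\neg u\in\neg_c X$. Applying the hypothesis $z\in\neg_c\neg_c X$ to $\neg u$ yields $z\sqcap\neg u\le 0$, i.e.\ $z\le\neg\neg u$. At this step — and only at this step — the Booleanness of $\H$ is invoked to conclude $\neg\neg u\le u$, hence $z\le u$ as required.

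The main obstacle I expect is isolating exactly where Booleanness of $\H$ enters: a naive attempt to verify Peirce's law for $\HO$ directly has to manufacture a suitable witness in $X\Rightarrow_c Y$, which is awkward because of the extra parameter $Y$ and its lack of interaction with the upper bounds of $X$. Routing through double-negation elimination removes $Y$ entirely, confines the use of the Boolean hypothesis to a single step on elements of $\H$ itself, and reduces the rest of the argument to bookkeeping about upper and lower bounds permitted by down-completeness.
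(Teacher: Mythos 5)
Your proof is correct; the paper states this lemma with only a pointer to the mechanisation, and your argument — translating Peirce's law into double-negation elimination, using down-completeness $\LB(\UB\,X)\subseteq X$ to reduce $z\in\neg_c\neg_c X\subseteq X$ to $z\le u$ for each $u\in\UB\,X$, and invoking Booleanness of $\H$ exactly once at $\neg\neg u\le u$ — is the natural one for the given definition of $\Rightarrow_c$ and is essentially what the formal proof does. The only cosmetic elision is in the direction DNE~$\Rightarrow$~Peirce: your displayed chain ends at $((x\Rightarrow y)\Rightarrow x)\sqcap\neg x\le x$, and one must still meet with $\neg x$ to obtain $\le x\sqcap\neg x\le 0$ before the adjunction yields $(x\Rightarrow y)\Rightarrow x\le\neg\neg x$.
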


\begin{theorem}[][bcompleteness]
	If ${\phi}$ is valid in every complete Boolean algebra, then $\vdash_c \phi$.
\end{theorem}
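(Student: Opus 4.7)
The plan is to mirror the proof of \Cref{coq:hcompleteness} by specialising the generic Lindenbaum construction to classical deduction $\vdash_c$ and then verifying the extra Booleanness requirement. First, I would check that $\vdash_c$ satisfies the assumptions made on the generic deduction predicate in \Cref{fact:glindenbaum} and \Cref{lem:gdown}: the rules of intuitionistic ND are derivable in $\vdash_c$ since the latter extends the former, while weakening and the fresh-variable equivalences transfer from \Cref{lem:weak} and \Cref{lem:named_equiv} to the classical system.

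Instantiating the generic construction with $\vdash_c$ then yields a Heyting algebra $\MCL{L}_c$ whose preorder is $\psi\vdash_c\phi$, with logical connectives as operations, together with the evaluation lemma stating that in its MacNeille completion $\overline{\MCL{L}_c}$ we have $\heval\phi\equiv \{\psi\mid \psi\vdash_c\phi\}$. The key refinement is to show that $\MCL{L}_c$ is already Boolean, i.e.\ $(\phi\dot\to\psi)\dot\to\phi\vdash_c\phi$ for all $\phi,\psi$, which is exactly Peirce's axiom built into $\vdash_c$. Applying the preceding lemma on MacNeille completions preserving Booleanness (\Cref{coq:boolean_completion}) then ensures that $\overline{\MCL{L}_c}$ is a complete Boolean algebra.

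Finally, from the hypothesis that $\phi$ is valid in every complete Boolean algebra it follows that $\psi\le\heval\phi$ in $\overline{\MCL{L}_c}$ for every $\psi:\Form$. Unfolding via the evaluation lemma yields $\psi\vdash_c \phi$ for every $\psi$, and choosing the tautology $\psi:=\dot\bot\dot\to\dot\bot$ closes the proof exactly as in the intuitionistic case.

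The only non-routine step is the Booleanness of $\MCL{L}_c$, but this is essentially a restatement of Peirce's rule at the level of the preorder and thus straightforward; all the real work has been done in \Cref{fact:glindenbaum}, \Cref{lem:gdown}, and \Cref{coq:boolean_completion}.
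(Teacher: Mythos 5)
Your proposal is correct and follows essentially the same route as the paper: instantiate the generic Lindenbaum construction (\Cref{fact:glindenbaum}, \Cref{lem:gdown}) with $\vdash_c$, observe that $\MCL{L}_c$ is Boolean by the Peirce rule so that \Cref{coq:boolean_completion} makes $\overline{\MCL{L}_c}$ a complete Boolean algebra, and conclude via the evaluation lemma and the tautology $\dot\bot\dot\to\dot\bot$. No gaps.
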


\begin{proofqed}
	By \Cref{coq:boolean_completion}, $\overline{\MCL{L}_c}$ is Boolean since $\MCL{L}_c$ is so due to the classical rule (P).
	Then from $\phi$ valid in $\overline{\MCL{L}_c}$ we can deduce $\vdash_c \phi$ with \Cref{lem:gdown} as before.
\end{proofqed}

Note that this general construction could of course be instantiated to intuitionistic ND in order to derive \Cref{coq:hcompleteness} in the first place, same as to other intermediate logics that are not considered in this paper.

\section{Dialogue Game Semantics}
\label{sec:dialogues}
\setCoqFilename{Sorensen}

Dialogues are games modeling a proponent defending the validity of a formula
against an opponent. In the terminology of Felscher~\cite{FelscherDialogues},
the dialogues we consider in this section are the intuitionistic E-dialogues,
generalised over their local rules $(\Form, \Form^a, \attacks, \rhd,
\defenses{-})$. Given abstract types for formulas $\Form$ and attacks $\attacks$, the
relation $\attack{\varphi}{\psi}{a}$ states that a player may attack $\varphi :
\Form$ with $a : \attacks$ by possibly admitting a unique $\psi : \Opt(\Form)$. If $\psi
= \emptyset$, no admission is made. Each $a : \attacks$ has an associated set
$\defenses{a}$ of formulas that may be admitted to fend off $a$. Special rules
restrict when the proponent may admit atomic formulas, members of the set
$\Form^a$. We write $\attackN{\varphi}{a}$ for $\attack{\varphi}{\emptyset}{a}$.
The local rules of first-order logic are given below with atomic formulas
$\Form^a := \{P\,\vec{t} ~|~ P : \Preds\}$.
\begin{small}
	\begin{align*}
	\attackN{\varphi \dot\vee \psi}{a_{\dot\vee}} \quad & \defenses{a_{\dot\vee}} = \{\varphi, \psi\} &
	\hspace{-1.5em}\attack{\varphi \dot\to \psi}{\some{\varphi}}{a_{\dot\to}} \quad & \defenses{a_{\dot\to}} = \{\psi\} &
	\attackN{\varphi \dot\wedge \psi}{a_L} \quad & \defenses{a_L} = \{\varphi\} \\
	\attackN{\dot\forall \varphi}{a_t} \quad & \defenses{a_t} = \{\varphi[t]\} &
	\attackN{\dot\bot}{a_{\dot\bot}} \quad & \defenses{a_{\dot\bot}} = \{\} &
	\attackN{\varphi \dot\wedge \psi}{a_R} \quad & \defenses{a_R} = \{\psi\} \\
	\attackN{\dot\exists \varphi}{a_{\dot\exists}} \quad & \defenses{a_{\dot\exists}} = \{ \varphi[t] ~|~ t : \Term \}
	\end{align*}
\end{small}
In contrast to their usual presentation as sequences of alternating moves, we
define dialogues as state transition systems over members $(A_o, c)$ of the type $\List(\Form) \times
\attacks$ containing the opponent's admissions ($A_o$) and last attack ($c$). The proponent
opens each round by picking a move. She can defend against the opponent's attack
$c$ by admitting a justified defense formula $\varphi \in \defenses{c}$, meaning
$\varphi \in \Form^a$ implies $\varphi \in A_o$. Alternatively, she can launch an
attack $a$ against any of the opponent's admissions if the admission resulting
from $a$ is justified.
\begin{gather*}
  \hspace{2em}
  \infer[\small\textnormal{PD}]{\epmove{(A_o, c)}{\varphi}}{\varphi \in \defenses{c} \quad \just{A_o}{\varphi}}
  \hspace{3em}
  \infer[\small\textnormal{PA}]{\epmove{(A_o, c)}{(a, \varphi)}}{\varphi \in A_o \quad \attack{\varphi}{\psi}{a} \quad \just{A_o}{\psi}}
\end{gather*}

Given such a move $m$, the opponent reacts to it by transforming the
state~$s$ into $s'$ (written as $\eomove{s\,;\,m}{s'}$). The opponent may
attack the proponent's defense formula (OA), defend against her attack (OD) or
counter her attack by attacking her admission (OC). We define $\some{\varphi} ::
A := \varphi :: A$ and $\emptyset :: A := A$.
\begin{gather*}
  \hspace{3em}\infer[\small\textnormal{OA}]{\eomove{(A_o, c)\,;\varphi}{(\psi :: A_o, c')}}{\attack{\varphi}{\psi}{c'}} \hspace{3em}
  \infer[\small\textnormal{OD}]{\eomove{(A_o, c)\,;(a, \varphi)}{(\psi :: A_o, c)}}{\psi \in \defenses{a}} \\[0.2cm]
  \hspace{3em}\infer[\small\textnormal{OC}]{\eomove{(A_o, c)\,;(a, \varphi)}{(\theta :: A_o, c')}}{\attack{\varphi}{\some{\psi}}{a} \quad \attack{\psi}{\theta}{c'}}
\end{gather*}

 A formula
$\varphi$ is then considered E-valid if it is non-atomic and for all $\attack{\varphi}{\psi}{c}$,
there is a winning strategy $\eWin{([\psi], c)}$ as defined below. %
$$ \infer{\eWin{s}}{\epmove{s}{m} \quad \forall s'.~\eomove{s\,;m}{s'} \to \eWin{s'}}$$

Following the strategy of \cite{SorensenDialogues}, we first prove the
soundness and completeness of the sequent calculus LJD which is defined in terms
of the same notions as the dialogues. Indeed, as witnessed in the proofs of
soundness and completeness, derivations of LJD are isomorphic to winning
strategies, the R- and L-rule corresponding to a proponent defense and attack,
their premises matching the possible opponent responses to each move. The
statement $\Gamma \LJD \mathcal{S}$ means that the context $\Gamma$ entails the
disjunction of the formulas contained in the set $\mathcal{S}$.
	\begin{mathpar}
	\inferrule*[Right=R]{\varphi \in \mathcal{S} \and \just{\Gamma}{\varphi} \\\\ \forall a' \theta.~
		\attack{\varphi}{\theta}{a'} \,\to\,\Gamma, \theta \LJD \defenses{a}}{\Gamma \LJD \mathcal{S}}
  
	\inferrule*[Right=L]{\varphi \in \Gamma \and \just{\Gamma}{\psi} \\\\
    \attack{\varphi}{\psi}{a} \and \forall
		\,\theta \in \defenses{a}.~\Gamma, \theta \LJD \mathcal{S} \and \forall a' \theta.~
		\attack{\psi}{\theta}{a'}\,\to\, \Gamma, \tau \LJD \defenses{a'}}{\Gamma \LJD \mathcal{S}}
	\end{mathpar}

\begin{theorem}[][eequiv]\label{thm:esoundness}
 Any formula $\varphi$ is E-valid if and only if one can derive $[] \LJD \{\varphi\}$.
\end{theorem}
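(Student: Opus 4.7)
The plan is to derive the theorem from a generalised biconditional proved by mutual induction on LJD derivations and winning strategies. Concretely, I would first establish
\[
\Gamma \LJD \defenses{c} \;\iff\; \eWin{(\Gamma, c)}
\]
for arbitrary $\Gamma$ and attacks $c$. From this, the theorem follows at once: a derivation of $[] \LJD \{\varphi\}$ must be obtained by the R-rule, since L requires a non-empty context; the R-rule enforces $\just{[]}{\varphi}$, hence $\varphi$ is non-atomic, and its premises $[], \theta \LJD \defenses{a}$ ranging over all attacks $\attack{\varphi}{\theta}{a}$ are, by the biconditional, equivalent to $\eWin{([\theta], a)}$, which is exactly the E-validity condition for $\varphi$.

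For the $(\Leftarrow)$ direction I would induct on $\Gamma \LJD \mathcal{S}$, maintaining the invariant that $\mathcal{S} = \defenses{c}$ for some attack $c$ carried alongside the derivation. An application of R with chosen defense $\varphi \in \defenses{c}$ becomes the proponent move PD with $\varphi$; the only possible opponent response is an OA-attack $\attack{\varphi}{\theta}{a'}$ producing the state $(\theta :: \Gamma, a')$, for which the induction hypothesis on the premise $\Gamma, \theta \LJD \defenses{a'}$ supplies a winning subtree. An application of L with attack $\attack{\varphi}{\psi}{a}$ against some $\varphi \in \Gamma$ becomes the proponent move PA; the opponent may react by OD (defending with some $\theta \in \defenses{a}$) or OC (counter-attacking the admission $\psi$ via $\attack{\psi}{\theta}{a'}$), and the two families of L-premises supply winning subtrees for these cases. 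The converse direction $(\Rightarrow)$ simply inverts this argument by induction on $\eWin{(\Gamma, c)}$: whether the proponent's chosen move is PD or PA dictates the use of R or L, and the three opponent response schemas OA, OD, OC match the three families of premises.

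The chief obstacle is bookkeeping around the optional admission component of an attack. The operation $\psi :: A_o$ silently drops $\emptyset$, so the L-rule and the PA/OC moves must be aligned in a way that uniformly covers attacks with and without admissions; one also has to verify that justification conditions $\just{-}{-}$ transfer correctly under the identification of the proponent's admission list with the LJD context. Beyond these invariants, the proof is purely structural: the rules R and L were deliberately designed to mirror PD and PA, and their premises correspond directly to the opponent's response schemas, so each inductive case reduces to a mechanical match-up once the bookkeeping is in place.
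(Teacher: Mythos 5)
Your proposal is correct and matches the paper's proof: the paper likewise reduces the theorem to the generalised equivalence between $A_o \LJD \defenses{c}$ and $\eWin{(A_o, c)}$, proving each direction by induction on the winning strategy and on the LJD derivation respectively (with R/L mirroring PD/PA and the premises matching the opponent responses OA, OD, OC), and then concludes via a final application of the R-rule.
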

\begin{proofqed}
  $\eWin{(A_o, c)} \to A_o \LJD \defenses{c}$ holds by induction on $\eWin{(A_o, c)}$.
  From this, completeness follows with an application of the $R$-rule,
  transforming a winning strategy $\eWin{([\psi], c)}$ for any $\attack{\varphi}{\psi}{c}$ into $[\psi] \LJD
  \defenses{c}$.
  Soundness can be proven symmetrically.
\end{proofqed}

\setCoqFilename{DialogFull}
To arrive at a more traditional soundness and completeness result, we show that
one can translate between derivations in LJD and the intuitionistic sequent
calculus LJ deriving sequents $\Gamma \fsq \varphi$ as defined in \Cref{def:lj} of \Cref{sec:systems}.

\begin{lemma}[][Dprv_fprv_equiv]
  One can derive $\Gamma \LJD \{\varphi\}$ if and only if one can derive $\Gamma \fsq \varphi$.
\end{lemma}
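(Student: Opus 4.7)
The plan is to establish both directions by induction on the respective derivations, exhibiting a rule-by-rule correspondence between LJ and LJD.

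For $\Gamma \fsq \varphi \to \Gamma \LJD \{\varphi\}$, I would proceed by induction on the LJ derivation. Each LJ right-rule corresponds to an R-step in LJD with the appropriate attack: the implication-right rule becomes an R-step with $\attack{\varphi \dot\to \psi}{\some{\varphi}}{a_{\dot\to}}$ whose singleton defense $\{\psi\}$ reduces to the inductive hypothesis $\Gamma, \varphi \LJD \{\psi\}$; the universal-right rule similarly uses $a_t$ together with the substitutivity/renaming facts for fresh variables; the conjunction case is immediate from $a_L$ and $a_R$. LJ's disjunction-right$_i$ and existence-right rules each require \emph{two} R-steps in LJD, one attacking the principal connective to produce a multi-element defense set $\mathcal{S}$ and a second picking the chosen disjunct or witness from $\mathcal{S}$ and applying the inductive hypothesis. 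Each LJ left-rule maps to an L-step of LJD, with the counterattack premise supplying the first LJ subgoal (e.g., $\Gamma \fsq \varphi$ for $\dot\to$-left, obtained by a secondary structural induction that re-enters the forward translation on the proof of $\varphi$) and the defense premise supplying the second. The identity case for atomic $\varphi \in \Gamma$ follows from an R-step, since atomic formulas admit no attacks and the proponent's justification condition is satisfied directly.

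For $\Gamma \LJD \{\varphi\} \to \Gamma \fsq \varphi$, I would generalise the induction hypothesis to accommodate multi-element defense sets arising during an LJD derivation. A suitable formulation is: whenever $\Gamma \LJD \mathcal{S}$ is derived and the outermost R-step selects $\psi \in \mathcal{S}$, then $\Gamma \fsq \psi$. Under this formulation each LJD rule is mirrored by the matching LJ right- or left-rule, with a $\dot\vee$- or $\dot\exists$-introduction inserted exactly when an R-step narrows a multi-element $\mathcal{S}$ back to a single formula (the narrowing witness providing the disjunct or existential witness). The defense and counterattack premises of the L-rule then feed directly into the two premises of the corresponding LJ left-rule, and the usual structural properties of LJ (weakening and substitutivity under renaming) are applied as needed.

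The main obstacle is the treatment of the multi-conclusion $\mathcal{S}$, especially the infinite defense set $\{\varphi[t] \mid t : \Term\}$ of the existential attack, which cannot be faithfully rendered as a finite disjunction in single-conclusion LJ. My chosen formulation sidesteps this by tracking only the $\psi$ actually picked by the next R-step, so no disjunction or existence over the whole set is ever needed---the infinite defense set is only used to extract the single witness that matches LJ's $\dot\exists$-right. A secondary nuisance is the re-entrant structure of the L-rule, where the counterattack premise recursively demands that the proponent's admission is itself defensible; this should be handled by bundling both directions into a single simultaneous induction, or equivalently by a well-founded induction on derivation size. Beyond this bookkeeping, the remaining cases are routine rule-by-rule translations.
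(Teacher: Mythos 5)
Your forward direction (LJ to LJD) is essentially the paper's: the paper proves the substitution-generalised statement $\Gamma \fsq \varphi \to \forall\sigma.\,\Gamma[\sigma]\LJD\{\varphi[\sigma]\}$, which is the systematic version of the ``substitutivity/renaming facts'' you invoke for the quantifier rules. The genuine gap is in the backward direction. Your generalised induction hypothesis --- whenever $\Gamma \LJD \mathcal{S}$ is derived and \emph{the outermost R-step} selects $\psi \in \mathcal{S}$, then $\Gamma \fsq \psi$ --- does not survive the L-rule. A derivation of $\Gamma \LJD \mathcal{S}$ need not begin with an R-step, and when it begins with an L-step its defense premises $\Gamma,\theta \LJD \mathcal{S}$ for $\theta\in\defenses{a}$ carry the whole set $\mathcal{S}$ along, so different branches may eventually defend with \emph{different} members of $\mathcal{S}$. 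Concretely, to derive $\Gamma \LJD \{\varphi,\psi\}$ with $\alpha\dot\lor\beta \in \Gamma$, the proponent may attack the disjunction and answer the opponent's admission $\alpha$ by defending $\varphi$ but the admission $\beta$ by defending $\psi$; there is no single ``selected'' element, your hypothesis yields only $\Gamma,\alpha\fsq\varphi$ and $\Gamma,\beta\fsq\psi$, and LJ's (DL) rule cannot combine these since its premises must share a succedent. So ``tracking the $\psi$ picked by the next R-step'' does not sidestep the multi-conclusion issue; it merely postpones it to the L-case, where the induction breaks.

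The paper's fix is a continuation-passing generalisation: from $\Gamma\LJD\mathcal{S}$ one proves $\Gamma\fsq\varphi$ for \emph{every} goal $\varphi$ equipped with a uniform continuation $\forall\psi\in\mathcal{S}.\,\forall\Gamma'\supseteq\Gamma.\;\Gamma'\fsq\psi\to\Gamma'\fsq\varphi$. The continuation is what inserts $\mathrm{DR}_i$ or $\mathrm{ER}$ (with the extracted witness) at whatever point any branch defends some member of $\mathcal{S}$, so all branches conclude the same fixed $\varphi$ and the LJ left-rules compose; the quantification over $\Gamma'\supseteq\Gamma$ is needed because L-steps grow the context. You should adopt this (or an equivalent strengthening). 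As a secondary point, no simultaneous induction between the two directions is needed: the counterattack premises of the L-rule are ordinary premises of the inductive definition of $\LJD$, so a plain structural induction on the $\LJD$ derivation suffices for each direction separately.
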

\begin{proofqed}
  Completeness is generalised as below and shown by induction on $\Gamma\hspace{-0.3em} \LJD\hspace{-0.3em} \mathcal{S}$:
  $$\Gamma \LJD \mathcal{S} ~\to~ \forall \varphi.~(\forall \psi, \Gamma \subseteq
  \Gamma'.~\Gamma' \fsq \psi \to \Gamma' \fsq \varphi) \to \Gamma \fsq \varphi$$
  Soundness follows analogously from $\Gamma \fsq \varphi \to \forall \sigma.~\Gamma[\sigma]
  \LJD \{\varphi[\sigma]\}$.
\end{proofqed}
\begin{corollary}[][evalid_fprv_equiv]
  Any formula $\varphi$ is E-valid if and only if one can derive $[] \fsq \varphi$.
\end{corollary}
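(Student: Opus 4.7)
The plan is to chain the two preceding results directly, since the corollary is simply the composition of \Cref{coq:eequiv} and \Cref{coq:Dprv_fprv_equiv} specialised to the empty context. No additional combinatorial or semantic work is required; the corollary is essentially a bookkeeping step that bridges the game-theoretic notion of validity with derivability in the standard intuitionistic sequent calculus LJ.

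Concretely, I would argue as follows. Assume first that $\varphi$ is E-valid. By \Cref{coq:eequiv}, this yields a derivation $[] \LJD \{\varphi\}$. Applying the forward direction of \Cref{coq:Dprv_fprv_equiv} with $\Gamma := []$ and the singleton $\{\varphi\}$, we obtain a derivation $[] \fsq \varphi$ as required. Conversely, given $[] \fsq \varphi$, the backward direction of \Cref{coq:Dprv_fprv_equiv} produces $[] \LJD \{\varphi\}$, and then \Cref{coq:eequiv} converts this back into a winning strategy witnessing the E-validity of $\varphi$.

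The only point that deserves a brief sanity check is that the singleton set $\{\varphi\}$ used in \Cref{coq:Dprv_fprv_equiv} matches the set appearing in \Cref{coq:eequiv}; both statements are phrased for the same singleton, so the composition is immediate. There is no genuine obstacle here — the corollary is packaged precisely to make the two equivalences line up — so the proof reduces to simply invoking the two lemmas in sequence, as indicated below.
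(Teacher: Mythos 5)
Your proof is correct and matches the paper's approach exactly: the corollary is stated without an explicit proof precisely because it is the immediate composition of \Cref{thm:esoundness} and the LJD/LJ equivalence lemma instantiated at the empty context. Nothing further is needed.
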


\setCoqFilename{Sorensen}
We now extend the soundness and completeness results to D-dialogues, which lift
the E-dialogues' restriction on the opponent to only react to the directly
preceding proponent move. We formalise D-dialogues as a state transition system
over $(A_p, C_p, A_o, C_o) : \dstate$ where $A_{-}$ contains the open admissions
and $C_{-}$ the unanswered challenges against the respective players. As before,
the proponent may defend against the last open challenge against her (PD) or
attack one of the opponent's admissions (PA).
\begin{gather*}
  \hspace{3em}\infer[\small\textnormal{PD}]{\dpmove{(A_p, c :: C_p, A_o, C_o)}{(\varphi :: A_p,
      C_p, A_o, C_o)}}{\varphi \in \defenses{c} \quad \just{A_o}{\varphi}} \\[0.2cm]
  \hspace{3em}\infer[\small\textnormal{PA}]{\dpmove{(A_p, C_p, A_o, C_o)}{(\psi :: A_p, C_p, A_o, a
      :: C_o)}}{\varphi \in A_o \quad \just{A_o}{\psi} \quad \attack{\varphi}{\psi}{a}}
\end{gather*}
Symmetrically, the opponent may do the same, although she may still only ever
attack each of proponent's admissions once.
\begin{gather*}
  \hspace{3em}\infer[\small\textnormal{OD}]{\domove{(A_p, C_p, A_o, a :: C_o)}{(A_p,
      C_p, \varphi :: A_o, C_o)}}{\varphi \in \defenses{a}} \\[0.2cm]
  \hspace{3em}\infer[\small\textnormal{OA}]{\domove{(A_p \app \varphi :: A_p',
      C_p, A_o, C_o)}{(A_p \app A_p', c :: C_p, \psi :: A_o, C_o)}}{\attack{\varphi}{\psi}{c}}
\end{gather*}

Winning strategies and validity for D-dialogues are defined completely
analogously to those for E-dialogues. As a winning strategy for D-dialogues
contains information on how to fend off strictly more attacks than its
E-counterpart, the proof strategy of completeness of LJD with regards to E-strategies
can be extended to D-strategies.
\begin{lemma}[][dcompleteness]
  If $\varphi$ is D-valid, one can derive $[] \LJD \{\varphi\}$.
\end{lemma}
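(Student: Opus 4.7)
The plan is to adapt the completeness proof for E-dialogues (Theorem~\ref{coq:eequiv}) to D-dialogues, as suggested by the remark accompanying the lemma. The key conceptual point is that D-opponents are strictly more flexible than E-opponents, so a D-winning strategy must, in particular, handle all ``E-style'' opponent responses. Thus the induction skeleton will mirror the E-case, only with additional book-keeping for the richer state $(A_p, C_p, A_o, C_o)$ of D-dialogues.

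The central step is to generalise the E-invariant and prove
$$\dWin{(A_p, c :: C_p, A_o, C_o)} \to A_o \LJD \defenses{c}$$
by induction on the D-winning strategy, splitting on the proponent's opening move. In the \textnormal{PD} case, the proponent defends $c$ with some $\varphi \in \defenses{c}$, transitioning to $(\varphi :: A_p, C_p, A_o, C_o)$. I would apply the R-rule of LJD with $\varphi$, whose justification obligation is discharged by the \textnormal{PD} justification $\just{A_o}{\varphi}$, while its remaining premises $A_o, \theta \LJD \defenses{a'}$ for each $\attack{\varphi}{\theta}{a'}$ follow from the inductive hypothesis applied to the state $(\varphi :: A_p, a' :: C_p, \theta :: A_o, C_o)$ reached by the E-style $\textnormal{OA}$ response on $\varphi$. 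In the \textnormal{PA} case, the proponent attacks $\varphi \in A_o$ via $a$, admitting $\psi$, and I would apply the L-rule of LJD to $\varphi$: its two families of premises correspond exactly to the $\textnormal{OD}$ responses (one premise $A_o, \theta \LJD \defenses{c}$ per $\theta \in \defenses{a}$) and the $\textnormal{OA}$ counter-attacks on $\psi$ (one premise $A_o, \tau \LJD \defenses{a'}$ per $\attack{\psi}{\tau}{a'}$), both again supplied by the inductive hypothesis on the resulting D-states.

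The main obstacle is that D-opponents may also attack older admissions in $A_p$ or defend older attacks in $C_o$, which are responses the D-winning strategy must handle but which do not correspond to any premise of the R- or L-rule just applied. The correct attitude is that these strategic branches are \emph{redundant} for the purpose of constructing the LJD derivation: because a D-winning strategy wins against every opponent response, it wins in particular against the restricted E-style responses, whose inductive conclusions are precisely what the LJD rules demand. Formally confirming this redundancy, together with the appropriate weakening and justification lemmas for LJD, is the subtle book-keeping step to get right; in the Coq mechanisation it will likely require auxiliary lemmas about preservation of $\dWin$ under monotonic growth of $A_p$ and $C_o$.

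Finally, the lemma follows by instantiating the invariant at the initial state $([], [c], [\psi], [])$ for each $\attack{\varphi}{\psi}{c}$ — which is exactly the condition for D-validity of $\varphi$ — to obtain $[\psi] \LJD \defenses{c}$, and then closing with a single application of the R-rule to conclude $[] \LJD \{\varphi\}$.
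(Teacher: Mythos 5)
Your proposal is correct and follows essentially the same route as the paper: the paper's (very terse) proof likewise establishes $\dWin{(A_p, c :: C_p, A_o, C_o)} \to A_o \LJD \defenses{c}$ by induction on the winning strategy and then concludes from the definition of D-validity. Your elaboration of the PD/PA cases via the R- and L-rules, and the observation that the extra D-opponent responses are simply not needed to discharge the LJD premises, fills in exactly the details the paper leaves implicit.
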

\begin{proofqed}
  We show $\dWin{( A_p, c :: C_p, A_o, C_o )} \to A_o \LJD \defenses{c}$ by
  induction on $\dWin{( A_p, c :: C_p, A_o, C_o )}$. The result follows from the
  definitions of validity.
\end{proofqed}

Proving LJD sound for D-dialogues is more involved. First, we make an
observation about the structure of a winning strategy for D-dialogues that was
derived from an LJD derivation: If the derivation ends in an application of (L), telling the
proponent to attack one of the opponent's admissions via an attack $a$, it only
indicates how to continue to fend off the current challenge against the
proponent after the opponent admits some formula from $\defenses{a}$.
By attacking one of the opponent's admissions, the proponent thus defers the
continuation of her ``current line of argument'' in the resulting D-strategy until the opponent chooses to
react, which by the rules of D-dialogues may be later than next turn.

We define a new class of dialogues, the S-dialogues, which encode this
observation and aiding us in stating appropriate invariants in the proof of
soundness.
We formalise them as a transition system over $(A_p, A_o, D) : \sstate$
where a pair $(a, c) \in D$ represents the proponent deferring her response to $c$
until the opponent responds to $a$. The proponent, along a state
$(A_p, A_o, D)$, is given a current challenge $c$ to react to. She can
either defend against that challenge (PD) or defer responding by attacking one of
the opponent's admissions (PA).
\begin{gather*}
  \hspace{3em}\infer[\small\textnormal{PD}]{\spmove{(A_p , A_o, D) \,;\, c}{(\varphi :: A_p, A_o,
      D)}}{\varphi \in \defenses{c} \quad \just{A_o}{\varphi}} \\[0.2cm]
  \hspace{3em}\infer[\small\textnormal{PA}]{\spmove{(A_p, A_o, D) \,;\, c}{(\psi :: A_p, A_o, (a, c)
      :: D)}}{\varphi \in A_o \quad \just{A_o}{\psi} \quad
    \attack{\varphi}{\psi}{a}}
\end{gather*}
The opponent then can either reissue the current challenge by defending against
the proponent's attack (OD) or issue a new challenge by attacking one of the
proponent's admissions (OA).
\begin{gather*}
  \infer[\small\textnormal{OD}]{\somove{(A_p, A_o, (a, c) :: D)}{(A_p,
      \varphi :: A_o, D)\,;\,c}}{\varphi \in \defenses{a}} \\[0.2cm]
  \infer[\small\textnormal{OA}]{\somove{(A_p \app \varphi :: A_p',
      A_o, D)}{(A_p \app A_p', \psi :: A_o, D)\,;\,c}}{\attack{\varphi}{\psi}{c}}
\end{gather*}

The winning strategies and notion of validity for S-dialogues are again defined
analogously to the other two kinds of dialogues. As S-dialogues are essentially
just D-dialogues with a stack structure imposed on $C_p$ and $C_o$ (hence \emph{S}-dialogues), we can
translate their winning strategies back into ones for D-dialogues. Importantly,
this means it suffices to show soundness for S-dialogues.

\begin{lemma}[][svalid_dvalid]
  Any S-valid formula $\varphi$ is also D-valid.
\end{lemma}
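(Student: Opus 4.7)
The plan is to establish a simulation between S- and D-winning strategies via an explicit translation $\tau$ from S-configurations to D-states. Given an S-state $(A_p, A_o, D)$ with $D = [(a_1, c_1), \dots, (a_n, c_n)]$, I define $\tau(A_p, A_o, D) := (A_p, [c_1, \dots, c_n], A_o, [a_1, \dots, a_n])$, and for an S-state together with a current challenge $c$, set $\tau((A_p, A_o, D)\,;\,c) := (A_p, c :: [c_1, \dots, c_n], A_o, [a_1, \dots, a_n])$. In particular, the initial S-configuration $([], [\psi], [])\,;\,c$ for any opening attack $\attack{\varphi}{\psi}{c}$ translates to the corresponding initial D-state $([], [c], [\psi], [])$.

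I would then prove by induction on $\sWin{(A_p, A_o, D)\,;\,c}$ that $\dWin{\tau((A_p, A_o, D)\,;\,c)}$ follows. For the proponent's move at the root, an S-PD defending against $c$ is simulated by a D-PD on the head $c$ of the translated $C_p$, while an S-PA launching an attack $a$ on some $\varphi \in A_o$ is simulated by a D-PA on the same $\varphi$ with the same $a$. In both cases the translation is preserved: pushing $(a, c)$ onto $D$ on the S-side corresponds to pushing $a$ onto $C_o$ on the D-side while leaving $C_p$ unchanged, with $c$ still at its head. The opponent's admissible D-responses from the resulting state are in bijection with the admissible S-responses: D-OD defends against the head $a$ of $C_o$, which is precisely the topmost attack recorded in $D$, and retrieves the paired $c'$ as the new head of $C_p$; D-OA attacks any admission $\varphi \in A_p$, mirroring S-OA and introducing the same new current challenge on both sides. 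Applying the induction hypothesis to each such response then supplies $\dWin$ at the translated successor.

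The main obstacle I anticipate will be verifying that the LIFO discipline of $C_p$ and $C_o$ imposed by the D-dialogue rules is faithfully tracked by the stack $D$ throughout every transition, in particular ensuring that the ``current challenge'' of the S-configuration always coincides with the head of $C_p$ in the associated D-state at precisely the moments the proponent is expected to move. While each case is mechanical once $\tau$ is fixed, synchronising the three components requires care in the PA step, where the current challenge is deferred by being paired with the fresh attack, and in the OD step, where the paired challenge must be retrieved from the stack and reinstated as the new current challenge.

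Combining the simulation with the observation that $\tau$ maps every initial S-configuration $([], [\psi], [])\,;\,c$ to the initial D-state $([], [c], [\psi], [])$, any S-winning strategy for an opening attack $\attack{\varphi}{\psi}{c}$ yields a D-winning strategy for the same attack. Since S-validity of $\varphi$ provides such an S-strategy for every such $\attack{\varphi}{\psi}{c}$ and $\varphi$ is non-atomic, we conclude that $\varphi$ is D-valid.
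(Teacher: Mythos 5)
Your proposal is correct and matches the paper's proof: the paper likewise proves the generalised statement $\sWin{(A_p, A_o, D)\,;\,c} \to \dWin{(A_p, c :: \pi_1\,D, A_o, \pi_2\,D)}$ by induction on the S-winning strategy, where the projections of $D$ are exactly your unzipping of the deferral stack into $C_p$ and $C_o$. Your elaboration of the move-by-move bijection (and the observation that the pairing in $D$ keeps the heads of $C_p$ and $C_o$ synchronised) is precisely the content the paper leaves implicit.
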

\begin{proofqed}
  We show $\sWin{(A_p, A_o, D) \,;\, c} \to \dWin{(A_p, c :: \pi_1\,D, A_o,
    \pi_2\,D)}$ by induction on $\sWin{}$. The claim follows from the
  definitions of validity.
\end{proofqed}

Thus what remains is to prove that LJD is sound with regards to S-validity. The
difficulty of this proof stems from the fact that the resulting winning strategy
will be much ``deeper'' than the LJD derivation because the opponent may now
perform all possible responses to a proponent move within the same game instead
of being restricted to picking only one. This in turn means that a simple
induction on the derivation will not suffice for the proof, instead requiring a
more involved induction principle. We thus first give the proof, leaving the
induction principle abstract and then define it afterwards.

\setCoqFilename{SDialogues}
\begin{theorem}[][Dprv_svalid]\label{lem:ssound}
  If $[] \LJD \{\varphi\}$ can be derived then $\varphi$ is S-valid.
\end{theorem}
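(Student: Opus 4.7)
The plan is to strengthen the statement to an invariant that accounts for all of the proponent's pending obligations simultaneously, and then proceed by well-founded induction on a suitable measure assembled from LJD derivations. Concretely, I would prove: for every S-state $(A_p, A_o, D)$ and current challenge $c$, if there is a derivation $\Gamma \LJD \mathcal{S}$ with $\Gamma \subseteq A_o$ and $\mathcal{S} \subseteq \defenses{c}$, and if each deferred pair $(a, c') \in D$ comes equipped with an associated LJD derivation witnessing how the proponent intends to proceed once the opponent eventually defends against $a$, then $\sWin{(A_p, A_o, D)\,;\,c}$. The theorem follows by specialising to $D = []$ and to the state reached after the opponent's opening attack against $\varphi$, using the hypothesis $[] \LJD \{\varphi\}$ to supply the single required derivation.

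Following the structure signalled by the paper, I would first perform the case analysis, leaving the induction principle abstract, and only afterwards define it. The case distinction is on the last rule used in the LJD derivation attached to the current challenge. If it ends in (R) picking some $\varphi \in \mathcal{S}$, the proponent plays (PD) with $\varphi$; the opponent may then react via (OA), attacking $\varphi$, which is handled by the third premise of (R) delivering a strictly smaller derivation for the resulting state, or via (OD), defending a deferred attack $(a, c') \in D$, which reactivates $c'$ with a strictly smaller derivation already stored for $(a, c')$. If the derivation ends in (L), the proponent plays (PA), deferring $c$ paired with the continuation obtained from the (L)-premises as its new entry in $D$; every opponent reaction is again handled by strictly smaller associated derivations.

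The main obstacle is designing the well-founded order that justifies these recursive calls. A plain subterm order on a single derivation cannot suffice, because proponent attacks spawn new deferred obligations carrying their own derivations, so the total ``derivation content'' attached to the state is naturally a finite multiset. The intended measure is the multiset extension of the direct subderivation relation on the multiset collecting the derivation currently attached to $c$ together with the derivations attached to every entry of $D$. The (PA) case is the one to verify most carefully: the single derivation associated with $c$ is effectively split into strictly smaller subderivations distributed across the new deferred entry $(a, c)$ and the continuations that will handle the opponent's possible counter-attacks, and only the multiset extension — rather than a simple size comparison — makes this decrease visible. Once this order is defined, each case of the analysis sketched above is routinely checked to produce a strictly smaller multiset, and well-founded induction closes the proof.
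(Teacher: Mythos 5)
Your overall strategy is the paper's: strengthen the statement to an invariant carrying LJD derivations for all pending obligations, find the proponent move by case analysis on the derivation attached to the current challenge, and justify the recursion by a well-founded order that is essentially the multiset extension of the sub-derivation/specialisation relation (the paper packages this as the transitive closure $\prec_R^+$ of a one-element-replacement relation on lists, needing transitivity because each round takes one proponent and one opponent step; your Dershowitz--Manna-style multiset extension absorbs both steps at once, which is a harmless variation).

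There is, however, a concrete gap in the invariant you state. You record a derivation for the current challenge $c$ and one for each deferred pair in $D$, but nothing for the proponent's standing admissions $A_p$. In an S-dialogue the opponent's (OA) move may attack \emph{any} formula in $A_p$, not only the one the proponent just admitted. Your sketch handles (OA) only via ``the third premise of (R)'' for the freshly admitted formula; if the opponent instead attacks an older admission $\varphi \in A_p$, you have no derivation left to continue with. The paper's invariant therefore carries a third component: for each $\varphi \in A_p$ a $\Gamma \subseteq A_o$ and a family $\forall\, \attack{\varphi}{\psi}{a}.\ \psi :: \Gamma \LJD \defenses{a}$, installed at the moment $\varphi$ is admitted (in the (R) case from the rule's attack premises, in the (PA) case for the admitted formula $\psi$ from the last premise of (L)) and consumed when the opponent attacks $\varphi$. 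These families must also be members of your measure multiset, since the (OA) step replaces one of them by its specialisation; otherwise the termination argument does not see a decrease in that case. A second, smaller imprecision: the datum attached to a deferred pair $(a,c')\in D$ must be a \emph{family} $\forall\, \theta \in \defenses{a}.\ \theta :: \Gamma \LJD \defenses{c'}$ indexed by the opponent's possible defenses, not a single derivation, since you cannot know in advance which $\theta$ the opponent will admit. With these two repairs your argument matches the paper's proof.
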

\begin{proofqed}
  For this, we prove a generalised claim, namely that for all $A_p, A_o, D, c$ if
  \begin{enumerate}[(1)]
  \item for each $\varphi \in A_p$ there is a $\Gamma \subseteq A_o$ and a family
    $\forall \attack{\varphi}{a}{\psi}.~\psi :: \Gamma \LJD \defenses{a}$
  \item for each $(a, c) \in D$ there is a $\Gamma \subseteq A_o$ and a family $\forall
    \theta \in \defenses{a}.~\theta :: \Gamma \LJD \defenses{c}$
  \item there is a $\Gamma \subseteq A_o$ and a derivation $\Gamma \LJD \defenses{c}$
  \end{enumerate}
  then we can derive $\sWin{(A_p, A_o, D) \,;\,c}$. We first show by case
  distinction on the derivation of $\Gamma \LJD \defenses{c}$ that there is a
  transition $\spmove{(A_p, A_o, D) \,;\,c}{(A'_p, A_o, D')}$ such that $A'_p$
  and $D'$ satisfy invariants (1) and (2).
  \begin{enumerate}
  \item[(R)] Then there is some justified $\varphi \in \defenses{c}$ such that
    $\forall \attack{\varphi}{\psi}{a}.~\psi :: \Gamma \LJD \defenses{a}$.
    Then the proponent will defend by admitting $\varphi$.
    Invariant (1) then extends to $A'_p := \varphi :: A_p$. 
  \item[(L)] Then there is some $\varphi \in \Gamma$ and a justified attack
    $\attack{\varphi}{\psi}{a}$ such that both $\forall \theta \in
    \defenses{a}.~\theta :: \Gamma \LJD \defenses{c}$ and $\forall
    \attack{\psi}{\theta}{a'}.~\theta :: \Gamma \LJD \defenses{a'}$.
    Then the proponent will attack with $a$.
    The invariants then extend to $\psi :: A_p$ and $(a, c) :: D$.
  \end{enumerate}
  Now we show that any opponent moves from $(A_p', A_o, D')$ lead to a winning
  position, again by case distinction.
  \begin{enumerate}
  \item[(OD)] Then $D' = (a, c) :: D''$ and the opponent just admitted some
    $\theta \in \defenses{a}$. By invariant (2), there is a $\Gamma \subseteq
    A_o$ with $\theta :: \Gamma \LJD \defenses{c'}$. Then we can obtain
    $\sWin{(A'_p, \theta :: A_o, D'')\,;\, c'}$ per inductive
    hypothesis as $\theta :: \Gamma \subseteq \theta :: A_o$ and
    invariant (3) thus holds for $c'$.
  \item[(OA)] Then $A'_p = A \app \psi :: A''$ and the opponent
    just attacked with $\attack{\psi}{\theta}{c'}$. By invariant (1), there is $\Gamma \subseteq A_o$ with $\theta :: \Gamma
    \LJD \defenses{c'}$. Then we can obtain $\sWin{(A' \app A'', \theta ::
      A_o, D')\,;\,c'}$ as $\theta :: \Gamma \subseteq \theta :: A_o$ and
    invariant (3) thus holds for $c'$.
  \end{enumerate}
  Note that when applying the inductive hypothesis we are implicitly using the fact that the invariants all
  extend to $\theta :: A_o$.
\end{proofqed}
\begin{corollary}
  Any formula $\varphi$ for which $[] \LJD \{\varphi\}$ can be derived is D-valid.
\end{corollary}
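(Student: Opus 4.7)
The plan is to derive this corollary by simply composing the two immediately preceding results. Concretely, \Cref{lem:ssound} (\coqlink[Dprv_svalid]{Dprv\_svalid}) establishes that a derivation of $[] \LJD \{\varphi\}$ yields S-validity of $\varphi$, and \Cref{coq:svalid_dvalid} shows that S-validity implies D-validity. Hence given the assumed derivation of $[] \LJD \{\varphi\}$, one first invokes \Cref{lem:ssound} to obtain S-validity, and then hands this off to \Cref{coq:svalid_dvalid} to conclude D-validity.

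I would spell the proof out in a single line: assume $[] \LJD \{\varphi\}$; by \Cref{lem:ssound} the formula $\varphi$ is S-valid; by \Cref{coq:svalid_dvalid} it is therefore D-valid. No additional unfolding of the definitions of validity is required because both lemmas are stated at exactly the level of abstraction needed for this composition.

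There is no real obstacle here: the conceptual work has already been done in \Cref{lem:ssound} (the nontrivial invariant-based induction through S-dialogues) and in \Cref{coq:svalid_dvalid} (the stack-to-queue translation of winning strategies). The corollary is merely the top-level statement obtained by chaining them, serving as the dialogue-game-semantic counterpart to the soundness direction of \Cref{thm:esoundness} for E-dialogues and thereby completing the triangle relating LJD derivations to E-, D-, and S-dialogues.
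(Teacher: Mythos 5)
Your proof is correct and matches the paper's intent exactly: the corollary is obtained by composing \Cref{lem:ssound} (LJD derivability implies S-validity) with \Cref{coq:svalid_dvalid} (S-validity implies D-validity), which is precisely why the paper presents it as an immediate corollary with no further argument.
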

\begin{corollary}
  A formula $\varphi$ is E-valid if and only if it is D-valid.
\end{corollary}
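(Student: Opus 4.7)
The plan is to chain the already-established equivalences through the sequent calculus LJD, which serves as the common pivot between all three dialogue variants. Since E-validity has been characterised exactly by derivability of $[] \LJD \{\varphi\}$ in \Cref{thm:esoundness}, and the preceding unnamed corollary already packages $[] \LJD \{\varphi\} \to \varphi$ D-valid (obtained by composing \Cref{lem:ssound} with \Cref{coq:svalid_dvalid}), the forward direction E-valid $\to$ D-valid is immediate: apply \Cref{thm:esoundness} to extract $[] \LJD \{\varphi\}$ from E-validity, then invoke the corollary.

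For the backward direction D-valid $\to$ E-valid, I would first apply \Cref{coq:dcompleteness} to obtain $[] \LJD \{\varphi\}$, and then use \Cref{thm:esoundness} in the opposite direction to transfer the derivation back to E-validity.

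Thus the whole argument is a two-line chaining of prior results; no new induction is required. The structural reason the proof is short here is that all the intricate work has been done upstream: the isomorphism between LJD derivations and E-strategies (\Cref{thm:esoundness}), the translation of D-strategies to LJD derivations (\Cref{coq:dcompleteness}), and the more delicate S-dialogue detour (\Cref{lem:ssound} together with \Cref{coq:svalid_dvalid}) that closes the gap from LJD back to D-validity.

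There is no real obstacle in the proof itself; the only thing to watch is that we should invoke the correct direction of \Cref{thm:esoundness} on each side and make sure that the intermediate statement $[] \LJD \{\varphi\}$ is phrased identically to how it appears in both \Cref{coq:dcompleteness} and the preceding unnamed corollary, so that the compositions type-check. Conceptually, the corollary simply records that the three notions of validity collapse, with LJD as the mediating syntactic notion.
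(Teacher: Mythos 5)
Your proposal is correct and matches the paper's intended argument exactly: the corollary is stated without proof precisely because it follows by chaining \Cref{thm:esoundness} with \Cref{coq:dcompleteness} in one direction and with the preceding corollary (i.e.\ \Cref{lem:ssound} composed with \Cref{coq:svalid_dvalid}) in the other, with $[] \LJD \{\varphi\}$ as the pivot. Nothing further is needed.
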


Now all that remains is finding the correct induction principle. Intuitively,
the proof above is well-founded as each ``inductive step'' replaces one of the
(families of) derivations from the invariants with its specialisation or
sub-derivations twice, once for the proponent move, once for the opponent move.
We can abstract this observation into the following relation.

\setCoqFilename{WFexp}
\begin{definition}[][tlexp]
  Let $R : X \to X \to \Prop$ be a relation on some type $X$. Then we define
  $\prec_R^+\, : \List(X) \to \List(X) \to \Prop$ to be the transitive closure of
  $\prec_R$, where $A \prec_R B$ holds iff there are $B', B'', C : \List(X)$ such
  that $B = B' \app x :: B''$, for all $c \in C$ we have $R\, c\, x$, and $A$ is a
  permutation of $B' \app C \app B''$.
\end{definition}

For the proof of \Cref{lem:ssound} we then choose $X$ to be the sum of the three
kinds of (families of) derivations used in the invariants and the relation $R$
being the specialisation and sub-derivation relation on them. The induction then
proceeds on $\prec_R^+$ on a list containing all the families asserted by the invariants.
Note that the list-splitting of $\prec_R$ exactly mirrors the act of replacing
one invariant with its specialisation or sub-derivations. The permutations are a
technical accommodation that make it easier to state the invariants formally by
allowing us to list the invariants in a fixed order as we can ``move the new
invariants into place'' after each step.
Finally, the transitivity is needed as each inductive step of \Cref{lem:ssound}
takes two such list-splitting steps, one for the proponent and one for the
opponent.

\begin{lemma}[][well_founded_tlexp]
  If $R : X \to X \to \Prop$ is well-founded then so is $\prec_R^+$.
\end{lemma}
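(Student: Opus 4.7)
The plan is to prove well-foundedness of $\prec_R^+$ in two stages: first establish the well-foundedness of the base relation $\prec_R$, and then invoke the standard result that the transitive closure of a well-founded relation is well-founded.

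For the first stage, observe that $\prec_R$ is essentially the Dershowitz--Manna multiset ordering extended over $R$: a step $A \prec_R B$ removes one element $x$ from $B$ and replaces it by finitely many $R$-predecessors of $x$ (with permutation absorbing list ordering). The standard proof of well-foundedness of the multiset ordering proceeds by establishing accessibility lemmas. Concretely, I would prove by induction on $R$-accessibility of $x$ that whenever $A$ is $\prec_R$-accessible and every element of some list $C$ satisfies $R\, c\, x$, then the list $B' \app x :: B''$ is $\prec_R$-accessible whenever $B' \app C \app B''$-style extensions are — more precisely, I would show the lemma
\[ \mathrm{Acc}_R\,x \to \forall A.~\mathrm{Acc}_{\prec_R}\,A \to \mathrm{Acc}_{\prec_R}\,(x :: A) \]
by outer induction on $\mathrm{Acc}_R\,x$ and inner induction on $\mathrm{Acc}_{\prec_R}\,A$, dispatching each $\prec_R$-predecessor of $x :: A$ by examining whether the distinguished position falls inside $A$ or is $x$ itself (in which case the replacement list $C$ consists of $R$-predecessors of $x$, letting us apply the outer induction hypothesis repeatedly).

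From this insertion lemma, a straightforward induction on list length yields that every list of $R$-accessible elements is $\prec_R$-accessible, and hence from well-foundedness of $R$ we conclude well-foundedness of $\prec_R$. The second stage is then the standard fact that the transitive closure $(\_)^+$ preserves well-foundedness: given $\mathrm{Acc}_{\prec_R}\,A$, one shows $\mathrm{Acc}_{\prec_R^+}\,A$ by induction on the accessibility proof, noting that any $\prec_R^+$-predecessor either is a $\prec_R$-predecessor (direct inductive call) or further decomposes through the transitivity chain.

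The main obstacle is the nested induction underlying the insertion lemma, which is the technical heart of the Dershowitz--Manna argument: one has to be careful with the permutation clause in the definition of $\prec_R$, because when the proponent's distinguished list position $x :: \_$ is attacked, the replacement elements may be interleaved arbitrarily into the remainder. In the Coq mechanisation this forces some bookkeeping about permutations, but once the insertion lemma is phrased in a permutation-closed fashion the induction closes cleanly. All other steps reduce to standard manipulations with accessibility predicates.
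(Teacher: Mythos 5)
Your proposal is correct, but it organises the argument differently from the paper. The paper first strips out the permutation clause, working with the auxiliary relation $A \prec_R' B$ that requires $A$ to be \emph{literally} $B' \app C \app B''$; it proves well-foundedness of $\prec_R'$ via a binary concatenation lemma (well-foundedness on $A$ and on $B$ gives well-foundedness on $A \app B$, by a nested well-founded induction along $\prec_R'$) combined with a singleton lemma proved by $R$-induction, and only afterwards transfers well-foundedness across permutations in a separate step before closing with transitive closure. You instead work directly with the permutation-closed relation and prove the standard Dershowitz--Manna insertion lemma $\mathrm{Acc}_R\,x \to \forall A.\,\mathrm{Acc}_{\prec_R}A \to \mathrm{Acc}_{\prec_R}(x :: A)$ by a nested induction on $\mathrm{Acc}_R\,x$ and $\mathrm{Acc}_{\prec_R}A$, then conclude by induction on list length. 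Both routes are sound and of comparable difficulty: the paper's version confines all permutation bookkeeping to one transfer lemma at the end, at the cost of introducing the auxiliary relation and the somewhat delicate double well-founded induction for concatenation; your version avoids the auxiliary relation but must thread permutation-invariance of accessibility through the nested induction (which, as you note, is unproblematic because the definition of $\prec_R$ already permutes on the left, so $\prec_R$-down-sets are invariant under permuting the right-hand side). Your identification of $\prec_R$ as a restricted multiset extension, with the replacement list drawn from $R$-predecessors of the removed element, is exactly the right conceptual frame; the one point to keep explicit in a mechanisation is that in the case where the removed element is $x$ itself, the iterated application of the outer induction hypothesis over the elements of $C$ is itself a small list induction, and that it needs the inner hypothesis $\mathrm{Acc}_{\prec_R}A$ as its base case.
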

\begin{proofqed}
  We first show that $A \prec_R' B := \exists B', B'', C.~B = B' \app x :: B''
  \wedge A = B' \app C \app B'' \wedge \forall c \in C.\,R\,c\,x$ is well-founded.
  To this end, we first prove that $\prec_R'$ being well-founded on $A$ and $B$ means it is
  well-founded on $A \app B$ by well-founded induction on $A$ and $B$ along
  $\prec_R'$. Then we can show that $\prec_R'$
  is well-founded on singletons $[x]$ by $R$-induction on $x$ as for any $Rcx$
  we know that $\prec_R'$ is well-founded on $[c]$ per inductive hypothesis
  which we can extend to arbitrary lists $C$ of such $Rcx$ using the previous
  fact. The fact can then be applied again to obtain well-foundedness of
  $\prec_R'$ on arbitrary lists. 

  Now we show that if $A \prec_R' B$ and if $B$ is a permutation of $B'$ then there is
  a permutation $A'$ of $A$ with $A' \prec_R' B'$ per induction on $B$. With this,
  we can show that $\prec_R'$ being well-founded on $A$ entails that $\prec_R$ is
  well-founded on any permutation of $A$, again by induction on $A$. From this,
  well-foundedness of $\prec_R$ follows from the well-foundedness of $\prec_R'$.

  Lastly, we use the fact that transitive closure maintains
  well-foundedness.
\end{proofqed}

\section{Discussion}
\label{sec:discussion}

We have analysed the completeness of common deduction systems for first-order logic with regards to various explanations of logical validity.
Model-theoretic semantics are the most direct implementation of the idea that terms represent objects of a domain of discourse.
Particularly in a formal meta-theory such as constructive type theory, model-theoretic completeness justifies the common practice to verify consequences of a first-order axiomatisation by studying models satisfying corresponding meta-level axioms.
However, model-theoretic semantics typically do not admit constructive completeness and, if not generalised to exploding models, require Markov's Principle as soon as falsity is involved.
Contrarily, evidence for the validity of a first-order formula in algebraic semantics and game semantics can be algorithmically transformed into syntactic derivations.

The analysis of the completeness theorem for classical first-order logic benefited from the use of constructive type theory with an impredicative universe of propositions as underlying system.
Constructive type theory has fewer built-in assumptions than the systems usually used in both classical and constructive reverse mathematics, allowing for sharper equivalence results.
In classical reverse mathematics~\cite{SimpsonRM}, one uses classical logic freely (i.e.\ $\EM$ is provable), but does not assume strong function or set existence principles, nor choice axioms, and thus $\WKL$ is not provable in the weakest considered base system $\mathsf{RCA}_0$.
In contrast, constructive reverse mathematics~\cite{ishihara_reverse_2006} is based on Bishop's constructive mathematics $\mathsf{BISH}$~\cite{BishopAnalysis} as for instance formalised by predicative type theories~\cite{martin1975intuitionistic}.
$\mathsf{BISH}$ is based on intuitionistic logic only (i.e.\ neither $\EM$ nor $\MP$ are provable), but countable and dependent choice axioms are provable\rlap,\footnote{\scriptsize The universal assumption of countable choice for constructive mathematics is criticised e.g.\ by Richman~\cite{richman2000fundamental,richmanConstructiveMathematicsChoice2001}.} turning $\WKL$ into a purely logical axiom, equivalent to the lesser limited principle of omniscience $\mathsf{LLPO}$~\cite{ishihara1990omniscience} and in particular into a consequence of $\EM$.
Thus, $\mathsf{BISH}$ and predicative type theories are insensitive to the role of $\WKL$ w.r.t.\ the completeness theorem as formulated in \Cref{tarski_theorem}.
In type theory with a universe of propositions however, $\EM$ likely does not imply $\WKL$ and thus $\EM \land \WKL$ becomes a sensible proposition with interesting equivalences.

Of course, there are more semantic accounts of first-order logic than the selection studied in this paper.
For instance, there are hybrid variants such as interpreting both terms in a model and logical operations in an algebra, or dialogues with atomic formulas represented as underlying games.
More generally, there are entirely different approaches based on realisability, the Brouwer-Heyting-Kolmogorov interpretation, or proof-theoretic semantics, all coming with interesting completeness problems worth analysing in constructive type theory.
More ideas for future work are outlined after a brief summary of related work.

\subsection{Related Work}
\label{sec:related}

Our analysis of completeness in constructive type theory was motivated by previous work~\cite{ForsterCPP}, carried out in Wehr's bachelor's thesis~\cite{Wehr2019}, and is directly influenced by multiple prior works.
In their analysis of Henkin's proof, Herbelin and Ilik~\cite{HerbelinHenkin} give a constructive
model existence proof and the constructivisation of completeness via exploding
models. Herbelin and Lee~\cite{HerbelinCut} demonstrate the constructive Kripke
completeness proof for minimal models and mention how to extend the approach to
standard and exploding models.
Scott~\cite{scott_algebraic_2008} establishes completeness of free logic interpreted in a hybrid semantics comprising model-theoretic and algebraic components.
Urzyczyn and S\o rensen~\cite{SorensenDialogues}
give a proof of dialogue completeness via generalised dialogues for
classical propositional logic.

The first proof that the completeness of intuitionistic first-order logic
entails Markov's Principle was given by Kreisel~\cite{KreiselMP}, although he
attributes the proof idea to Gödel. The proof has since inspired a range of
works deriving related non-constructivity results for different kinds of
completeness~\cite{BeradiClassical,KreiselRE,LeivantRE,McCartyNonarithmetic,McCartyIntuitionistic,McCartyMetamathematics}.
Krivtsov has analysed the necessity of $\WKL$ for completeness proofs for both classical and intuitionistic first-order logic w.r.t.\ decidable models~\cite{KrivtsovClassical,KrivtsovFan}.

The completeness of first-order logic has been mechanised in many interactive
theorem provers such as
Isabelle/HOL~\cite{BlanchetteComp,RidgeTP,SchlichtkrullResolution},
NuPRL~\cite{ConstableIFol,UnderwoodComp}, Mizar~\cite{BraselmannComp},
Lean~\cite{LeanContinuum}, and Coq~\cite{HerbelinCut,DankoThesis,gilbert:hal-01204599}. Among them,
\cite{ConstableIFol} and \cite{DankoThesis} share our focus on the
constructivity of completeness. Constable and Bickford~\cite{ConstableIFol} give
a constructive proof of completeness for the BHK-realisers of full
intuitionistic first-order logic in NuPRL. Their proof is fully constructive
when realisers are restricted to be normal terms, requiring Brouwer's fan
theorem when lifting that restriction. In his PhD thesis~\cite{DankoThesis},
Ilik mechanises multiple constructive proofs of first-order completeness in Coq.
Especially noteworthy are the highly non-standard, constructivised Kripke models for
full classical and intuitionistic first-order logic he presents in Chapters 2 and
3.
Gilbert and Hermant~\cite{gilbert:hal-01204599} describe a normalisation-by-evaluation completeness proof using Heyting algebras and implement it for propositional logic in Coq.

\subsection{Future Work}

We plan to further extend our constructive analysis and Coq library of completeness theorems to all logical connectives and to uncountable signatures, both relying on additional logical assumptions.

Concerning model-theoretic semantics, our analysis left open at least three interesting questions:
First, we have shown that completeness for the classical $\forall,\to,\bot$-fragment w.r.t.\ omniscient models and arbitrary theories is equivalent to both $\EM$ and $\WKL$.
Completeness for the $\forall,\to,\bot$-fragment w.r.t.\ omniscient models and \emph{enumerable} theories certainly implies $\MP$ and $\WKL_{\mathcal{D}}$, but it is unclear how to obtain an equivalence.
Restricting completeness to contexts implies $\MPL$, but a formulation of $\WKL_{\mathcal{D}}$ for $\L$-computable functions is equivalent to falsity, due to Kleene's tree~\cite{kleene1953recursive}.
Secondly, we only prove that completeness for classical first-order logic with all connectives is equivalent to $\EM$, but leave open what the necessary and sufficient principles are to obtain completeness for classical first-order logic with all connectives w.r.t.\ contexts or enumerable theories.
Thirdly, we have not considered decidable Kripke models nor intuitionistic completeness for full first-order logic.
Veldman's constructivisation of completeness for intuitionistic first-order logic relies on decidable models and the fan theorem (which is a consequence of $\WKL$~\cite{ishihara2006weak}) to treat disjunction~\cite{VeldmanExplosion}.
It is an interesting direction for future research whether the fan theorem can be avoided in the presence of disjunction when using propositional models as we do in this paper.

Subsequently, it would be interesting to study other aspects of model theory in the setting of constructive type theory, for instance the Löwenheim-Skolem theorems or first-order axiomatisations of arithmetic and set theory.
Another idea is to analyse the completeness of second-order logic interpreted in Henkin semantics, as this formalism suffices to express the higher-order axiomatisation of set theory studied in~\cite{Kirst2018}.
Furthermore, the contemporary syntactic presentation of dialogues we studied
differs from that first put forward by
Lorenzen~\cite{LorenzenDialogues,LorenzenDialogues2} which was distinctly more
model-theoretic, raising the question whether the constructivity of its
completeness results mirrors those for other model theoretic semantics within this work.
Lastly, we conjecture that $\MPL$ is strictly weaker than $\MP$, but are not aware of a proof.

\subsubsection*{Acknowledgments}
We thank Kathrin Stark for adapting Autosubst according to our needs,
Fabian Kunze for helping with technicalities during the mechanisation of \Cref{coq:iprv_halt}, and Hugo Herbelin for fruitful discussion and pointers to relevant work.
We also thank the anonymous reviewers whose comments helped improving the final version of this paper.

\appendix

\section{Notes on the Coq Mechanisation}
\label{sec:Coq}

Our mechanisation consists of about 9k lines of code, with an even split between specification and proofs.
The code is structured as follows.

\begin{center}
	\begin{tabular}{|l|r|r|}
		\hline
		\textbf{Section} & \textbf{Specification} & \textbf{Proofs} \\
		\hline
		Preliminaries Autosubst & 169 & 53 \\
		\hline
		Preliminaries for $\Form^*$ & 680 & 599 \\
		\hline
		Tarski Semantics & 655 & 682 \\
    \hline
    Extended Tarski Semantics & 130 & 203 \\
    \hline
    Compactness and $\WKL$ & 266 & 588 \\
		\hline
		Kripke Semantics & 342 & 255 \\
		\hline
		On Markov's Principle & 593 & 978 \\
		\hline
		Preliminaries for $\Form$ & 523 & 430 \\
		\hline
		Algebraic Semantics & 349 & 570 \\
		\hline
		Dialogue Semantics & 563 & 539 \\
		\hline
		\textbf{Total} & 4270 & 4897 \\
		\hline
	\end{tabular}
\end{center}

In general, we find that Coq provides the ideal grounds for mechanising projects like ours.
It has external libraries supporting the mechanisation of syntax, enough automation to support the limited amounts we need and allows constructive reverse mathematics due to its axiomatic minimality.

In the remainder of the section, we elaborate on noteworthy design choices of the mechanisation.

\paragraph{Formalisation of binders}

There are various competing techniques to mechanise binders in proof assistants.
In first-order logic, binders occur in quantification.
The chosen technique especially affects the definition of deduction systems and can considerably ease or impede proofs of standard properties like weakening.

We opted for a de Bruijn representation of variables and binders with parallel substitutions.
The Autosubst 2 tool~\cite{AutoSubst2} provides convenient automation for the definition of and proofs about this representation of syntax.

Notably, our representation then results in very straightforward proofs for weakening with only 5 lines.
In contrast, using other representations for binders results in considerably more complicated weakening proofs, e.g. 150 lines in an approach using names~\cite{ForsterCPP} and 95 lines in an approach using traced syntax~\cite{HerbelinCut}.

Also note that first-order logic has the simplest structure of binders possible:
Since quantifiers range over terms, but terms do not contain binders, we do not need a prior notion of renaming, as usually standard in de Bruijn presentations of syntax.
This observation results in more compact code (because usually, every statement on substitutions has to be proved for renamings first, with oftentimes the same proof) and was incorporated into Autosubst 2, which now does not generate renamings if they are not needed.
Furthermore, we remark that the HOAS encoding of such simple binding structures results in a strictly positive inductive type and would thus be in principle definable in Coq.

\paragraph{Formalisation of signatures}

Our whole development is parametrised against a signature, defined as a typeclass in Coq:%
\begin{verbatim}
Class Signature := B_S { Funcs : Type; fun_ar  : Funcs -> nat ;
Preds : Type; pred_ar : Preds -> nat }.
\end{verbatim}
We implement term and predicate application using the dependent vector type.
While the vector type is known to cause issues in dependent programming, in this instance it was the best choice.
Recursion on terms is accepted by Coq's guardness checker, and while the generated induction principle (as is always the case for nested inductives) is too weak, a sufficient version can easily be implemented by hand:

\begin{verbatim}
Inductive vec_in (A : Type) (a : A) : forall n, vector A n -> Type :=
| vec_inB n (v : vector A n) : vec_in a (cons a v)
| vec_inS a' n (v :vector A n) : vec_in a v -> vec_in a (cons a' v).

Lemma strong_term_ind (p : term -> Type) :
(forall x, p (var_term x)) -> 
(forall F v, (forall t, vec_in t v -> p t) -> p (Func F v)) -> 
forall (t : term), p t.
\end{verbatim}

\paragraph{Syntactic fragments}

There are essentially four ways to mechanise the syntactic fragment $\Form^*$.
First, we could parametrise the type of formulas with tags, as done in~\cite{ForsterCPP}, or abstract types of connectives, as done in~\cite{KirstLarchey-Wendling:2020:Trakhtenbrot}, and second, we could use well-explored techniques for modular syntax in Coq~\cite{keuchel2013generic,delaware2013meta,forster2020coqala}.
However, both of these approaches would not be compatible with the Autosubst tool.
Additionally, modular syntax would force users of our developed library for first-order logic to work on the peculiar representation of syntax using containers or functors instead of regular inductive types.

The third option is to only define the type $\Form$, and then define a predicate on this formulas characterising the fragment $\Form^*$.
This approach introduces many additional assumptions in almost all statements, decreasing their readability and yielding many simple but repetitive proof obligations.
Furthermore, we would have to parameterise natural deduction over predicates as well, in order for the (IE) rule to not introduce terms e.g.~containing $\dot\exists$ when only deductions over $\Form^*$ should be considered.

To make the mechanisation as clear and reusable as possible, we chose the fourth and most simple possible approach:
We essentially duplicate the contents of \Cref{sec:prelims} for both $\Form^*$ and $\Form$, resulting in two independent developments on top of the two preliminary parts.

\paragraph{Parametrised deduction systems}

When defining the minimal, intuitionistic, and classical versions of natural deduction, a similar issue arises. %
Here, we chose to use one single predicate definition, where the rules for explosion and Peirce can be enabled or disabled using tags, which are parameters of the predicate.
\begin{verbatim}
Inductive peirce := class | intu. 
Inductive bottom := expl  | lconst.
Inductive prv : forall (p : peirce) (b : bottom), 
list (form) -> form -> Prop := (* ... *).
\end{verbatim}
We can then define all considered variants of ND by fixing those parameters:
\begin{verbatim}
Notation "A ⊢CE phi" := (@prv class expl A phi) (at level 30).
Notation "A ⊢CL phi" := (@prv class lconst A phi) (at level 30).
Notation "A ⊢IE phi" := (@prv intu expl A phi) (at level 30).
\end{verbatim}

This definition allows us to give for instance a general weakening proof, which can then be instantiated to the different versions.
Similarly, we can give a parametrised soundness proof, and depending on the parameters fix required properties on the models used in the definition of validity.

\paragraph{Object tactics}

At several parts of our developments we have to build concrete ND derivations.
This can always be done by explicitly applying the constructors of the ND predicate, which however becomes tedious quickly.
We thus developed object tactics reminiscent of the tactics available in Coq.
The tactic \texttt{ointros} for instance applies the (II) rule, whereas the tactic \texttt{oapply} can apply hypotheses, i.e. combine the rules (IE) and (C).
All object tactics are in the file \texttt{FullND.v}.

\paragraph{Extraction to $\lambda$-calculus}

The proof that completeness of provability w.r.t.~standard Tarski and Kripke semantics is equivalent to $\MPL$ crucially relies on an $\L$-enumeration of provable formulas.
While giving a Coq enumeration is easy using techniques described in~\cite{ForsterCPP}, the translation of any function to a model of computation is considered notoriously hard.
We use the framework by Forster and Kunze~\cite{forster_et_al:LIPIcs:2019:11072} which allows the automated translation of Coq functions to $\L$.

Using the framework was mostly easy and spared us considerable mechanisation effort.
However, the framework covers only simple types, whereas our representation of both terms and formulas contains the dependent vector type.
We circumvent this problem by defining a non-dependent term type \texttt{term'} and a predicate \texttt{wf} characterising exactly the terms in correspondence with our original type of terms.
\begin{verbatim}
Inductive term' := var_term' : nat -> term' | Func' (name : nat) 
| App' : term' -> term' -> term'.

Inductive varornot := isvar | novar.
Inductive wf : varornot -> term' -> Prop :=
| wf_var n : wf isvar (var_term' n) 
| wf_fun f : wf novar (Func' f) 
| wf_app v s t : wf v s -> wf novar t -> wf novar (App' s t).
\end{verbatim}

We then define a formula type \texttt{form'} based on \texttt{term'} and a suitable deduction system.
One can give a bijection between well-formed non-dependent terms \texttt{term'} and dependent terms \texttt{term} and prove the equivalence of the corresponding deduction systems under this bijection.

Functions working on \texttt{term'} and \texttt{form'} were easily extracted to $\L$ using the framework, yielding an $\L$-enumerability proof for ND essentially with no manual mechanisation effort.

\paragraph{Usage of Axioms}

As the aim of this project is to analyse the minimal assumptions underlying completeness theorems, our mechanisation is in principle set up such to not introduce additional axioms.
Sole exception is the axiom of functional extensionality, which is currently required by the Autosubst tool to keep the proof terms small when rewriting with point-wise equal substitutions.
Autosubst could of course be extended with a mode using setoid rewriting instead of appealing to functional extensionality, and if willing to waive the tool support, one can manually mechanise first-order logic axiom-free as done in~\cite{KirstLarchey-Wendling:2020:Trakhtenbrot}.

\paragraph{Library of mechanised undecidable problems in Coq}

We take the mechanisation of synthetic undecidability from~\cite{ForsterCPP}, which is part of the Coq library of mechanised undecidable problems~\cite{forster2020coq}.
The reduction from $\L$-halting to provability is factored via Turing machines, Minsky machines, binary stack machines and the Post correspondence problem (PCP), all part of the library as well.

\paragraph{Equations package}

Defining non-structurally recursive functions is sometimes considered hard in Coq and other proof assistants based on dependent type theory.
One such example is the function $\heval{\_}$ used to embed formulas into Heyting algebras (\Cref{def:heval}).
We use the Equations package~\cite{sozeau2019equations} to define this function by recursion on the size of the formula, ignoring terms.
The definition then becomes entirely straightforward and the provided \texttt{simp} tactic, while sometimes a bit premature, enables compact proofs. 

\section{Overview of Deduction Systems}
\label{sec:systems}

\setCoqFilename{FullND}
\begin{definition}[][prv]
	\label{def:ND}
	Intuitionistic natural deduction is defined by the following rules:
	\begin{gather*}
	\infer[\small\textnormal{C}]{\Gamma\vdash \phi}{\phi\in \Gamma}\hspace{3em}
	\infer[\small\textnormal{E}]{\Gamma\vdash \phi}{\Gamma\vdash \dot{\bot}}\hspace{3em}
	\infer[\small\textnormal{II}]{\Gamma\vdash \phi\dot{\to}\psi}{\Gamma,\phi\vdash \psi}\hspace{3em}
	\infer[\small\textnormal{IE}]{\Gamma\vdash \phi}{\Gamma\vdash \phi\dot{\to}\psi&\Gamma\vdash\phi}
	\\[0.3cm]
	\infer[\small\textnormal{CI}]{\Gamma\vdash \phi\dot{\land}\psi}{\Gamma\vdash \phi& \Gamma\vdash \psi}\hspace{3em}
	\infer[{\small\textnormal{CE}_1}]{\Gamma\vdash \phi}{\Gamma\vdash \phi\dot{\land}\psi}\hspace{3em}
	\infer[{\small\textnormal{CE}_2}]{\Gamma\vdash \psi}{\Gamma\vdash \phi\dot{\land}\psi}
	\\[0.3cm]
	\infer[{\small\textnormal{DI}_1}]{\Gamma\vdash \phi\dot{\lor}\psi}{\Gamma\vdash \phi}\hspace{3em}
	\infer[{\small\textnormal{DI}_2}]{\Gamma\vdash \phi\dot{\lor}\psi}{\Gamma\vdash \psi}\hspace{3em}
	\infer[\small\textnormal{DE}]{\Gamma\vdash\theta}{\Gamma\vdash\phi\dot{\lor}\psi&\Gamma,\phi\vdash \theta&\Gamma,\psi\vdash \theta}
	\\[0.3cm]
	\infer[\small\textnormal{AI}]{\Gamma\vdash\dot{\forall} \phi}{\up \Gamma\vdash \phi}\hspace{3em}
	\infer[\small\textnormal{AE}]{\vphantom{\dot{\forall}}\Gamma\vdash \phi[t]}{\Gamma\vdash \dot \forall \phi}\hspace{3em}
	\infer[\small\textnormal{EI}]{\Gamma\vdash \dot{\exists}\phi}{\Gamma\vdash \phi[t]}\hspace{3em}
	\infer[\small\textnormal{EE}]{\vphantom{\dot{\forall}}\Gamma\vdash \psi}{\Gamma\vdash\dot{\exists}\phi&\up \Gamma,\phi\vdash \up \psi}
	\end{gather*}
	We write $\vdash\phi$ whenever $\phi$ is intuitionistically provable from the empty context.
\end{definition}

\begin{definition}[][prv]
	\label{def:CND}
	Classical natural deduction is defined by the following rules:
	\begin{gather*}
	\infer[\small\textnormal{C}]{\Gamma\vdash_c \phi}{\phi\in \Gamma}\hspace{3em}
	\infer[\small\textnormal{E}]{\Gamma\vdash_c \phi}{\Gamma\vdash_c \dot{\bot}}\hspace{3em}
	\infer[\small\textnormal{II}]{\Gamma\vdash_c \phi\dot{\to}\psi}{\Gamma,\phi\vdash_c \psi}\hspace{3em}
	\infer[\small\textnormal{IE}]{\Gamma\vdash_c \phi}{\Gamma\vdash_c \phi\dot{\to}\psi&\Gamma\vdash_c\phi}
	\\[0.3cm]
	\infer[\small\textnormal{CI}]{\Gamma\vdash_c \phi\dot{\land}\psi}{\Gamma\vdash_c \phi& \Gamma\vdash_c \psi}\hspace{3em}
	\infer[{\small\textnormal{CE}_1}]{\Gamma\vdash_c \phi}{\Gamma\vdash_c \phi\dot{\land}\psi}\hspace{3em}
	\infer[{\small\textnormal{CE}_2}]{\Gamma\vdash_c \psi}{\Gamma\vdash_c \phi\dot{\land}\psi}
	\\[0.3cm]
	\infer[{\small\textnormal{DI}_1}]{\Gamma\vdash_c \phi\dot{\lor}\psi}{\Gamma\vdash_c \phi}\hspace{3em}
	\infer[{\small\textnormal{DI}_2}]{\Gamma\vdash_c \phi\dot{\lor}\psi}{\Gamma\vdash_c \psi}\hspace{3em}
	\infer[\small\textnormal{DE}]{\Gamma\vdash_c\theta}{\Gamma\vdash_c\phi\dot{\lor}\psi&\Gamma,\phi\vdash_c \theta&\Gamma,\psi\vdash_c \theta}
	\\[0.3cm]
	\infer[\small\textnormal{AI}]{\Gamma\vdash_c\dot{\forall} \phi}{\up \Gamma\vdash_c \phi}\hspace{3em}
	\infer[\small\textnormal{AE}]{\vphantom{\dot{\forall}}\Gamma\vdash_c \phi[t]}{\Gamma\vdash_c \dot \forall \phi}\hspace{3em}
	\infer[\small\textnormal{EI}]{\Gamma\vdash_c \dot{\exists}\phi}{\Gamma\vdash_c \phi[t]}\hspace{3em}
	\infer[\small\textnormal{EE}]{\vphantom{\dot{\forall}}\Gamma\vdash_c \psi}{\Gamma\vdash_c\dot{\exists}\phi&\up \Gamma,\phi\vdash_c \up \psi}
	\\[0.3cm]
	\hspace{1cm}\infer[\small\textnormal{P}]{\Gamma\vdash_c ((\phi\dot\to\psi)\dot\to\phi)\dot\to\phi}{}
	\end{gather*}
	We write $\vdash_c\phi$ whenever $\phi$ is classically provable from the empty context.
\end{definition}

\setCoqFilename{Gentzen}
\begin{definition}[][sprv]
	The intuitionistic sequent calculus LJT is defined as follows:
	\begin{gather*}
	\hspace{1em}
	\infer[\small\textnormal{A}]{\Gamma \,; \varphi \sq \varphi}{}\hspace{3em}
	\infer[\small\textnormal C]{\Gamma \sq \psi}{\Gamma \,; \varphi \sq \psi \quad \varphi \in \Gamma}\hspace{3em}
	\infer[\small\textnormal{IL}]{\Gamma \,; \varphi \dot\to \psi\sq \theta}{\Gamma \sq \varphi \quad \Gamma \,; \psi \sq \theta}\\[0.3cm]
	\infer[\small\textnormal{IR}]{\Gamma \sq \varphi \dot\to \psi}{\Gamma,\varphi \sq \psi}\hspace{3em}
	\infer[\small\textnormal{AL}]{\Gamma \,; \dot\forall \varphi \sq \psi}{\Gamma \,; \varphi[t]\sq \psi}\hspace{3em}
	\infer[\small\textnormal{AR}]{\Gamma \sq \dot\forall \varphi}{\up{\Gamma} \sq \varphi}\hspace{3em}
	\infer[\small\textnormal E]{\Gamma \sq \varphi}{\Gamma \sq \dot\bot}
	\end{gather*}
\end{definition}

\setCoqFilename{FullSequent}
\begin{definition}[][fprv]
	\label{def:lj}
	The intuitionistic sequent calculus LJ is defined as follows:
	\begin{gather*}
	\infer[\small\textnormal{A}]{\cextend{\Gamma}{\varphi} \fsq \varphi}{}\hspace{3em}
	\infer[\small\textnormal{C}]{\cextend{\Gamma}{\varphi} \fsq \psi}{\cextend{\cextend{\Gamma}{\varphi}}{\varphi} \fsq \psi}\hspace{3em}
	\infer[\small\textnormal{W}]{\cextend{\Gamma}{\varphi} \fsq \psi}{\Gamma \fsq \psi}\hspace{3em}
	\\[0.3cm]
	\infer[\small\textnormal{P}]{\Gamma, \varphi, \psi, \Gamma' \fsq \theta}{\Gamma, \psi, \varphi, \Gamma' \fsq \theta}\hspace{3em}
	\infer[\small\textnormal{E}]{\Gamma \fsq \varphi}{\Gamma \fsq \dot\bot}\hspace{3em}
	\infer[\small\textnormal{IL}]{\cextend{\Gamma}{\varphi \dot\to \psi} \fsq \theta}{\Gamma \fsq \varphi \quad \cextend{\Gamma}{\psi} \fsq \theta}\hspace{3em}
	\\[0.3cm]
	\infer[\small\textnormal{IR}]{\Gamma \fsq \varphi \dot\to \psi}{\cextend{\Gamma}{\varphi} \fsq \psi}\hspace{3em}
	\infer[\small\textnormal{CL}]{\cextend{\Gamma}{\varphi \dot\wedge \psi} \fsq \theta}{\cextend{\cextend{\Gamma}{\varphi}}{\psi} \fsq \theta}\hspace{3em}
	\infer[\small\textnormal{CR}]{\Gamma \fsq \varphi \dot\wedge \psi}{\Gamma \fsq \varphi \quad \Gamma \fsq \psi}\hspace{3em}
	\\[0.3cm]
	\infer[\small\textnormal{DL}]{\cextend{\Gamma}{\varphi \dot\vee \psi} \fsq \theta}{\cextend{\Gamma}{\varphi} \fsq \theta \quad \cextend{\Gamma}{\psi} \fsq \theta}\hspace{3em}
	\infer[{\small\textnormal{DR}_1}]{\Gamma \fsq \varphi \dot\vee \psi}{\Gamma \fsq \varphi}\hspace{3em}
	\infer[{\small\textnormal{DR}_2}]{\Gamma \fsq \varphi \dot\vee \psi}{\Gamma \fsq \psi}\hspace{3em}
	\\[0.3cm]
	\infer[\small\textnormal{AL}]{\cextend{\Gamma}{\dot\forall \varphi} \fsq \psi}{\cextend{\Gamma}{\subst{\varphi}{t}} \fsq \psi}\hspace{2em}
	\infer[\small\textnormal{AR}]{\Gamma \fsq \dot\forall \varphi}{\cshift{\Gamma} \fsq \varphi}\hspace{2em}
	\infer[\small\textnormal{EL}]{\cextend{\Gamma}{\dot\exists \varphi} \fsq \psi}{\cextend{\cshift{\Gamma}}{\varphi} \fsq \shift{\psi}}\hspace{2em}
	\infer[\small\textnormal{ER}]{\Gamma \fsq \dot\exists \varphi}{\Gamma \fsq \subst{\varphi}{t}}\hspace{3em}
	\end{gather*}
\end{definition}

\bibliographystyle{abbrv}
\bibliography{paper}

\end{document}